\newtheorem{theorem}{Theorem}
\newtheorem{assumption}{Assumption}
\newtheorem{proposition}{Proposition}
\newtheorem{corollary}{Corollary}
\newtheorem{remark}{Remark}
\newtheorem{lemma}{Lemma}
\definecolor{matlabblue}{HTML}{0072BD}
\definecolor{matlabred}{HTML}{A2142F}
\definecolor{matlabyellow}{HTML}{EDB120}
\definecolor{matlabpurple}{HTML}{7E2F8E}
\definecolor{matlabgreen}{HTML}{77AC30}
\definecolor{matlaborange}{HTML}{D95319}
\definecolor{mycolor}{rgb}{1,0,0}
\newcommand{\follower}{\textrm{f}}
\newcommand{\leader}{\textrm{l}}
\newcommand{\leadergenericaction}{i}
\newcommand{\followergenericaction}{j}
\newcommand{\leaderutilitymatrix}{A}
\newcommand{\followerutilitymatrix}{B}
\newcommand{\leadernumofactions}{m}
\newcommand{\followernumofactions}{n}
\newcommand{\expectation}[2]{\mathbb{E}_{#2}\left[#1\right]}
\newcommand{\numofinteractions}{K}
\newcommand{\genericinteraction}{k}
\newcommand{\leaderdistribution}{x}
\newcommand{\followerdistribution}{y}
\newcommand{\emprical}[1]{\hat{#1}}
\newcommand{\genericprobability}{p}
\newcommand{\simplex}{\Delta}
\newcommand{\rationalityconstant}{\lambda}
\newcommand{\entropy}{H}
\newcommand{\randomversion}[1]{\bm{#1}}
\newcommand{\empricalreturn}{IR}
\newcommand{\stackelbergreturn}{SR}
\newcommand{\onesmatrix}{J}
\newcommand{\softmax}{\sigma}
\newcommand{\function}{f}
\newcommand{\leaderfunction}{\function^{\leader}}
\newcommand{\followerfunction}{\function^{\follower}}
\newcommand{\leaderactionspace}{\mathcal{A}}
\newcommand{\followeractionspace}{\mathcal{B}}
\newcommand{\gamestatespace}{\mathcal{S}}
\newcommand{\transitionfunc}{\mathcal{P}}
\newcommand{\genericstate}{s}
\newcommand{\leadergenericcontaction}{a}
\newcommand{\followergenericcontaction}{b}
\newcommand{\leadercontdist}{\pi^{\leader}}
\newcommand{\followercontdist}{\pi^{\follower}}
\newcommand{\leaderparameter}{x}
\newcommand{\followerparameter}{y}
\newcommand{\leadercontdistspace}{X}
\newcommand{\followercontdistspace}{Y}
\newcommand{\categorizationfunction}{c}
\newcommand{\leadertypes}{\mathcal{T}}
\title{\LARGE \bf
When Should a Leader Act Suboptimally? \\
The Role of Inferability in Repeated Stackelberg Games
}
\author{Mustafa O. Karabag$^{1}$, Sophia Smith$^{1}$,  Negar Mehr$^{2}$, David Fridovich-Keil$^{1}$, and Ufuk Topcu$^{1}$
\thanks{$^{1}$The University of Texas at Austin.}
\thanks{$^{2}$University of California, Berkeley.}
\thanks{Correspondence to karabag@utexas.edu.}%
}
\begin{document}

\maketitle
\thispagestyle{empty}
\pagestyle{empty}

\begin{abstract}
When interacting with other decision-making agents in non-adversarial scenarios, it is critical for an autonomous agent to have inferable behavior: The agent's actions must convey their intention and strategy. We model the inferability problem using Stackelberg games with observations where a leader and a follower repeatedly interact. During the interactions, the leader uses a fixed mixed strategy. The follower does not know the leader's strategy and dynamically reacts to the statistically inferred strategy based on the leader's previous actions. In the inference setting, the leader may have a lower performance compared to the setting where the follower has full information on the leader's strategy. We refer to the performance gap between these settings as the inferability gap. For a variety of game settings, we show that the inferability gap is upper-bounded by a function of the number of interactions and the stochasticity level of the leader's strategy, encouraging the use of inferable strategies with lower stochasticity levels. We also analyze bimatrix Stackelberg games and identify a set of games where the leader's near-optimal strategy may potentially suffer from a large inferability gap.
\end{abstract}

\section{Introduction}
Autonomous agents repeatedly interact with other agents, e.g., humans and other autonomous systems, in their environments during their operations. Often, the intentions and strategies of an autonomous agent are not fully known by the other agents, and the other agents rely on statistical inference from past interactions when they react to the actions of the autonomous agent. For example, an autonomous car interacts with pedestrians who intend to cross the road, and pedestrians do not have full knowledge of the car's strategy. Consequently, acting in an inferable way is essential for autonomous agents.

We model the interaction between the autonomous agent and the other agent with a Stackelberg game. In this game, the autonomous agent is the \textit{leader} that commits to a strategy (e.g., a software), and the other agent is the \textit{follower} that reacts to the leader's strategy. The game is repeated between the agents. While the leader follows the same strategy at every interaction, the follower's strategy can change between interactions. For the autonomous car example, a version of the car's software defines a fixed strategy over actions, stopping and proceeding.

In a conventional Stackelberg game with mixed strategies, the follower does not know the leader's action but knows the leader's strategy from which the leader's actions are drawn~\cite{conitzer2016stackelberg}. We, on the other hand, consider that the follower does not have full information of the leader's strategy and relies on the observations from the previous interactions. More specifically, at every interaction, the follower reacts to the empirical action distribution from the previous interactions. For example, in the car-pedestrian scenario, the pedestrian would act based on the frequency that the car stopped in the previous interactions.

The leader's optimal strategy may be mixed for a conventional Stackelberg game~\cite{conitzer2016stackelberg}. 
However, in the inference setting that we consider, this strategy may not be optimal since the follower reacts to the empirically observed strategy of the leader, not the actual strategy. As a result, the leader might be better off using a less stochastic strategy since such strategies would be more inferable. 
As such, the leader's expected return in the inference setting might be lower than its expected return in the full information setting. We call the return gap between these settings the leader's \textit{inferability gap}. 

We show that for a repeated static Stackelberg game with parametric action spaces, the leader's cumulative inferability gap is bounded under some Lipschitz continuity assumptions. As a corollary of this result, we show that for a repeated static bimatrix Stackelberg game, if the follower has bounded rationality (modeled by the maximum entropy response), the leader's cumulative inferability gap is bounded. We also show that the inferability gap is bounded for repeated static bimatrix Stackelberg games with fully rational followers and dynamic bimatrix Stackelberg games with myopic, boundedly rational followers. These upper bounds are functions of both the stochasticity level of the leader's strategy (i.e., the total variance of the leader's strategy assuming that the follower's estimator is efficient) and the number of interactions. As the stochasticity level of the leader's strategy decreases, the inferability gap vanishes. In the extreme case where the leader's strategy is deterministic, the leader does not suffer from any inferability gap after a single interaction; the expected return in the inference setting is the same as the expected return in the full information setting. The inferability gap at interaction \(k\) is at most \(\mathcal{O}(\nicefrac{1}{\sqrt{k}})\) (ignoring the other terms), implying that \(\mathcal{O}(\nicefrac{1}{\epsilon^{2}})\) interactions are sufficient to achieve \(\epsilon\) inferability gap.

Motivated by this bound, we use the stochasticity level as a regularization term in the leader's objective function to find optimal strategies for the inference setting. Numerical experiments show that the leader indeed suffers from an inferability gap, and the strategies generated by the regularized objective function lead to improved transient returns compared to the strategies optimal for the full information setting. We also conduct a human subject study in a simulated driving environment where the participants interact with an autonomous car. Experiment results show that despite being suboptimal for the full information setting, strategies with lower stochasticity levels lead to improved return in interactions with humans.

Additionally, as a converse result, we provide an example bimatrix Stackelberg game where the inferability gap at interaction \(k\) is at least \(\epsilon\) if \(k\) is not at the order of \(\mathcal{O}(\nicefrac{1}{\epsilon^{2}})\) under the full rationality assumption for the follower. 

We also analyze a spectrum of bimatrix Stackelberg games and identify a set of games where the (near)-optimal strategy of the leader may suffer from a large inferability gap in the inference setting. More precisely, we show that there are strategies for the leader that limit the inferability gap if the game is almost cooperative or competitive, i.e., respectively, the difference or the sum of the leader's and follower's returns are bounded.

\textit{Related work:}
The closest work is the preliminary conference version~\cite{karabag2023encouraging} of this paper.
Building on the results of the conference version, we provide analyses of the inferability gap in repeated static Stackelberg games with parametric policies, repeated static bimatrix Stackelberg games with fully rational followers, repeated static bimatrix Stackelberg games with classifying followers, and repeated dynamic discrete Stackelberg games with myopic boundedly rational followers. We show that for almost cooperative or competitive bimatrix Stackelberg games, in the inference setting, there exist strategies that are nearly as good as the optimal strategies in the full information setting. We also provide an additional example to emphasize the importance of inferable strategies in a semi-cooperative setting modeled as a Stackelberg game with parametric strategies.

Bimatrix Stackelberg games with a commitment to mixed strategies have been extensively studied in the literature under the assumption that the follower has full knowledge of the leader's strategy~\cite{conitzer2016stackelberg,conitzer2006computing,von2010leadership}. For these games, an optimal strategy for the leader can be computed in polynomial time via linear programming (assuming that the follower breaks ties in favor of the leader)~\cite{conitzer2006computing}. The paper \cite{korzhyk2011solving} considers Stackelberg games with partial observability where the follower observes the leader's strategy with some probability and does not otherwise. We consider a different observability setting where the follower gets observations from the leader's strategy. Papers \cite{pita2010robust,karwowski2023sequential} also consider this observation setting. To account for the follower's partial information, \cite{pita2010robust,karwowski2023sequential,yin2011risk} consider a robust set that represents the possible realizations of the leader's strategy and maximizes the leader's worst-case return by solving a robust optimization problem. We follow a different approach and try to maximize the leader's expected return under observations by relating it to the return under the full information assumption.

We provide a lower bound on the leader's return that involves the stochasticity level (inferability) of the leader's strategy. To our knowledge, a bound in this spirit does not exist for Stackelberg games with observations. Works \cite{farokhi2017fisher,farokhi2019ensuring,karabag2019least} increase the stochasticity level of the control policy (the leader's strategy in our context) to improve the non-inferability in different contexts. We consider a stochasticity metric that coincides with the Fisher information metric considered in~\cite{farokhi2017fisher,farokhi2019ensuring,karabag2019least}. However, unlike these works, which focus on minimizing information and providing unachievability results, we provide an achievability result. 

Human-robot interactions are more efficient if the human knows the robot's intent. Conveying intent information via movement is explored to create legible behavior \cite{bodden2018flexible,dragan2013legibility}. These works are often concerned with creating a single trajectory that is distant from the trajectories under other intents. Different from the legible behavior literature that considers inferability during a single interaction, we consider statistical inferability over repeated interactions. 

The leader's problem in our setting is a bilevel optimization problem under data uncertainty~\cite{beck2023survey}. Works \cite{patriksson1997stochastic,patriksson1999stochastic,lin2010stochastic} consider stochastic bilevel optimization problems where first the leader commits to a strategy before the data uncertainty is resolved, then the data uncertainty is resolved, and finally, the follower makes a decision with known data. In our problem, the data distribuiton depends on the leader's decision\footnote{For single-level stochastic optimization problems, this setting is referred to as non-oblivious stochastic optimization~\cite{hassani2019stochastic}. }, whereas \cite{patriksson1997stochastic,patriksson1999stochastic,lin2010stochastic} consider a fixed distribution. 

We represent the boundedly rational follower using the maximum entropy model (also known as Boltzmann rationality model or quantal response)~\cite{mckelvey1995quantal,mehr2023maximum}. Alternatively, \cite{pita2010robust,zare2020bilevel} consider boundedly rational followers using the anchoring theory~\cite{fox2005subjective} or \(\epsilon\)-optimal follower models.

\section{Notation and Preliminaries on Stackelberg Games}

We use upper-case letters for matrices and bold-face letters for random variables. \(\| \cdot \|\) denotes the L2 norm. \(\simplex^{N}\) denotes the \(N\)-dimensional probability simplex. \(\mathcal{N}(\mu, s^2)\) denotes the normal distribution with mean \(\mu\) and variance \(s^2\).
A function $\function : \mathbb{R}^{N} \to \mathbb{R}^{M}$ is \(L\)-Lipschitz continuous if it satisfies \(\|\function(z)-\function(q)\| \leq L \|z-q\|\) for any $z$ and $q$. The matrix of ones with appropriate dimensions is denoted by \(J\).

For \(z, q \in \simplex^{N}\), the entropy of \(z\), is \(H(z) = \sum_{i=1}^{N} z_{i} \log(1/z_{i}) \) where \(z_{i}\) is the \(i\)-th element of \(z\).
The softmax function \(\softmax_{\rationalityconstant}:\mathbb{R}^{N} \to \simplex^{N}\) is defined as \(\softmax_{\rationalityconstant}(z)_{i} := \frac{\exp(\rationalityconstant z_{i})}{\sum_{j=1}^{N} \exp(\rationalityconstant z_{j})}\) where \(\softmax_{\rationalityconstant}(z)_{i}\) is the \(i\)-th element of \(\softmax_{\rationalityconstant}(z)\).
The softmax function \(\softmax_{\rationalityconstant}\) is \(\rationalityconstant\)-Lipschitz continuous, i.e., it satisfies \(\|\softmax(z)-\softmax(q)\| \leq \rationalityconstant \|z-q\|\)for all \(z,q \in \mathbb{R}^{N}\)~\cite{softmax}.

We define the \textit{stochasticity level} of a discrete probability distribution \(z \in \simplex^{N}\) as \(\nu(z) := \sqrt{\sum_{i=1}^{N} z_{i}(1-z_{i})}\) that is the square root of the trace of the covariance matrix.

We also define a term \(\varphi(z)\) from \cite{weissman2003inequalities} to measure how concentrated \(z\) is. Let \(C \subseteq \lbrace 1, \ldots, N \rbrace\), and \(p({C})\) be the probability that a sampled element belongs to set \(C\) under distribution \(z\).  Define \(p_z = \max_{C \subseteq \lbrace1, \ldots, N \rbrace} \min  (z({C}), 1- z({C}))\). Note that the maximum value of \(p_z\) is less than \(\nicefrac{1}{2}\). Also, note that the maximum value of \(p_z\) is less than one minus the maximum element of \(p_z\). Define \(\varphi(p_z) = \frac{1}{1-2p_z} \log\left(\frac{1-p_z}{p_z}\right).\) Note that the minimum value \(\varphi(z)\) can take is \(2\), and \(\varphi(z)\) is an decreasing function of \(p_z\). As the strategy becomes more deterministic, \(\varphi(z)\) increases.

\begin{remark}
    In the following sections, we present various bilevel optimization problems. For notational and analytical simplicity, we assume that there exists a unique solution to the inner optimization problem, i.e., the follower's problem, and the leader knows this unique response given its strategy.
\end{remark}

We define a general dynamic Stackelberg game model and focus on special cases of this model in the later sections.

\subsection{Dynamic Stackelberg Games with Parametric Mixed Strategies}
A \textit{dynamic Stackelberg game} is a two-player game between a \textit{leader} and a \textit{follower}. The game has the state space \(\gamestatespace\), the leader has action space \(\leaderactionspace\), and the follower has action space \(\followeractionspace\).  \(\leaderfunction : \gamestatespace \times \leaderactionspace \times \followeractionspace \to \mathbb{R}\) is the leader's \textit{utility function} and \(\followerfunction : \gamestatespace \times \leaderactionspace \times \followeractionspace \to \mathbb{R}\) is the follower's utility function. 
At time \(t\), when the leader takes action \(\leadergenericcontaction_{t}\) and the follower takes action \(\followergenericcontaction_{t}\) at state \(\genericstate_{t}\), the leader and follower returns are  \(\leaderfunction(\genericstate_{t}, \leadergenericcontaction_{t}, \followergenericcontaction_{t})\) and \(\followerfunction(\genericstate_{t}, \leadergenericcontaction_{t}, \followergenericcontaction_{t})\) respectively. At time \(t+1\), the game's state follows distribution \(\transitionfunc(\genericstate_{t}, \leadergenericcontaction_{t}, \followergenericcontaction_{t})\).

In a dynamic Stackelberg game with \textit{parametric mixed strategies}, at state \(\genericstate\), the leader and follower have \textit{(stationary) mixed strategies} \(\leadercontdist(\genericstate; \leaderparameter)\) and  \(\followercontdist(\genericstate; \followerparameter)\) parametrized by \(\leaderparameter\in\leadercontdistspace\) and \(\followerparameter\in\followercontdistspace\), respectively. Here, \(\leadercontdist(\genericstate; \leaderparameter)\) and \(\followercontdist(\genericstate; \followerparameter)\) are probability distibutions over \(\leaderactionspace\) and \(\followeractionspace\), respectively. For example, in a parametric game with the same Gaussian action distribution for every state \(\genericstate\), one may have \(\leadercontdist(\genericstate;\leaderparameter) = \mathcal{N}(\theta, 1)\) where \(\leadercontdistspace = \mathbb{R}\) and \(\followercontdist(\genericstate; \followerparameter) = \mathcal{N}(0, \phi)\) where \(\followercontdistspace= \mathbb{R}_{\geq 0}\). When the follower's strategy is fixed, the game is a Markov decision process (MDP) for the leader, and vice versa.

Let \(\randomversion{\genericstate}_{t}\), \(\randomversion{\leadergenericcontaction}_{t}\), and \(\randomversion{\followergenericcontaction}_{t}\) denote the random versions of \(\genericstate_{t}\),  \(\leadergenericcontaction_{t}\), and \(\followergenericcontaction_{t}\), respectively. At time \({t}\), the leader's \textit{expected utility} is \[ \expectation{\leaderfunction(\genericstate_{t},\randomversion{\leadergenericcontaction}_{t},\randomversion{\followergenericcontaction}_{t}) }{\randomversion{\leadergenericcontaction}_{t} \sim \leadercontdist(\genericstate_{t};\leaderparameter), \randomversion{\followergenericcontaction}_{t} \sim \followercontdist(\genericstate_{t}; \followerparameter)},\] and the follower's expected utility is \[ \expectation{\followerfunction(\genericstate_{t},\randomversion{\leadergenericcontaction}_{t},\randomversion{\followergenericcontaction_{t}}) }{\randomversion{\leadergenericcontaction}_{t} \sim \leadercontdist(\genericstate_{t};\leaderparameter), \randomversion{\followergenericcontaction}_{t} \sim \followercontdist(\genericstate_{t}; \followerparameter)}.\]  \textcolor{black}{Let \(\tau\) be the random, first hitting time to a return-free, invariant set \(\gamestatespace_{end}\) of states. The players reach \(\gamestatespace_{end}\) after \(\tau\) steps and cannot leave this set under any sequence of actions. We consider that the interaction between the players ends upon reaching \(\gamestatespace_{end}\). For example, in a navigation scenario \(\gamestatespace_{end}\) could be the target locations. In a cyber system, \(\gamestatespace_{end}\) could represent the state where users are disconnected. The time \(\tau\) (potentially) depends on the players' strategies. For example, in the navigation scenario, the time that it takes for an agent to reach its target may depend on its strategy and the other player's strategy.}
The leader's \textit{expected cumulative utility} is \[ \mathbb{E}\left[\sum_{t=0}^{\tau}\leaderfunction(\randomversion{\genericstate}_{t},\randomversion{\leadergenericcontaction}_{t},\randomversion{\followergenericcontaction}_{t})\right],\] and the follower's expected cumulative utility is \[ \mathbb{E}\left[\sum_{t=0}^{\tau}\followerfunction(\randomversion{\genericstate}_{t},\randomversion{\leadergenericcontaction}_{t},\randomversion{\followergenericcontaction_{t}})\right],\] where the expectations are over the randomness of states \(\randomversion{\genericstate}_{0}  \ldots\randomversion{\genericstate}_{\tau}\) and actions \(\randomversion{\leadergenericcontaction}_{0}\ldots\randomversion{\leadergenericcontaction}_{\tau}\), \(\randomversion{\followergenericcontaction}_{0}\ldots\randomversion{\followergenericcontaction}_{\tau}\).

In the full information setting, when deciding on its strategy parameter \(\followerparameter\), the follower knows the leader's parameter \(\leaderparameter\). This means the follower knows the probability distribution of the leader's action for every state but does not know the leader's realized actions. 
The follower's goal is to maximize its expected return given the leader's strategy \(\leaderparameter\), by solving: \[\max_{\followerparameter \in \followercontdistspace}  \ \mathbb{E}\left[\sum_{t=0}^{\tau}\followerfunction(\randomversion{\genericstate}_{t},\randomversion{\leadergenericcontaction}_{t},\randomversion{\followergenericcontaction}_{t}) \right]\] where the expectation is over the randomness of states \(\randomversion{\genericstate}_{0}  \ldots\randomversion{\genericstate}_{\tau}\) and actions \(\randomversion{\leadergenericcontaction}_{0}\ldots\randomversion{\leadergenericcontaction}_{\tau}\), \(\randomversion{\followergenericcontaction}_{0}\ldots\randomversion{\followergenericcontaction}_{\tau}\)

The leader's goal is to maximize its 
expected return, i.e., solve the bilevel optimization problem:
\begin{align*}
      \sup_{\leaderparameter \in \leadercontdistspace} \ & \ \mathbb{E}\left[\sum_{t=0}^{\tau}\leaderfunction(\randomversion{\genericstate}_{t},\randomversion{\leadergenericcontaction}_{t},\randomversion{\followergenericcontaction}_{t})\right]  \\  \text{  s.t. }&
     \quad \followerparameter^{*}  = \arg\max_{\followerparameter \in \followercontdistspace}  \mathbb{E}\left[\sum_{t=0}^{\tau}\expectation{\followerfunction(\randomversion{\genericstate}_{t},\randomversion{\leadergenericcontaction}_{t},\randomversion{\followergenericcontaction_{t}}) }{}\right]. 
\end{align*}

We define \begin{align*}
\stackelbergreturn(\leaderparameter):=  \ &\mathbb{E}\left[\sum_{t=0}^{\tau}\leaderfunction(\randomversion{\genericstate}_{t},\randomversion{\leadergenericcontaction}_{t},\randomversion{\followergenericcontaction}_{t})\right]  \\ &\text{  s.t. }
     \quad \followerparameter^{*} = \arg\max_{\followerparameter \in \followercontdistspace}  \mathbb{E}\left[\sum_{t=0}^{\tau}\expectation{\followerfunction(\randomversion{\genericstate}_{t},\randomversion{\leadergenericcontaction}_{t},\randomversion{\followergenericcontaction_{t}}) }{}\right].
 \end{align*}
 We refer to \(\stackelbergreturn(\leaderdistribution)\) as the Stackelberg return of strategy (parameter) \(\leaderdistribution\) under full information and full follower rationality.

If random hitting time \(\tau\) is deterministically \(0\), i.e., the game has a single state, the game is \textit{static} with parametric mixed strategies. In this case, we drop \(\genericstate\) from our notation and use notation \(\leadercontdist(\leaderparameter)\), \(\followercontdist(\followerparameter)\), \(\leaderfunction(\leadergenericcontaction, \followergenericcontaction)\), and \(\followerfunction(\leadergenericcontaction, \followergenericcontaction)\).

If the game has finite state and action spaces, the strategies are distributions over actions. In this case, we drop \(\pi\) and use notation \(\leaderparameter(\genericstate)\) and \(\followerdistribution(\genericstate)\) to denote the action distributions of the leader and the follower at state \(\genericstate\), respectively.

\color{black}

\subsection{Special Case: Static Bimatrix Stackelberg Games}
In a \textit{static bimatrix Stackelberg game}, the leader has \(\leadernumofactions\) (enumerated) actions, and the follower has \(\followernumofactions\) (enumerated) actions. 
We call matrix \(\leaderutilitymatrix \in \mathbb{R}^{\leadernumofactions \times \followernumofactions}\) the leader's \textit{utility matrix} and \(\followerutilitymatrix \in \mathbb{R}^{\leadernumofactions \times \followernumofactions}\) the follower's utility matrix. 
When the leader and the follower take actions \(\leadergenericaction\) and \(\followergenericaction\), the leader and follower returns are  \(\leaderfunction(\leadergenericaction,\followergenericaction)= \leaderutilitymatrix_{\leadergenericaction \followergenericaction}\) and \(\followerfunction(\leadergenericaction,\followergenericaction) = \followerutilitymatrix_{\leadergenericaction \followergenericaction}\) respectively. 

In bimatrix Stackelberg games with \textit{mixed strategies}, the leader has a mixed strategy \(\leaderdistribution \in \simplex^{\leadernumofactions}\), and the follower has mixed strategy \(\followerdistribution \in \simplex^{\followernumofactions}\). We drop the notation \(\leadercontdist\) and \(\followercontdist\) when considering bimatrix Stackelberg games as the parameters directly correspond to the action probabilities.
Let \(\randomversion{\leadergenericaction}\) and \(\randomversion{\followergenericaction}\) denote the random versions of \(\leadergenericaction\) and \(\followergenericaction\), respectively. The leader's \textit{expected utility} is \(\leaderdistribution^{\top} \leaderutilitymatrix \followerdistribution = \expectation{\leaderutilitymatrix_{\randomversion{\leadergenericaction},\randomversion{\followergenericaction}}}{\randomversion{\leadergenericaction} \sim \leaderdistribution, \randomversion{\followergenericaction} \sim \followerdistribution},\) and the follower's expected utility is \(\leaderdistribution^{\top} \followerutilitymatrix \followerdistribution = \expectation{\followerutilitymatrix_{\randomversion{\leadergenericaction},\randomversion{\followergenericaction}}}{\randomversion{\leadergenericaction} \sim \leaderdistribution, \randomversion{\followergenericaction} \sim \followerdistribution}.\)
When deciding on strategy \(\followerdistribution\), the follower knows the leader's strategy \(\leaderdistribution\). The follower knows the leader's action probabilities but does not know the leader's realized action. 

\subsubsection{Fully Rational Followers} \label{sec:staticdiscretefullysetting}
The follower's goal is to maximize its expected return given \(\leaderdistribution\) by solving: \[\max_{\followerdistribution \in \simplex^{\followernumofactions}} \leaderdistribution^{\top} \followerutilitymatrix \followerdistribution.\] 

The leader's goal is to maximize its expected return, i.e., solve the bilevel optimization problem:
\begin{align*}
      \sup_{\leaderdistribution \in \simplex^{\leadernumofactions}} \ & \ \leaderdistribution^{\top} \leaderutilitymatrix \followerdistribution^{*}  \\ \text{  s.t. }
     &\quad \followerdistribution^{*}  = \arg\max_{\followerdistribution \in \simplex^{\followernumofactions}} \leaderdistribution^{\top} \followerutilitymatrix \followerdistribution .
\end{align*}
Note that an optimal solution may not exist for this problem.

We define \begin{align*}
\stackelbergreturn(x):=  &  \ \leaderdistribution^{\top} \leaderutilitymatrix \followerdistribution^{*}  \\ & \text{  s.t. }
 \quad \followerdistribution^{*}  = \arg\max_{\followerdistribution \in \simplex^{\followernumofactions}} \leaderdistribution^{\top} \followerutilitymatrix \followerdistribution .
 \end{align*}
 We refer to \(\stackelbergreturn\) as the Stackelberg return under full information and full follower rationality. Note that \(\stackelbergreturn(x)\) is a piecewise linear function of \(x\).

\subsubsection{Boundedly Rational Followers with Maximum Entropy Response}
Bounded rationality models represent the decision-making process of an agent with limited information or information processing capabilities and are often used to model the decision-making process of humans~\cite{rubinstein1998modeling}. We consider the maximum entropy model (Boltzmann rationality) to represent boundedly rational followers~\cite{braun2014information}.

Given the leader's strategy \(\leaderdistribution\), the boundedly rational follower solves the following optimization problem \[\max_{\followerdistribution \in \simplex^{\followernumofactions}} \leaderdistribution^{\top} \followerutilitymatrix \followerdistribution + \frac{1}{\rationalityconstant} \entropy(\followerdistribution)\] where \(\rationalityconstant\) denotes the follower's rationality level. Note that for \(\rationalityconstant \in (0,\infty)\), the optimal solution for the above problem is unique since the objective function is strictly concave and is given by \(\softmax_{\rationalityconstant}(\followerutilitymatrix^{\top} \leaderdistribution)\)~\cite{softmax}. 
As \(\rationalityconstant \to 0\), the follower does not take its expected utility \(\leaderdistribution^{\top} \followerutilitymatrix \followerdistribution\) into account and takes all available actions uniformly randomly. As \(\rationalityconstant \to \infty\), the follower becomes fully rational.  Given that the follower is boundedly rational with level \(\rationalityconstant  \in (0,\infty)\), the leader's goal is to maximize its expected utility, i.e., solve \[\max_{\leaderdistribution \in \simplex^{\leadernumofactions}}  \leaderdistribution^{\top} \leaderutilitymatrix \followerdistribution^{*}\] such that \(\followerdistribution^{*} = \softmax_{\rationalityconstant}(\followerutilitymatrix^{\top} \leaderdistribution). \) 
In this setting, we define 
\[\stackelbergreturn(x):= \leaderdistribution^{\top} \leaderutilitymatrix \followerdistribution^{*} \text{ where } \followerdistribution^{*} = \softmax_{\rationalityconstant}(\followerutilitymatrix^{\top} \leaderdistribution).\] as the Stackelberg return under full information and bounded follower rationality with maximum entropy response.

\subsubsection{Boundedly Rational Followers with Maximum Likelihood Classificatrion}
A classification function \(\categorizationfunction: \simplex^{\leadernumofactions} \to \leadertypes\) maps the leader's strategy to discrete set \(\leadertypes = \lbrace \leaderdistribution^{1}, \ldots, \leaderdistribution^{|\leadertypes|} \rbrace   \subset \simplex^{\leadernumofactions}\) of predefined strategies. For example, a pedestrian considers drivers to be either aggressive or defensive. In detail, \(\categorizationfunction(\leaderdistribution)\) is the maximum likelihood estimate given data \(\leaderdistribution\) and the parameter space \(\mathcal{T}\), i.e., \(\categorizationfunction(\leaderdistribution) = \arg \max_{\leaderdistribution^{i} \in \mathcal{T} } \Pr(\leaderdistribution | \leaderdistribution^{i})\).
A boundedly rational follower that classifies solves the following optimization problem \[\max_{\followerdistribution \in \simplex^{\followernumofactions}} \categorizationfunction(\leaderdistribution)^{\top} \followerutilitymatrix \followerdistribution\] where \(\categorizationfunction(\leaderdistribution)\) denotes the follower's classification of the leader's type.  Given the follower's classification function \(\categorizationfunction\), the leader's goal is to maximize its expected utility, i.e., solve \begin{align*}
      \sup_{\leaderdistribution \in \simplex^{\leadernumofactions}} \ &\leaderdistribution^{\top} \leaderutilitymatrix \followerdistribution^{*}  \\ \text{  s.t. }
     &\quad \followerdistribution^{*}  =\arg\max_{\followerdistribution \in \simplex^{\followernumofactions}} \categorizationfunction(\leaderdistribution)^{\top} \followerutilitymatrix \followerdistribution .
\end{align*}
In this setting, we define \begin{align*}
\stackelbergreturn(x):=  & \ \leaderdistribution^{\top} \leaderutilitymatrix \followerdistribution^{*}  \\ 
 & \text{  s.t. }
 \quad \followerdistribution^{*} =\arg\max_{\followerdistribution \in \simplex^{\followernumofactions}} \categorizationfunction(\leaderdistribution)^{\top} \followerutilitymatrix \followerdistribution,
 \end{align*}
 as the Stackelberg return under full information and bounded follower rationality with maximum likelihood classification.

\color{black}

\section{Problem Formulation: Repeated Stackelberg Games With Inference}

\begin{table}[]
    \caption{Overview of the problem settings and results}
    \label{tab:achieveabilityresults}
\begin{center}
\begin{tabular}{||m{1.1cm} | m{1cm} | m{1.2cm} | m{1.3cm} | m{1.7cm}||} 
\hline
\multicolumn{5}{||c||}{ Achievability Results (Upper bounds on $\stackelbergreturn(\leaderdistribution) - \empricalreturn_{\genericinteraction}(\leaderdistribution)$) } 
\\
 \hline
 Problem section & Result section & Horizon & Strategy class & Follower type \\ [0.5ex] 
 \hline
 \ref{sec:probstaticparametric} & \ref{sec:resstaticparametric} & Static & Parametric & Rational \\ 
  \hline
 \ref{sec:probdiscretestaticfully} & \ref{sec:resdiscretestaticfully} & Static & Discrete & Rational \\
 \hline
   \ref{sec:probdiscretestaticentropy} & \ref{sec:resdiscretestaticentropy} & Static & Discrete & Max. Ent. \\
 \hline
 \ref{sec:probdiscretestaticcategorizing} & \ref{sec:resdiscretestaticcategorizing} & Static& Discrete & Classifying  \\
 \hline
\ref{sec:probdynamicdiscreteentropy} & \ref{sec:resdynamicdiscreteentropy} & Dynamic & Discrete & Myopic Max. Ent.  \\ [1ex] 
 \hline \hline
 \multicolumn{5}{||c||}{ Converse Result (Lower bound on $\stackelbergreturn(\leaderdistribution) - \empricalreturn_{\genericinteraction}(\leaderdistribution)$) } 
\\
 \hline
 Problem section & Result section & Horizon & Strategy class & Follower type \\ [0.5ex] 
 \hline
\ref{sec:probdiscretestaticfully} & \ref{sec:staticfullyconverse} & Static & Discrete & Rational \\ 
  \hline
  \hline
 \multicolumn{5}{||c||}{ Achievability Result (Upper bound on $\max_{\leaderdistribution} \stackelbergreturn(\leaderdistribution) - \max_{\leaderdistribution}\empricalreturn_{\genericinteraction}(\leaderdistribution)$) } 
\\
 \hline
 Setting section & Result section & Horizon & Strategy class & Follower type \\ [0.5ex] 
 \hline
\ref{sec:staticdiscretefullysetting} & \ref{sec:importanceofinf} & Static & Discrete & Rational \\ 
  \hline
\end{tabular}
\end{center}
\end{table}

In this section, we formulate decision-making problems for the leader where the follower statistically infers the strategy of the leader through repeated interactions. Let \(\empricalreturn_{\genericinteraction}(\leaderdistribution)\) be the expected return of the leader at interaction \(\genericinteraction\) in the inference setting. \(\stackelbergreturn(\leaderdistribution)\) is the expected return of the leader in the full information setting. Since \(\empricalreturn_{\genericinteraction}(\leaderdistribution)\) depends on the statistical inference of \(\leaderdistribution\) by the follower, it is not necessarily the same with \(\stackelbergreturn(\leaderdistribution)\). We refer the difference \(\stackelbergreturn(\leaderdistribution) - \empricalreturn_{\genericinteraction}(\leaderdistribution)\) as the \textit{inferability gap} of the leader at interaction \(\genericinteraction\). We are interested in analyzing the inferability gap to find strategies that remain performant in the inference setting. 

In detail, the leader's strategy affects the follower's optimal strategy in two ways in Stackelberg games with inference.  
First, as in the conventional Stackelberg game formulation, the leader's strategy determines the expected return for different follower actions, i.e., \(\leaderdistribution^{\top} \followerutilitymatrix\) in the bimatrix Stackelberg games.
This affects the optimal strategies for the follower. 
Second, unlike in the full information Stackelberg game, the leader's strategy \(\leaderdistribution\) modifies the distribution of its empirical action distribution \(\emprical{\randomversion{\leaderdistribution}}_{\genericinteraction}\) and, consequently, the follower's strategy \(\randomversion{\followerdistribution}_{\genericinteraction}\).

A strategy with a high Stackelberg return under full information may be highly suboptimal in a Stackelberg game with inference. Different realizations of \(\emprical{\randomversion{\leaderdistribution}}_{\genericinteraction}\) lead to different solutions for \(\randomversion{\followerdistribution}_{\genericinteraction}\). If \(\leaderdistribution\) is poorly inferred by the follower, the follower's strategy \(\randomversion{\followerdistribution}_{\genericinteraction}\) may yield poor returns when simultaneously played with \(\leaderdistribution\).  In the inference setting, an optimal strategy \(\leaderdistribution^{*}\) will strike a balance between having a high Stackelberg return under full information and efficiently conveying information about itself to the follower.

We analyze inferability gap for different problem settings. We first consider a static game with parametric action spaces. Then, we consider a static bimatrix game over discrete action spaces with fully rational and boundedly rational followers. Finally, we consider a dynamic game over discrete state and action spaces. Table \ref{tab:achieveabilityresults} summarizes of the settings and results.

\begin{remark}
    In some settings, the leader's strategy could be deterministic in a higher-dimensional state space. However, the follower may perceive the strategy as mixed in a lower-dimensional state space. For example, the car's decision may be a deterministic function of the car's exact distance to the pedestrian. However, a pedestrian with imperfect information about the car's state, i.e., the exact distance, may think the car has a mixed strategy. While we do not explicitly consider games with imperfect information, the problems and results can easily be extended to these settings.
\end{remark}

\textcolor{black}{
As mentioned in Remark 1, for notational and analytical convenience, in the remaining sections, we assume that the follower's optimal response exists and is unique. We note that if the follower's response is not unique, one can formulate the leader's problem by considering the worst-case outcome as given in Definitions 3.26 and 3.27 of \cite{bacsar1998dynamic}, i.e., among the strategies that maximize its objective function, the follower chooses a strategy that minimizes the leader's return (under full information or inference). The problem formulations and the results derived in the remaining sections easily extend to the worst-case formulation. 
We also note that for the followers with strongly convex utility functions, the optimal response is unique even without the assumption. For example, the follower's response is unique for the boundedly rational followers with maximum entropy response.}

\subsection{Repeated Static Parametric Stackelberg Games with Inference Against Fully Rational Followers} \label{sec:probstaticparametric}
Consider a Stackelberg game with mixed strategies that is repeated \( K \) times.   
The follower knows the leader's parametric strategy space but does not know the leader's fixed mixed strategy (and its associated parameter) \(\leadercontdist(\leaderparameter)\). Instead, the follower infers the leader's parameter \(\leaderparameter\) from observations of the previous interactions.
At interaction \(\genericinteraction \),   let  \(\emprical{\leaderparameter}_{\genericinteraction} \) be the estimator of the leader's parameter by the follower based on the leader's previous actions \(\leadergenericcontaction_{1},\ldots, \leadergenericcontaction_{k-1}\).

Under these assumptions, the follower's strategy $\followercontdist(\followerparameter_\genericinteraction)$ at interaction $\genericinteraction$ depends on the leader's actions in the previous $\genericinteraction-1$ interactions. For this reason, $\followercontdist(\followerparameter_\genericinteraction)$ changes with \(\genericinteraction\).
For a fully rational follower based on the estimate \(\emprical{\leaderparameter}_{\genericinteraction}\), we donote the follower's optimal parameter with \(\followerparameter_\genericinteraction^{*}\) such that
\( \followercontdist(\followerparameter_\genericinteraction^{*}) = \arg\max_{\followerparameter \in \followercontdistspace} \followerfunction(\leadercontdist(\emprical{\leaderparameter}_{k}), \followercontdist(\followerparameter)).\) 

The leader's decision-making problem is to a priori select a strategy $\leadercontdist(\leaderparameter^{*})$ that maximizes its expected cumulative return under inference, i.e., assuming that the follower rationally responds to the estimator $\leadercontdist(\emprical{\leaderparameter}_{\genericinteraction})$ of $\leadercontdist(\leaderparameter^{*})$ at each interaction \(\genericinteraction\). Let \(\randomversion{\leadergenericcontaction}_{\genericinteraction}\), \(\emprical{\randomversion{\leaderparameter}}_{\genericinteraction}\), and \(\randomversion{\followerdistribution}_{\genericinteraction}\) be random variables denoting the unrealized versions of \(\leadergenericcontaction_{\genericinteraction}\), \(\emprical{\leaderdistribution}_{\genericinteraction}\) and \(\followerdistribution_{\genericinteraction}\), respectively. If exists,
the leader's optimal strategy $x \in \leadercontdistspace$  maximizes  
\begin{align*}
& 
\expectation{\sum_{\genericinteraction=1}^{\numofinteractions} \leaderfunction(\randomversion{\leadergenericcontaction}_{k},\randomversion{\followergenericcontaction}_{k})}{} \\& 
\quad \text{s.t. } \randomversion{\followerdistribution}_{\genericinteraction}^{*} =  \arg\max_{\followercontdist \in \followercontdistspace} \expectation{\followerfunction(\randomversion{\leadergenericcontaction},\randomversion{\followergenericcontaction})}{\randomversion{\leadergenericcontaction} \sim \leadercontdist(\emprical{\randomversion{\leaderdistribution}}_{\genericinteraction}), \randomversion{\followergenericcontaction} \sim \followercontdist(\followerparameter)}.
\end{align*}
Here, $\randomversion{\leadergenericcontaction}_{k} \sim \leadercontdist(\leaderparameter), \randomversion{\followergenericcontaction}_{k} \sim \leadercontdist(\randomversion{\followerparameter}^{*}_{\genericinteraction})$, and the randomness of \(\randomversion{\followerdistribution}_{\genericinteraction}^{*}\) is over the leader's actions \(\randomversion{\leadergenericcontaction}_{1}, \ldots, \randomversion{\leadergenericcontaction}_{\genericinteraction-1}\). 
The leader solves this decision problem prior to taking any action, meaning their future actions \(\randomversion{\leadergenericcontaction}_{1}, \ldots, \randomversion{\leadergenericcontaction}_{\numofinteractions}\) are random variables that are all independently sampled from \(\leaderdistribution^{*}\). Since the follower's estimation  \(\emprical{\randomversion{\leaderparameter}}_{\genericinteraction}\) is a function of these future actions and the follower's strategy \(\randomversion{\followerdistribution}_{\genericinteraction}^{*}\) is a function of \(\emprical{\randomversion{\leaderdistribution}}_{\genericinteraction}\), they are both random variables as well and therefore bolded.

\subsection{Repeated Static Bimatrix Stackelberg Games with Inference Against Fully or Boundedly Rational Followers}

Consider a bimatrix Stackelberg game with mixed strategies that is repeated \( K \) times. The follower infers the leader's strategy from observations of the previous interactions. For example, at interaction \(\genericinteraction \), let \(\emprical{\leaderdistribution}_{\genericinteraction} \) be the \textit{plug-in sample mean estimation} of the leader's strategy. 
Specifically, if the leader takes actions $\leadergenericaction_1, \ldots, \leadergenericaction_{k-1}$ at the previous $k-1$ interactions,
\[
\left(\emprical{\leaderdistribution}_{\genericinteraction} \right)_{l} = \frac{\#_{t=1}^{k-1} \text{ }  (i_t = l) }{k-1},
\]
where \( \left(\emprical{\leaderdistribution}_{\genericinteraction} \right)_{l}\) is the $l^\text{th}$ element of vector \(\emprical{\leaderdistribution}_{\genericinteraction} \) and \( \#(\cdot) \) counts the number of times the input is true. 

The follower's optimal strategy $y_k^{*}$ at interaction $k$ depends on the leader's actions in the previous $k-1$ interactions. For this reason, $y_k^{*}$ changes over interactions.
For example, for a fully rational follower,
\( \followerdistribution_\genericinteraction^{*} =\arg\max_{\followerdistribution \in \simplex^{\followernumofactions}} \emprical{\leaderdistribution}^{\top}_{k} \followerutilitymatrix \followerdistribution.\)

Next, we consider the following formulations of the leader's problem for different levels of follower rationality.
\subsubsection{Fully rational follower} \label{sec:probdiscretestaticfully}
The leader's decision-making problem is to a priori select a strategy $\leaderdistribution^{*}$ that maximizes its expected cumulative return under inference, i.e., assuming that the follower rationally responds to the plug-in of $\leaderdistribution^{*}$ at each interaction \(\genericinteraction\). Let \(\randomversion{\leadergenericaction}_{\genericinteraction}\), \(\emprical{\randomversion{\leaderdistribution}}_{\genericinteraction}\), and \(\randomversion{\followerdistribution}_{\genericinteraction}^{*}\) be random variables denoting the unrealized versions of \(\leadergenericaction_{\genericinteraction}\), \(\emprical{\leaderdistribution}_{\genericinteraction}\) and \(\followerdistribution_{\genericinteraction}^{*}\), respectively. 
The leader's decision problem is to maximize
\begin{align*}
&\expectation{\sum_{\genericinteraction=1}^{\numofinteractions}  \leaderdistribution^{\top} \leaderutilitymatrix \randomversion{\followerdistribution}_{\genericinteraction}^{*}}{} & \text{ s.t. } \randomversion{\followerdistribution}_{\genericinteraction}^{*} =\arg\max_{\followerdistribution \in \simplex^{\followernumofactions}} \emprical{\randomversion{\leaderdistribution}}^{\top}_{\genericinteraction}\followerutilitymatrix \followerdistribution.
\end{align*}

The expectation is over the randomness in the leader's (random) actions \(\randomversion{\leadergenericaction}_{1}, \ldots, \randomversion{\leadergenericaction}_{\numofinteractions}\).

\subsubsection{Boundedly rational follower with maximum entropy response} \label{sec:probdiscretestaticentropy}
The leader's decision problem is to maximize
\begin{align*}
&\expectation{\sum_{\genericinteraction=1}^{\numofinteractions}  \leaderdistribution^{\top} \leaderutilitymatrix \randomversion{\followerdistribution}_{\genericinteraction}^{*}}{} &\text{ s.t. } \randomversion{\followerdistribution}_{\genericinteraction}^{*}  = \softmax_{\rationalityconstant}(\followerutilitymatrix^{\top} \emprical{\randomversion{\leaderdistribution}}_{k}).
\end{align*}

\subsubsection{Boundedly rational follower with maximum likelihood classification} \label{sec:probdiscretestaticcategorizing}
The leader's decision problem is to maximize
\begin{align*}
&\expectation{\sum_{\genericinteraction=1}^{\numofinteractions}  \leaderdistribution^{\top} \leaderutilitymatrix \randomversion{\followerdistribution}_{\genericinteraction}^{*}}{} &\text{s.t. } \randomversion{\followerdistribution}_{\genericinteraction}^{*} =\arg\max_{\followerdistribution \in \simplex^{\followernumofactions}} \categorizationfunction(\emprical{\randomversion{\leaderdistribution}}_{\genericinteraction})^{\top}\followerutilitymatrix \followerdistribution.
\end{align*}

\subsection{Repeated Discrete Dynamic Parametric Stackelberg Games with Inference Against Boundedly Rational, Myopic Followers} \label{sec:probdynamicdiscreteentropy}Assume that the state space \(\gamestatespace\), the leader's action space \(\leaderactionspace\), and the follower's action space \(\followeractionspace\) are finite. After every interaction \(\genericstate_{0} \leadergenericcontaction_{0} \followergenericcontaction_{0} \ldots\genericstate_{\tau} \leadergenericcontaction_{\tau} \followergenericcontaction_{\tau}\), the follower updates its estimation of the leader's strategy using the plug-in sample mean estimator for every state using all previous sample actions. At interaction \(k\), we denote the follower's estimation of the leader's strategy for state \(\genericstate\) with \(\emprical{\leaderdistribution}_{\genericinteraction}(\genericstate)\). At interaction \(k\), a \textit{boundedly rational, miyopic} follower approximately maximizes its one-step return. For every state \(\genericstate\), we define utility matrices \(\leaderutilitymatrix(\genericstate)_{ij}:= \leaderfunction(\genericstate, i, j) = \) and \(\followerutilitymatrix(\genericstate)_{ij}:= \followerfunction(\genericstate, i, j)\).
The leader's decision problem is to maximize
\begin{align*}
 &\mathbb{E}\left[ \sum_{k=1}^{K} \sum_{t=0}^{\tau} \leaderparameter(\randomversion{\genericstate}_{t})^{\top} \leaderutilitymatrix(\randomversion{\genericstate}_{t}) \randomversion{\followerdistribution}_{\genericinteraction}^{*}(\randomversion{\genericstate}_{t})\right]   \\ & 
\quad \text{s.t. } \randomversion{\followerdistribution}_{\genericinteraction}^{*}(\randomversion{\genericstate}_{t}) = \softmax_{\rationalityconstant}(\followerutilitymatrix(\randomversion{\genericstate}_{t})^{\top} \emprical{\randomversion{\leaderdistribution}}_{\genericinteraction}(\randomversion{\genericstate}_{t})).
 \end{align*} 

We note that different from the static case, the leader aims to optimize its expected return for the whole interaction.

\color{black}
\subsection{An Example Where Inferable Behavior is Necessary: Autonomous Car and Pedestrian Interaction} \label{section:carpedestrianexample}
This section describes a motivating example of the interaction between an autonomous car and a fully-rational pedestrian. The interaction is modeled as a bimatrix game. Similar scenarios have been considered in \cite{di2020liability,millard2018pedestrians}. 

Consider an autonomous car moving in its lane. A pedestrian is dangerously close to the road and aims to cross. The car has the right of way and wants to proceed, as an unnecessary stop is inefficient. However, if the pedestrian decides to cross the car may need to make a dangerous emergency stop. In the event the pedestrian crosses, they may get fined for jaywalking. The pedestrian and the car must make simultaneous decisions that will determine the outcome. Since the autonomous car's software is fixed prior to deployment, the car's decision is drawn from a fixed strategy that does not change over time.

    \begin{table}[h]
    \caption{Utilities for the car and pedestrian interaction}\label{tab:bimarix}
    \centering
\begin{tabular}{cccc}
          &                              & \multicolumn{2}{l}{Pedestrian's actions}                 \\ \cline{3-4} 
          & \multicolumn{1}{c|}{}        & \multicolumn{1}{c|}{Wait}  & \multicolumn{1}{c|}{Cross}  \\ \cline{2-4} 
\multicolumn{1}{c|}{\multirow{2}{*}{\begin{tabular}[c]{@{}c@{}}Car's\\ actions\end{tabular}}} & \multicolumn{1}{c|}{Stop} & \multicolumn{1}{c|}{(0,2)} & \multicolumn{1}{c|}{(0,1)} \\ \cline{2-4} 
\multicolumn{1}{c|}{}                                                                         & \multicolumn{1}{c|}{Proceed}    & \multicolumn{1}{c|}{(2,0)}  & \multicolumn{1}{c|}{(-8,1)} \\ \cline{2-4} 
\end{tabular}
\end{table}

For the pedestrian, crossing has a value \(r^{\follower}_{\mathrm{cr}} = 2\), and potentially getting fined for jaywalking has \(r^{\follower}_{\mathrm{jw}} = -1\) value. For the car, proceeding without a stop has a value \(r^{\leader}_{\mathrm{pr}} = 2\), and making an emergency stop has \(r^{\leader}_{\mathrm{em}} = -8\) value. 

 \textbf{Scenario 1:} The car stops, and the pedestrian waits for the car. The pedestrian's return is \(r^{\follower}_{\mathrm{cr}} = 2\) since the car's stop allows them to cross. The car's return is \(0\) since it does not proceed and stops unnecessarily. 

 \textbf{Scenario 2:} 
  The car stops, but the pedestrian crosses before the car yields the right of way. In this case, the pedestrian's return is \(r^{\follower}_{\mathrm{cr}} + r^{\follower}_{\mathrm{jw}} = 1\) since they cross the road but risk being fined for jaywalking. Once again, the car receives a return of \(0\) since it does not proceed.  

 \textbf{Scenario 3:} The car proceeds, and the pedestrian  waits. The car's return is \(r^{\leader}_{\mathrm{pr}} = 2\) since it makes no unnecessary stops. The pedestrian  gets a return of \(0\)  since it can not cross.

\textbf{Scenario 4:} The car proceeds, and the pedestrian crosses. The pedestrian's return is \(r^{\follower}_{\mathrm{cr}}+r^{\follower}_{\mathrm{jw}} = 1\). 
While the car proceeds, it makes an emergency stop due to the crossing pedestrian, resulting in a return of \( r^{\leader}_{\mathrm{em}} =-8\).

Assume that the car stops with probability \(\genericprobability\) and proceeds with probability \(1-\genericprobability\). If the pedestrian knows the probability \(\genericprobability\), the pedestrian would wait if \(2 \genericprobability +0(1-\genericprobability) > 1 \genericprobability +1(1-\genericprobability)\), i.e., \(\genericprobability >0.5\). Knowing that the pedestrian would wait when \(\genericprobability>0.5\), the car gets a return of \(0 \genericprobability + 2(1-\genericprobability)\).  Knowing that the pedestrian would cross when \(\genericprobability <0.5\), the car gets a return of \(0 \genericprobability  -8(1-\genericprobability)\). Hence, it is optimal for the car to choose a \(\genericprobability\) such that \(\genericprobability > 0.5\) and \(\genericprobability \approx 0.5\). While such a strategy is optimal and has a return of \(\approx 1\) for the car, it relies on the fact that the pedestrian has full information of the car's strategy. Such a strategy may not be optimal if the pedestrian does not know \(\genericprobability\) and relies on observations. 

Consider a scenario where the pedestrian and car will interact a certain number of times. The pedestrian estimates the car's fixed strategy using observations from previous interactions. 
If, in most of the previous interactions, the car stopped, the pedestrian would expect the car to stop in the next interaction. 
Knowing that the pedestrian relies on observations, the car should choose an easily inferable strategy. 

If the pedestrian has a good estimate \(\emprical{\genericprobability}\) of the car's strategy, the pedestrian will act optimally with respect to the car's actual strategy. On the flip side, the car will get a return that is close to the full information case. 
For example, consider that \(\genericprobability = 1\). In this case, the pedestrian has the correct estimate \(\emprical{\genericprobability} = 1\) after a single interaction, and the car will get a return of \(0\) in the subsequent interactions. 

If the car's strategy is not easily inferable, then the car may suffer from an \textit{inferability gap}. For example, if \(\genericprobability\) is such that \(\genericprobability > 0.5\) and \(\genericprobability \approx 0.5\), the car has an expected return of \(\approx -1.5\) in the second interaction. This is significantly lower compared to the expected return of \(1\) in the full information case. The difference is because the pedestrian's estimate will be \(\emprical{\genericprobability} = 0\) with probability \(1-\genericprobability \approx 0.5\). In those events, the pedestrian will cross, and the car will get \(-8\) return if it proceeds. Overall, a strategy that maximizes the car's expected return over a finite number of interactions should take the pedestrian's estimation errors into account.

A similar inferability gap happens for the followers that classify. 
    Consider a pedestrian that classifies cars into two types: cars that stop with probability \(0.99\) and cars that proceed with probability \(0.99\). For these classes, the optimal action for the pedestrian is to cross and wait, respectively. For such a classification, in the full information setting, it is still optimal for the car to choose a \(\genericprobability\) such that \(\genericprobability > 0.5\) and \(\genericprobability \approx 0.5\). However, under such a strategy, the follower's classification of the leader's strategy alternates frequently. Similar to the fully rational setting, the leader's expected return under inference is \(\approx -1.5\) in the second interaction.

\section{Performance Bounds Under Inference}

In this section, we compare the leader's expected utility under repeated interactions with inference with the leader's expected utility under repeated interactions with full information in different settings. With an abuse of notation, for different settings, we use \(\empricalreturn_{\genericinteraction}(\leaderdistribution)\) to denote the leader's expected return at interaction \(k\) under inference and \(\stackelbergreturn(\leaderdistribution) \) to denote the leader's expected return under full information under strategy (parameter) \(\leaderdistribution\). The leader's expected return at the first interaction is arbitrary since the follower does not have any action samples. Hence, we are interested in analyzing the expected cumulative return for interactions \(\genericinteraction=2, \ldots, \numofinteractions\). Let \(v^{\mathrm{Stck}}(\leaderdistribution)\) be the leader's expected cumulative return in the full information setting and \(v^{\mathrm{Infr}}(\leaderdistribution)\) be the leader's expected cumulative return in the inference setting for interactions  \(\genericinteraction=2, \ldots, \numofinteractions\).
Due to the linearity of expectation, the expected cumulative return can be represented as a sum of expected returns of every interaction, i.e.,  \(v^{\mathrm{Stck}}(\leaderdistribution) = (\numofinteractions-1) \stackelbergreturn(\leaderdistribution)\) and \(v^{\mathrm{Infr}}(\leaderdistribution) = \sum_{\genericinteraction=2}^{\numofinteractions} \empricalreturn_{\genericinteraction}(\leaderdistribution)\). Consequently, we are interested in analyzing the cumulative gap \((\numofinteractions-1) \stackelbergreturn(\leaderdistribution) - \sum_{\genericinteraction=2}^{\numofinteractions} \empricalreturn_{\genericinteraction}(\leaderdistribution)\).

For the bimatrix games, without loss of generality, we assume that the utility matrices are normalized, i.e.,   \( \max_{i, j} \followerutilitymatrix_{i,j}-\min_{i, j} \followerutilitymatrix_{i,j} = \max_{i, j} \leaderutilitymatrix_{i,j}-\min_{i, j} \leaderutilitymatrix_{i,j} = 1\).

\color{black}
\noindent \textbf{Summary of the technical results:}
In Sections \ref{sec:resstaticparametric}-\ref{sec:resdynamicdiscreteentropy}, we provide lower bounds on the leader's return under inference $IR_{k}(x)$ for various settings depending on the action spaces, the horizon length, and the follower type. The bounds share two elements other than the follower's return under full information $SR(x)$: (i) the interaction number\footnote{\textcolor{black}{Theorem \ref{thm:paramstatic} does not directly depend on the interaction number. However, the MSE of the follower's estimate inherently depends on the interaction number. Consider that the follower has an efficient estimator of the leader's parameter. In this case, the MSE of the follower's estimate is inversely proportional to the number of interactions since the leader's actions are i.i.d.. Consequently, one can obtain a bound that grows as $\mathcal{O}(\sqrt{K})$ and depends on the stochasticity level of the leader's action distribution.}} and (ii) the stochasticity (concentrability) of the leader's strategy. 

\textit{Dependency on the interaction number: }At interaction \(k+1\), the inferability gap scales with \(\nicefrac{1}{\sqrt{k}}\) meaning that it decreases with each interaction, and the cumulative inferability gap grows as $\sqrt{\mathcal{O}(K)}$. 

\textit{Dependency on the stochasticity (or concentrability) of the leader's strategy:} As the leader's action distribution becomes less stochastic, the upper bound on the inferability loss decreases, implying that the return under inference gets closer to the return under full information. The bounds are asymptotically vanishing: as the leader's strategy becomes deterministic, the inferability gap approaches zero. We remark that while the inferability gap approaches zero, deterministic strategies do not necessarily maximize the return under inference. 

For repeated games where the leader's return may depend on the inferability of its strategy, the strategy maximizing the return under full information is not necessarily optimal if the number of interactions is limited. In these scenarios, one can regulate the stochasticity level of the strategy to improve the return under inference. While such regulated strategies are, in general, suboptimal in the limit (when the follower has full information of the leader's strategy), they may maximize the cumulative return under inference in a limited number of interactions. We demonstrate the effects of regulating the total variance through a numerical example in Section \ref{sec:experimentrandomgames}.

\color{black}

\subsection{Achievability Bound for Repeated Parametric Static Stackelberg Games Against Fully Rational Followers} \label{sec:resstaticparametric}

Let \(\emprical{\randomversion{\leaderdistribution}}_{\genericinteraction}\) be the random variable denoting follower's estimation of the leader's parameter \(\leaderdistribution\) after \(\genericinteraction-1\) interactions based on the leader's previous actions \(\randomversion{\leadergenericcontaction}_{1}, \ldots, \randomversion{\leadergenericcontaction}_{\genericinteraction-1}\).

In this setting, at interaction \(\genericinteraction\), we have \begin{align*}
\empricalreturn_{\genericinteraction}(\leaderdistribution) &= 
\expectation{\leaderfunction(\randomversion{\leadergenericcontaction}_{k},\randomversion{\followergenericcontaction}_{k})}{\randomversion{\leadergenericcontaction}_{k} \sim \leadercontdist(\leaderparameter), \randomversion{\followergenericcontaction}_{k} \sim \followercontdist(\randomversion{\followerparameter}^{*}_{\genericinteraction})}\\& 
\quad \text{s.t. } \randomversion{\followerdistribution}_{\genericinteraction}^{*} =  \arg\max_{\followerdistribution \in \followercontdistspace} \expectation{\followerfunction(\randomversion{\leadergenericcontaction},\randomversion{\followergenericcontaction})}{\randomversion{\leadergenericcontaction} \sim \leadercontdist(\emprical{\randomversion{\leaderdistribution}}_{\genericinteraction}), \randomversion{\followergenericcontaction} \sim \followercontdist(\followerparameter)},
\end{align*}
and
\begin{align*}
\stackelbergreturn(\leaderdistribution) &= 
\expectation{\leaderfunction(\randomversion{\leadergenericcontaction}_{k},\randomversion{\followergenericcontaction}_{k})}{\randomversion{\leadergenericcontaction}_{k} \sim \leadercontdist(\leaderparameter), \randomversion{\followergenericcontaction}_{k} \sim \followercontdist(\followerparameter^{*})} \\& 
\quad \text{s.t. } \followerdistribution^{*} =  \arg\max_{\followerdistribution \in \followercontdistspace} \expectation{\followerfunction(\randomversion{\leadergenericcontaction},\randomversion{\followergenericcontaction})}{\randomversion{\leadergenericcontaction} \sim \leadercontdist(\leaderdistribution), \randomversion{\followergenericcontaction} \sim \followercontdist(\followerparameter)}.
\end{align*}

If \(\emprical{\randomversion{\leaderdistribution}}_{\genericinteraction}\) converges to \(\leaderdistribution\) and the follower's optimal response is a continuous mapping, then the follower's response under inference converges to the follower's response under full information. Furthermore, if the leader's objective function is a continuous mapping, then the leader's expected return under inference \(\empricalreturn_{\genericinteraction}(\leaderdistribution)\) converges to the leader's return under full information \(\stackelbergreturn(\leaderdistribution)\). However, with a finite number of interactions, these returns are not necessarily the same, and the leader may suffer from an \textit{inferability gap}.

The following result shows that the inferability gap is upper bounded if the follower's estimation of the leader's parameter has a bounded mean squared error (MSE), the follower's response is Lipschitz continuous, and the leader's objective function is Lipschitz continuous. 

\begin{theorem} \label{thm:paramstatic}
    Let $MSE_{k}\geq 0$ be constants for \(k =2, 3,\ldots, K\). For a repeated parametric static Stackelberg game, if 
    \begin{enumerate}
        \item \(\mathbb{E} [ \| \emprical{\randomversion{\leaderdistribution}}_{\genericinteraction} - \leaderdistribution \|^{2} ] \leq MSE_{k}\), \label{condition:mse}

        \item \(\expectation{\leaderfunction(\randomversion{\leadergenericcontaction},\randomversion{\followergenericcontaction})}{\randomversion{\leadergenericcontaction} \sim \leadercontdist(\leaderparameter), \randomversion{\followergenericcontaction} \sim \followercontdist(\followerparameter)}\) is an \(L^{\leader}\)-Lipschitz function of \(\followerparameter\), and \label{condition:leaderlipschitz}

        \item \(\arg\max_{\followerdistribution \in \followercontdistspace} \expectation{\followerfunction(\randomversion{\leadergenericcontaction},\randomversion{\followergenericcontaction})}{\randomversion{\leadergenericcontaction} \sim \leadercontdist(\leaderdistribution'), \randomversion{\followergenericcontaction} \sim \followercontdist(\followerparameter)}\) is an \(L^{\follower}\)-Lipschitz function of \(\leaderdistribution'\) for all \(\leaderdistribution' \in \leadercontdistspace\), \label{condition:followerlipschitz}
    \end{enumerate}
    then \((K-1)\stackelbergreturn(\leaderdistribution) - \sum_{k=2}^{K} \empricalreturn_{\genericinteraction}(\leaderdistribution) \leq  \sum_{k=2}^{K}  L^{\leader} L^{\follower}  \sqrt{MSE_{k}}.\)

\end{theorem}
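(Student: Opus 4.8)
The plan is to bound the gap one interaction at a time and then sum, using the two Lipschitz conditions in composition followed by Jensen's inequality. The crucial observation is that in both \(\stackelbergreturn(\leaderdistribution)\) and \(\empricalreturn_{\genericinteraction}(\leaderdistribution)\) the leader draws its action from the \emph{same} distribution \(\leadercontdist(\leaderparameter)\); the only difference is the follower's parameter, which is \(\followerparameter^{*}\) under full information and the random variable \(\randomversion{\followerparameter}^{*}_{\genericinteraction}\) under inference. Introduce the leader's expected-utility function \(g(\followerparameter) := \expectation{\leaderfunction(\randomversion{\leadergenericcontaction},\randomversion{\followergenericcontaction})}{\randomversion{\leadergenericcontaction} \sim \leadercontdist(\leaderparameter), \randomversion{\followergenericcontaction} \sim \followercontdist(\followerparameter)}\) and the follower's best-response map \(\followerparameter^{*}(\leaderdistribution') := \arg\max_{\followerparameter} \expectation{\followerfunction(\randomversion{\leadergenericcontaction},\randomversion{\followergenericcontaction})}{\randomversion{\leadergenericcontaction} \sim \leadercontdist(\leaderdistribution'), \randomversion{\followergenericcontaction} \sim \followercontdist(\followerparameter)}\), which is well-defined by the uniqueness assumption of Remark~1. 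Then \(\stackelbergreturn(\leaderdistribution) = g(\followerparameter^{*}(\leaderdistribution))\), and since the leader's action \(\randomversion{\leadergenericcontaction}_{\genericinteraction}\) is sampled independently of the history that determines \(\emprical{\randomversion{\leaderdistribution}}_{\genericinteraction}\), we may factor the expectation to obtain \(\empricalreturn_{\genericinteraction}(\leaderdistribution) = \mathbb{E}[ g(\followerparameter^{*}(\emprical{\randomversion{\leaderdistribution}}_{\genericinteraction}))]\), where the outer expectation is over the follower's estimate \(\emprical{\randomversion{\leaderdistribution}}_{\genericinteraction}\).

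With this notation the per-interaction gap becomes \(\stackelbergreturn(\leaderdistribution) - \empricalreturn_{\genericinteraction}(\leaderdistribution) = \mathbb{E}[ g(\followerparameter^{*}(\leaderdistribution)) - g(\followerparameter^{*}(\emprical{\randomversion{\leaderdistribution}}_{\genericinteraction}))]\). Bounding the integrand by its absolute value and chaining the two Lipschitz hypotheses — condition~\ref{condition:leaderlipschitz} gives \(|g(\followerparameter) - g(\followerparameter')| \le L^{\leader}\|\followerparameter - \followerparameter'\|\), and condition~\ref{condition:followerlipschitz} gives \(\|\followerparameter^{*}(\leaderdistribution) - \followerparameter^{*}(\emprical{\randomversion{\leaderdistribution}}_{\genericinteraction})\| \le L^{\follower}\|\leaderdistribution - \emprical{\randomversion{\leaderdistribution}}_{\genericinteraction}\|\) — yields the pointwise estimate \(|g(\followerparameter^{*}(\leaderdistribution)) - g(\followerparameter^{*}(\emprical{\randomversion{\leaderdistribution}}_{\genericinteraction}))| \le L^{\leader} L^{\follower}\|\leaderdistribution - \emprical{\randomversion{\leaderdistribution}}_{\genericinteraction}\|\). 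Taking expectations gives \(\stackelbergreturn(\leaderdistribution) - \empricalreturn_{\genericinteraction}(\leaderdistribution) \le L^{\leader} L^{\follower}\, \mathbb{E}[\|\leaderdistribution - \emprical{\randomversion{\leaderdistribution}}_{\genericinteraction}\|]\).

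The final step converts the expected norm into the square root of the MSE. By Jensen's inequality applied to the concave square-root function, \(\mathbb{E}[\|\leaderdistribution - \emprical{\randomversion{\leaderdistribution}}_{\genericinteraction}\|] \le \sqrt{\mathbb{E}[\|\leaderdistribution - \emprical{\randomversion{\leaderdistribution}}_{\genericinteraction}\|^{2}]} \le \sqrt{MSE_{k}}\) by condition~\ref{condition:mse}, so each interaction contributes at most \(L^{\leader} L^{\follower}\sqrt{MSE_{k}}\). Summing over \(k = 2, \ldots, K\) and using \(v^{\mathrm{Stck}}(\leaderdistribution) = (K-1)\stackelbergreturn(\leaderdistribution)\) yields the stated cumulative bound \((K-1)\stackelbergreturn(\leaderdistribution) - \sum_{k=2}^{K}\empricalreturn_{\genericinteraction}(\leaderdistribution) \le \sum_{k=2}^{K} L^{\leader} L^{\follower}\sqrt{MSE_{k}}\).

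I expect no serious obstacle; the argument is a clean composition of the two Lipschitz bounds with Jensen. The only point requiring care is the bookkeeping of which quantities are random. The leader's action distribution \(\leadercontdist(\leaderparameter)\) is identical across both settings, and all randomness separating \(\empricalreturn_{\genericinteraction}\) from \(\stackelbergreturn\) is carried by the follower's estimate \(\emprical{\randomversion{\leaderdistribution}}_{\genericinteraction}\); because the current-interaction action is independent of the history feeding that estimate, the expectation defining \(\empricalreturn_{\genericinteraction}(\leaderdistribution)\) factors and can be pulled outside the Lipschitz estimate without issue. This independence, together with the uniqueness of \(\followerparameter^{*}(\cdot)\) guaranteed by Remark~1, is all that is needed to make the chain of inequalities rigorous.
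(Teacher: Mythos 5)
Your proposal is correct and follows essentially the same route as the paper's proof: compose the two Lipschitz conditions to get a pointwise bound of \(L^{\leader}L^{\follower}\|\leaderdistribution - \emprical{\randomversion{\leaderdistribution}}_{\genericinteraction}\|\) on the per-interaction gap, take the expectation over the follower's estimate, apply Jensen's inequality to pass from \(\mathbb{E}[\|\leaderdistribution - \emprical{\randomversion{\leaderdistribution}}_{\genericinteraction}\|]\) to \(\sqrt{MSE_{k}}\), and sum over \(k\). The only cosmetic difference is that the paper phrases the expectation and the Jensen step via the density \(h(t)\) of \(\|\emprical{\randomversion{\leaderdistribution}}_{\genericinteraction}-\leaderdistribution\|\), whereas you work with expectations directly, which is the same argument.
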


\begin{proof}[Proof of Theorem \ref{thm:paramstatic}]
Condition \ref{condition:followerlipschitz} implies  \[\| \followerdistribution^{*} - \randomversion{\followerdistribution}^{*}_{k} \| \leq L^{\follower} \| \leaderdistribution - \emprical{\randomversion{\leaderdistribution}}_{k} \| .\]
Combining this inequality with condition \ref{condition:leaderlipschitz}, we get 
\begin{align*}
&\left\lvert\expectation{\leaderfunction(\randomversion{\leadergenericcontaction},\randomversion{\followergenericcontaction})}{\randomversion{\leadergenericcontaction} \sim \leadercontdist(\leaderparameter), \randomversion{\followergenericcontaction} \sim \leadercontdist(\followerparameter)} - \expectation{\leaderfunction(\randomversion{\leadergenericcontaction},\randomversion{\followergenericcontaction})}{\randomversion{\leadergenericcontaction} \sim \leadercontdist(\leaderparameter), \randomversion{\followergenericcontaction} \sim \leadercontdist(\randomversion{\followerdistribution}^{*}_{k})}\right\rvert
\\
&\leq L^{\leader} L^{\follower} \| \leaderdistribution - \emprical{\randomversion{\leaderdistribution}}_{k}\| 
\end{align*}
which implies
\begin{align*}
&\expectation{\leaderfunction(\randomversion{\leadergenericcontaction},\randomversion{\followergenericcontaction})}{\randomversion{\leadergenericcontaction} \sim \leadercontdist(\leaderparameter), \randomversion{\followergenericcontaction} \sim \leadercontdist(\randomversion{\followerdistribution}^{*}_{k})} \geq 
\\
&\expectation{\leaderfunction(\randomversion{\leadergenericcontaction},\randomversion{\followergenericcontaction})}{\randomversion{\leadergenericcontaction} \sim \leadercontdist(\leaderparameter), \randomversion{\followergenericcontaction} \sim \leadercontdist(\followerparameter)} - L^{\leader} L^{\follower} \| \leaderdistribution - \emprical{\randomversion{\leaderdistribution}}_{k}\|. 
\end{align*}

Let \(h(t)\) be the p.d.f. of \(\| \emprical{\randomversion{\leaderdistribution}}_{\genericinteraction} - \leaderdistribution \|\). We have
\begin{subequations}
    \begin{align}
    &\empricalreturn_{\genericinteraction}(\leaderdistribution) = \expectation{ \expectation{\leaderfunction(\randomversion{\leadergenericcontaction},\randomversion{\followergenericcontaction})}{\randomversion{\leadergenericcontaction} \sim \leadercontdist(\leaderparameter), \randomversion{\followergenericcontaction} \sim \leadercontdist(\randomversion{\followerdistribution}^{*}_{k})} }{}  \\
    &= \int_{0}^{\infty} \expectation{ \expectation{\leaderfunction(\randomversion{\leadergenericcontaction},\randomversion{\followergenericcontaction})}{\randomversion{\leadergenericcontaction} \sim \leadercontdist(\leaderparameter), \randomversion{\followergenericcontaction} \sim \leadercontdist(\randomversion{\followerdistribution}^{*}_{k})} \Big| \|\emprical{\randomversion{\leaderdistribution}}_{\genericinteraction} - \leaderdistribution \| = t  }{}   h(t) dt \label{eqn:integraldefofexpectation}
    \\
    & \geq \int_{0}^{\infty} \left(\expectation{\leaderfunction(\randomversion{\leadergenericcontaction},\randomversion{\followergenericcontaction})}{\randomversion{\leadergenericcontaction} \sim \leadercontdist(\leaderparameter), \randomversion{\followergenericcontaction} \sim \leadercontdist(\followerparameter)} - L^{\leader} L^{\follower} \| \leaderdistribution - \emprical{\randomversion{\leaderdistribution}}_{k}\|\right) h(t) dt \label{eqn:useofbound}
    \\
    & =  \stackelbergreturn(\leaderdistribution) - L^{\leader} L^{\follower}\int_{0}^{\infty} t h(t) dt \label{eqn:pdfaddsupto1}
\end{align}
\end{subequations}
where \eqref{eqn:integraldefofexpectation} is due to the definition of expectation, \eqref{eqn:useofbound} is due to the above bound, and \eqref{eqn:pdfaddsupto1} is due to the definition of \(\stackelbergreturn\) and \(\int_{t=0}^{\infty} h(t) = 1\). Note that 
\begin{align} \label{ineq:mseineq}
    \mathbb{E} [ \| \emprical{\randomversion{\leaderdistribution}}_{\genericinteraction} - \leaderdistribution \|^{2} ] &= \int_{0}^{\infty}  t^{2} h(t) dt \geq  \left( \int_0^{\infty} t h(t) dt \right)^{2} 
\end{align}
due to Jensen's inequality. Condition \ref{condition:mse} in \eqref{eqn:pdfaddsupto1} and \eqref{ineq:mseineq} imply
\begin{align*}
    \empricalreturn_{\genericinteraction} (\leaderdistribution) &\geq \stackelbergreturn(\leaderdistribution) - L^{\leader} L^{\follower}  \sqrt{\left( \int_{0}^{\infty} t^{2} h(t) dt\right) }  \\
    &= \stackelbergreturn(\leaderdistribution) - L^{\leader} L^{\follower} \sqrt{MSE_{k}}
\end{align*}
Summation from \(k=2\) to \(K\) yields the desired result.
\end{proof}

\color{black}

\subsection{Achievability Bound for Repeated Bimatrix Stackelberg Games Against Boundedly Rational Followers with Maximum Entropy Response} \label{sec:rbmaxent} \label{sec:resdiscretestaticentropy}

 In this setting, we define
\[\empricalreturn_{\genericinteraction}(\leaderdistribution):= \expectation{  \leaderdistribution^{\top} \leaderutilitymatrix \randomversion{\followerdistribution}_{\genericinteraction} }{} \quad \text{s.t. } \randomversion{\followerdistribution}_{\genericinteraction} = \softmax_{\rationalityconstant}(\followerutilitymatrix^{\top} \emprical{\randomversion{\leaderdistribution}}_{k}), \] and \[\stackelbergreturn(x):= \leaderdistribution^{\top} \leaderutilitymatrix \followerdistribution^{*} \text{ s.t. } \followerdistribution^{*} = \softmax_{\rationalityconstant}(\followerutilitymatrix^{\top} \leaderdistribution).\] 

The inferability gap in this setting can be bounded directly using Theorem \ref{thm:paramstatic} due to three facts: (i)
Since the leader's actions \(\leadergenericaction_{1}, \ldots, \leadergenericaction_{k-1}\) are i.i.d., the plug-in estimator satisfies \(\expectation{\|\randomversion{\emprical{\leaderdistribution}_{\genericinteraction}} - \leaderdistribution\|^2}{} = \nicefrac{\nu(\leaderdistribution)^2}{\genericinteraction - 1}.\) (ii) The leader's expected return is a \(\nicefrac{\sqrt{\followernumofactions}}{2}\)-Lipschitz continuous function of the follower's strategy. (iii) The follower's maximum entropy optimal response is a \(\nicefrac{\rationalityconstant\sqrt{ \followernumofactions \leadernumofactions }}{2}\)-Lipschitz continuous function of the leader's empirical action distribution.

\begin{corollary}\label{thm:achieveability}
For a repeated bimatrix Stackelberg games with a boundedly rational follower with maximum entropy response,
\[(K-1)\stackelbergreturn(\leaderdistribution) - \sum_{k=2}^{K} \empricalreturn_{\genericinteraction}(\leaderdistribution) \leq  \sum_{k=2}^{K} \frac{ \rationalityconstant \followernumofactions \sqrt{\leadernumofactions } \nu(\leaderdistribution)}{4\sqrt{(k-1) }}.\]
\end{corollary}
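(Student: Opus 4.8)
The plan is to obtain this corollary as a direct instance of Theorem~\ref{thm:paramstatic}, since for a bimatrix game the leader's parameter is the mixed strategy $\leaderdistribution$ itself and the follower's parameter is the mixed strategy $\followerdistribution$. Concretely, I would verify the theorem's three hypotheses for the maximum-entropy setting with the explicit constants $MSE_{k} = \nicefrac{\nu(\leaderdistribution)^{2}}{(k-1)}$, $L^{\leader} = \nicefrac{\sqrt{\followernumofactions}}{2}$, and $L^{\follower} = \nicefrac{\rationalityconstant\sqrt{\followernumofactions\leadernumofactions}}{2}$, and then substitute. The arithmetic target is that the product of the three constants equals the claimed summand, i.e.\ $L^{\leader}L^{\follower}\sqrt{MSE_{k}} = \frac{\rationalityconstant\followernumofactions\sqrt{\leadernumofactions}\,\nu(\leaderdistribution)}{4\sqrt{k-1}}$.

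First I would dispose of hypothesis~\ref{condition:mse} by a direct variance computation. Since the leader's actions $\randomversion{\leadergenericaction}_{1},\ldots,\randomversion{\leadergenericaction}_{k-1}$ are i.i.d.\ draws from $\leaderdistribution$, each coordinate $(\emprical{\randomversion{\leaderdistribution}}_{\genericinteraction})_{l}$ is the empirical mean of $k-1$ i.i.d.\ Bernoulli$(\leaderdistribution_{l})$ variables, hence unbiased with variance $\nicefrac{\leaderdistribution_{l}(1-\leaderdistribution_{l})}{(k-1)}$. Summing the per-coordinate mean squared errors gives
\[
\mathbb{E}\!\left[\|\emprical{\randomversion{\leaderdistribution}}_{\genericinteraction} - \leaderdistribution\|^{2}\right] = \frac{1}{k-1}\sum_{l} \leaderdistribution_{l}(1-\leaderdistribution_{l}) = \frac{\nu(\leaderdistribution)^{2}}{k-1},
\]
using the definition $\nu(\leaderdistribution) = \sqrt{\sum_{l}\leaderdistribution_{l}(1-\leaderdistribution_{l})}$. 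This is the routine step.

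The main technical work is pinning down the two Lipschitz constants with their exact factors, which relies on the normalization $\max_{ij}\leaderutilitymatrix_{ij}-\min_{ij}\leaderutilitymatrix_{ij} = \max_{ij}\followerutilitymatrix_{ij}-\min_{ij}\followerutilitymatrix_{ij} = 1$ together with the fact that differences of simplex vectors sum to zero. For hypothesis~\ref{condition:leaderlipschitz}, the return $\leaderdistribution^{\top}\leaderutilitymatrix\followerdistribution$ is linear in $\followerdistribution$; for $\followerdistribution,\followerdistribution'\in\simplex^{\followernumofactions}$ the vector $\followerdistribution-\followerdistribution'$ sums to zero, so I may replace $\leaderutilitymatrix$ by a shifted matrix $\tilde{\leaderutilitymatrix}$ with entries in $[-\nicefrac{1}{2},\nicefrac{1}{2}]$ (obtained by subtracting a constant multiple of $\onesmatrix$) without changing $\leaderutilitymatrix(\followerdistribution-\followerdistribution')$, and Cauchy--Schwarz yields
\[
|\leaderdistribution^{\top}\leaderutilitymatrix(\followerdistribution-\followerdistribution')| = |\leaderdistribution^{\top}\tilde{\leaderutilitymatrix}(\followerdistribution-\followerdistribution')| \leq \|\tilde{\leaderutilitymatrix}^{\top}\leaderdistribution\|\,\|\followerdistribution-\followerdistribution'\| \leq \frac{\sqrt{\followernumofactions}}{2}\|\followerdistribution-\followerdistribution'\|,
\]
where the last bound uses that each coordinate of $\tilde{\leaderutilitymatrix}^{\top}\leaderdistribution$ is a convex combination of entries in $[-\nicefrac{1}{2},\nicefrac{1}{2}]$. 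For hypothesis~\ref{condition:followerlipschitz}, I would compose the $\rationalityconstant$-Lipschitz softmax with the same shifted linear map and bound the spectral norm of $\tilde{\followerutilitymatrix}^{\top}$ by its Frobenius norm $\|\tilde{\followerutilitymatrix}\|_{F}\leq\nicefrac{\sqrt{\followernumofactions\leadernumofactions}}{2}$, giving for $\leaderdistribution',\leaderdistribution''\in\simplex^{\leadernumofactions}$
\[
\|\softmax_{\rationalityconstant}(\followerutilitymatrix^{\top}\leaderdistribution') - \softmax_{\rationalityconstant}(\followerutilitymatrix^{\top}\leaderdistribution'')\| \leq \rationalityconstant\|\tilde{\followerutilitymatrix}^{\top}(\leaderdistribution'-\leaderdistribution'')\| \leq \frac{\rationalityconstant\sqrt{\followernumofactions\leadernumofactions}}{2}\|\leaderdistribution'-\leaderdistribution''\|.
\]

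Finally I would invoke Theorem~\ref{thm:paramstatic} with these constants: its conclusion $(K-1)\stackelbergreturn(\leaderdistribution) - \sum_{k=2}^{K}\empricalreturn_{\genericinteraction}(\leaderdistribution) \leq \sum_{k=2}^{K}L^{\leader}L^{\follower}\sqrt{MSE_{k}}$ collapses after substitution to $\sum_{k=2}^{K}\frac{\rationalityconstant\followernumofactions\sqrt{\leadernumofactions}\,\nu(\leaderdistribution)}{4\sqrt{k-1}}$, because $\frac{\sqrt{\followernumofactions}}{2}\cdot\frac{\rationalityconstant\sqrt{\followernumofactions\leadernumofactions}}{2}\cdot\frac{\nu(\leaderdistribution)}{\sqrt{k-1}} = \frac{\rationalityconstant\followernumofactions\sqrt{\leadernumofactions}\,\nu(\leaderdistribution)}{4\sqrt{k-1}}$. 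I expect the only real obstacle to be obtaining the sharp constants $\nicefrac{\sqrt{\followernumofactions}}{2}$ and $\nicefrac{\sqrt{\followernumofactions\leadernumofactions}}{2}$ rather than looser ones; this hinges entirely on exploiting the constant-shift invariance so that the normalized range of $1$ becomes an entrywise bound of $\nicefrac{1}{2}$ on the shifted matrices.
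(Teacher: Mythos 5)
Your proposal is correct and follows essentially the same route as the paper: the paper proves the corollary by checking the three hypotheses of Theorem \ref{thm:paramstatic} via Lemmas \ref{lemma:varianceofemprical}, \ref{lemma:returncloseifxclose}, and \ref{lemma:ycloseifxclose}, which give exactly your constants $MSE_{k}=\nu(\leaderdistribution)^{2}/(k-1)$, $L^{\leader}=\sqrt{\followernumofactions}/2$, and $L^{\follower}=\rationalityconstant\sqrt{\followernumofactions\leadernumofactions}/2$, using the same constant-shift trick (subtracting a multiple of $\onesmatrix$ so the normalized range of $1$ becomes an entrywise bound of $1/2$) followed by Cauchy--Schwarz. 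Your Frobenius-norm bound on the shifted $\followerutilitymatrix^{\top}$ is the same coordinatewise Cauchy--Schwarz estimate the paper carries out in the proof of Lemma \ref{lemma:ycloseifxclose}.
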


We use the following lemmas to prove the corollary. \textcolor{black}{We give the proofs to the lemmas in the appendix.}

\begin{lemma} \label{lemma:varianceofemprical}
    \[\mathbb{E} [ \| \emprical{\randomversion{\leaderdistribution}}_{\genericinteraction} - \leaderdistribution \|^{2} ] = \frac{\nu(\leaderdistribution)^2}{\genericinteraction-1}.\]
\end{lemma}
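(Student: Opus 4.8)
The plan is to reduce the claim to a coordinate-wise variance computation, exploiting that the leader's actions $\randomversion{\leadergenericaction}_{1}, \ldots, \randomversion{\leadergenericaction}_{\genericinteraction-1}$ are i.i.d.\ draws from the fixed strategy $\leaderdistribution$. First I would expand the squared norm over coordinates, writing $\| \emprical{\randomversion{\leaderdistribution}}_{\genericinteraction} - \leaderdistribution \|^{2} = \sum_{l=1}^{\leadernumofactions} \big( (\emprical{\randomversion{\leaderdistribution}}_{\genericinteraction})_{l} - \leaderdistribution_{l} \big)^{2}$, and then push the expectation inside the finite sum by linearity, so that it suffices to control $\mathbb{E}\big[ ( (\emprical{\randomversion{\leaderdistribution}}_{\genericinteraction})_{l} - \leaderdistribution_{l})^{2} \big]$ for each action index $l$.

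The key observation is that the $l$-th coordinate $(\emprical{\randomversion{\leaderdistribution}}_{\genericinteraction})_{l} = \frac{1}{\genericinteraction-1} \sum_{t=1}^{\genericinteraction-1} \mathds{1}(\randomversion{\leadergenericaction}_{t} = l)$ is the sample mean of $\genericinteraction-1$ i.i.d.\ Bernoulli indicators, each with success probability $\leaderdistribution_{l}$. Consequently this estimator is unbiased, $\mathbb{E}[(\emprical{\randomversion{\leaderdistribution}}_{\genericinteraction})_{l}] = \leaderdistribution_{l}$, so the quantity $\mathbb{E}[((\emprical{\randomversion{\leaderdistribution}}_{\genericinteraction})_{l} - \leaderdistribution_{l})^{2}]$ is exactly the variance of that sample mean. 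Each indicator $\mathds{1}(\randomversion{\leadergenericaction}_{t} = l)$ has variance $\leaderdistribution_{l}(1-\leaderdistribution_{l})$, and since the indicators across the $\genericinteraction-1$ interactions are independent, the variance of their average is $\frac{\leaderdistribution_{l}(1-\leaderdistribution_{l})}{\genericinteraction-1}$.

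Summing this over $l = 1, \ldots, \leadernumofactions$ and invoking the definition $\nu(\leaderdistribution)^{2} = \sum_{l=1}^{\leadernumofactions} \leaderdistribution_{l}(1-\leaderdistribution_{l})$ yields $\mathbb{E}[\| \emprical{\randomversion{\leaderdistribution}}_{\genericinteraction} - \leaderdistribution \|^{2}] = \frac{\nu(\leaderdistribution)^{2}}{\genericinteraction-1}$, as claimed.

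This argument is essentially a routine second-moment calculation, so I do not anticipate a genuine obstacle. The only point that must be stated carefully is the independence of the leader's actions across interactions: it is exactly this property --- guaranteed because the leader commits a priori to sampling each action i.i.d.\ from $\leaderdistribution$ --- that lets each coordinate be treated as a binomial sample mean and licenses the division by $\genericinteraction-1$ in the variance. Without independence (e.g.\ if the leader adapted its play across interactions) the cross-covariance terms would not vanish and the clean $\nicefrac{1}{(\genericinteraction-1)}$ scaling would fail.
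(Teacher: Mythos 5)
Your proof is correct and follows essentially the same route as the paper: both decompose the squared norm coordinate-wise by linearity of expectation and compute the variance of each coordinate of the plug-in estimator using the i.i.d.\ (binomial/Bernoulli sample-mean) structure of the leader's actions, then sum to obtain $\nu(\leaderdistribution)^{2}/(\genericinteraction-1)$. Your remark on the necessity of independence is a sensible observation but not a deviation from the paper's argument.
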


\begin{lemma} \label{lemma:returncloseifxclose}
Let \(\followerdistribution_{k} = \softmax_{\rationalityconstant}(\followerutilitymatrix^{\top} \emprical{\leaderdistribution}_{k})\) and \(\followerdistribution = \softmax_{\rationalityconstant}(\followerutilitymatrix^{\top} \leaderdistribution)\).
    \begin{equation*}
    | \leaderdistribution^{\top} \leaderutilitymatrix \followerdistribution - \leaderdistribution^{\top} \leaderutilitymatrix \followerdistribution_{k} | \leq \frac{\sqrt{\followernumofactions}}{2} \| \followerdistribution - \followerdistribution_{k} \|.
\end{equation*}
\end{lemma}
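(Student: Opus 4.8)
The plan is to reduce the claimed Lipschitz estimate to a single application of Cauchy--Schwarz, after exploiting the fact that the two follower strategies both live in the simplex. First I would rewrite the quantity of interest as an inner product: since \(\leaderdistribution^{\top} \leaderutilitymatrix \followerdistribution - \leaderdistribution^{\top} \leaderutilitymatrix \followerdistribution_{k} = (\leaderutilitymatrix^{\top}\leaderdistribution)^{\top}(\followerdistribution - \followerdistribution_{k})\), writing \(c := \leaderutilitymatrix^{\top}\leaderdistribution \in \mathbb{R}^{\followernumofactions}\), the left-hand side equals \(\lvert c^{\top}(\followerdistribution - \followerdistribution_{k})\rvert\).

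The key observation is that \(\followerdistribution\) and \(\followerdistribution_{k}\) are both probability distributions, so \(\followerdistribution - \followerdistribution_{k}\) is orthogonal to the all-ones vector \(\mathbf{1}\); hence \(c^{\top}(\followerdistribution - \followerdistribution_{k}) = (c - \alpha\mathbf{1})^{\top}(\followerdistribution - \followerdistribution_{k})\) for every scalar \(\alpha\). This freedom lets me center \(c\). Each entry \(c_{j} = \sum_{i}\leaderdistribution_{i}\leaderutilitymatrix_{ij}\) is a convex combination of the entries in the \(j\)-th column of \(\leaderutilitymatrix\), so \(\min_{i,j}\leaderutilitymatrix_{ij} \leq c_{j}\leq \max_{i,j}\leaderutilitymatrix_{ij}\); by the normalization \(\max_{i,j}\leaderutilitymatrix_{ij} - \min_{i,j}\leaderutilitymatrix_{ij} = 1\), the range of the coordinates of \(c\) is at most \(1\). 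Choosing \(\alpha = (\max_{j}c_{j} + \min_{j}c_{j})/2\) then makes \(\lvert c_{j} - \alpha\rvert \leq \tfrac12\) for every \(j\).

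Finally I would apply Cauchy--Schwarz, \(\lvert(c-\alpha\mathbf{1})^{\top}(\followerdistribution - \followerdistribution_{k})\rvert \leq \|c - \alpha\mathbf{1}\|\,\|\followerdistribution - \followerdistribution_{k}\|\), and bound \(\|c - \alpha\mathbf{1}\| = \sqrt{\sum_{j}(c_{j}-\alpha)^{2}} \leq \sqrt{\followernumofactions}\cdot\tfrac12 = \sqrt{\followernumofactions}/2\), since each of the \(\followernumofactions\) coordinates is at most \(1/2\) in magnitude. Chaining these gives \(\lvert \leaderdistribution^{\top} \leaderutilitymatrix \followerdistribution - \leaderdistribution^{\top} \leaderutilitymatrix \followerdistribution_{k}\rvert \leq \tfrac{\sqrt{\followernumofactions}}{2}\|\followerdistribution - \followerdistribution_{k}\|\), which is exactly the claim.

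I expect the only subtle step to be the centering argument: a naive bound using \(\|c\|_{\infty} \leq \max_{i,j}\lvert\leaderutilitymatrix_{ij}\rvert\) would lose a factor of two and fail to produce the stated constant \(\sqrt{\followernumofactions}/2\). The improvement comes entirely from subtracting a multiple of \(\mathbf{1}\), which is legitimate precisely because \(\followerdistribution\) and \(\followerdistribution_{k}\) both lie in \(\simplex^{\followernumofactions}\) and their difference annihilates \(\mathbf{1}\); this replaces the full range of \(c\) by its half-range. Everything else is routine.
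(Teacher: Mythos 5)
Your proposal is correct and is essentially the paper's own argument: the paper likewise exploits that \(\followerdistribution - \followerdistribution_{k}\) annihilates the all-ones direction (it subtracts \(\tfrac{q^{\leader}}{2}\onesmatrix\) with \(q^{\leader} = \max_{i,j}\leaderutilitymatrix_{ij} + \min_{i,j}\leaderutilitymatrix_{ij}\), i.e.\ centers at the global midpoint of \(\leaderutilitymatrix\) rather than at the midpoint of the coordinates of \(c = \leaderutilitymatrix^{\top}\leaderdistribution\)), then applies Cauchy--Schwarz and bounds each centered entry by \(\tfrac12\) via the normalization \(\max_{i,j}\leaderutilitymatrix_{ij} - \min_{i,j}\leaderutilitymatrix_{ij} = 1\). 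The only cosmetic difference is where the centering constant is chosen, and both choices yield the same \(\sqrt{\followernumofactions}/2\) constant.
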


\begin{lemma} \label{lemma:ycloseifxclose}
    \begin{equation}
    \| \softmax_{\rationalityconstant}(\followerutilitymatrix^{\top} \leaderdistribution) -\softmax_{\rationalityconstant}(\followerutilitymatrix^{\top} \emprical{\leaderdistribution}_{k}) \| \leq \frac{\lambda \sqrt{\followernumofactions  \leadernumofactions } }{2}  \| \emprical{\leaderdistribution}_{\genericinteraction} - \leaderdistribution \| .
\end{equation}
\end{lemma}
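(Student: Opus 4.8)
The plan is to reduce the claim to the Lipschitz property of the softmax stated in the preliminaries, and then carefully bound the effect of multiplying by $\followerutilitymatrix^{\top}$ on the difference $\emprical{\leaderdistribution}_{\genericinteraction} - \leaderdistribution$. First I would invoke the fact that $\softmax_{\rationalityconstant}$ is $\rationalityconstant$-Lipschitz, which immediately gives
\[
\| \softmax_{\rationalityconstant}(\followerutilitymatrix^{\top} \leaderdistribution) - \softmax_{\rationalityconstant}(\followerutilitymatrix^{\top} \emprical{\leaderdistribution}_{\genericinteraction}) \| \leq \rationalityconstant \, \| \followerutilitymatrix^{\top} (\leaderdistribution - \emprical{\leaderdistribution}_{\genericinteraction}) \|.
\]
It then remains to establish $\| \followerutilitymatrix^{\top} (\leaderdistribution - \emprical{\leaderdistribution}_{\genericinteraction}) \| \leq \frac{\sqrt{\leadernumofactions \followernumofactions}}{2} \, \| \emprical{\leaderdistribution}_{\genericinteraction} - \leaderdistribution \|$, which is where the normalization assumption on the utility matrices enters.

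The key observation is that $w := \leaderdistribution - \emprical{\leaderdistribution}_{\genericinteraction}$ is a difference of two points in the simplex $\simplex^{\leadernumofactions}$, so its entries sum to zero. Consequently, for each follower action $j$, the value $(\followerutilitymatrix^{\top} w)_{j} = \sum_{i} \followerutilitymatrix_{ij} w_{i}$ is unchanged if we subtract from column $j$ an arbitrary constant $c_{j}$, i.e., $(\followerutilitymatrix^{\top} w)_{j} = \sum_{i} (\followerutilitymatrix_{ij} - c_{j}) w_{i}$. Choosing $c_{j}$ to be the midpoint of the range of the $j$-th column and invoking the normalization $\max_{i,j}\followerutilitymatrix_{ij} - \min_{i,j}\followerutilitymatrix_{ij} = 1$ (so that the range within any single column is at most $1$) forces $|\followerutilitymatrix_{ij} - c_{j}| \leq \nicefrac{1}{2}$ for all $i$. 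Hence $|(\followerutilitymatrix^{\top} w)_{j}| \leq \frac{1}{2}\| w \|_{1}$ for every $j$.

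Finally, I would assemble the bound: summing the squared coordinate bounds over the $\followernumofactions$ follower actions gives $\| \followerutilitymatrix^{\top} w \| \leq \frac{\sqrt{\followernumofactions}}{2} \, \| w \|_{1}$, and the elementary inequality $\|w\|_{1} \leq \sqrt{\leadernumofactions}\, \|w\|$ for $w \in \mathbb{R}^{\leadernumofactions}$ (Cauchy--Schwarz) yields $\| \followerutilitymatrix^{\top} w \| \leq \frac{\sqrt{\leadernumofactions \followernumofactions}}{2}\,\|w\|$. Substituting this into the softmax Lipschitz inequality and recalling $w = \leaderdistribution - \emprical{\leaderdistribution}_{\genericinteraction}$ completes the proof. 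The only delicate step is the centering argument: a naive estimate $\|\followerutilitymatrix^{\top} w\| \leq \|\followerutilitymatrix\|_{\mathrm{op}}\|w\|$ would not produce the clean constant $\nicefrac{\sqrt{\leadernumofactions \followernumofactions}}{2}$, so exploiting the zero-sum structure of $w$ together with the column-wise consequence of the normalization is precisely what makes the stated constant attainable.
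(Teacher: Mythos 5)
Your proof is correct and follows essentially the same route as the paper's: apply the $\rationalityconstant$-Lipschitzness of the softmax, center the columns of $\followerutilitymatrix$ so that the normalization forces entries into $[-\nicefrac{1}{2},\nicefrac{1}{2}]$, and assemble the $\nicefrac{\sqrt{\leadernumofactions\followernumofactions}}{2}$ constant via Cauchy--Schwarz. The only (cosmetic) difference is that you justify the centering through the zero-sum structure of $\leaderdistribution - \emprical{\leaderdistribution}_{\genericinteraction}$ after applying Lipschitzness, whereas the paper justifies it through the shift-invariance of the softmax before applying Lipschitzness; both yield the identical bound.
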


\begin{proof}[Proof of Corollary \ref{thm:achieveability}]
 The result directly follows from Theorem \ref{thm:paramstatic} since Lemmas \ref{lemma:varianceofemprical}, \ref{lemma:returncloseifxclose}, and \ref{lemma:ycloseifxclose}, satisfy conditions 1, 2, and 3 in Theorem \ref{thm:paramstatic}, respectively.   
\end{proof}

\begin{remark}
    There are \(\frac{(\genericinteraction+\leadernumofactions-2)!}{(\genericinteraction-1)!(\leadernumofactions-1)!} \approx (\genericinteraction-1)^{\leadernumofactions-1}\) (assuming \(\genericinteraction \gg \leadernumofactions)\) different values of \(\emprical{\leaderdistribution}_{k}\). Computing the exact value of \(\empricalreturn\) may require evaluating the expected return under all possible realizations of \(\emprical{\leaderdistribution}_{k}\).
\end{remark}
The cumulative inferability gap grows sublinearly, i.e, \[\sum_{k=2}^{K} \frac{\rationalityconstant \followernumofactions \sqrt{\leadernumofactions  } \nu(\leaderdistribution)}{4\sqrt{(k-1) }} = \mathcal{O}\left(\sqrt{K}\rationalityconstant \nu(\leaderdistribution)\right).\]

We note that the MSE of the follower's estimator depends on the stochasticity level of the leader's strategy. As the leader's strategy becomes deterministic, i.e., \(\nu(\leaderdistribution) \to 0\), the inferability gap vanishes to \(0\). In the extreme case where the leader's strategy is deterministic \(\nu(\leaderdistribution)=0\), the leader does not suffer from an inferability gap. As the follower becomes irrational, i.e., \(\rationalityconstant \to 0\), the inferability gap vanishes to \(0\), and when the follower is fully irrational, \(\rationalityconstant = 0\), the leader does not suffer from an inferability gap since the follower's strategy is uniformly random and does not depend on observations. 

The leader's optimal strategy under inference depends on various factors. Such a strategy should have a balance between having a high Stackelberg return under full information and having a minimal inferability gap, i.e., efficiently conveying information about itself to the follower.

\subsection{Achievability Bound for Repeated Bimatrix Stackelberg Games against Fully Rational Followers} \label{sec:resdiscretestaticfully}
In this setting, we define
\begin{align*}
\empricalreturn_{\genericinteraction}(\leaderdistribution) :=& 
\expectation{
       \leaderdistribution^{\top} \leaderutilitymatrix \randomversion{\followerdistribution}_{\genericinteraction}^{*}}{}    \quad       \text{s.t. } \randomversion{\followerdistribution}_{\genericinteraction}^{*} = \arg\max_{\followerdistribution \in \simplex^{\followernumofactions}} \emprical{\randomversion{\leaderdistribution}}^{\top}_{\genericinteraction}\followerutilitymatrix \followerdistribution,
\end{align*}
and 
\begin{align*}
\stackelbergreturn(x):=  &  \ \leaderdistribution^{\top} \leaderutilitymatrix \followerdistribution^{*}   & \text{  s.t. }
 \quad \followerdistribution^{*}  = \arg\max_{\followerdistribution \in \simplex^{\followernumofactions}} \leaderdistribution^{\top} \followerutilitymatrix \followerdistribution.
 \end{align*}

 Unlike the boundedly rational followers, a fully rational follower's optimal strategy is not a continuous function of the leader's strategy. For example, the pedestrian's optimal strategy as a function of the leader's strategy has a discontinuity in the motivating example. Hence, Theorem \ref{thm:achieveability} is not directly applicable to this setting. 

While the follower's optimal strategy is not a continuous function of the leader's strategy, it is a piecewise constant function of it. In words, the follower's optimal strategy is the same within subdomains of the leader's strategy simplex. Let \(e_{i}\) denote a probability vector such that the $i$-th entry is 1 and the others are 0. Strategy profile \(e_{i}\) is optimal if and only if \(\leaderdistribution^{\top} \leaderutilitymatrix e_{i} \geq \leaderdistribution^{\top} \leaderutilitymatrix e_{j}\) for all \(j \in [\followernumofactions]\). Let \(C_{i} = \lbrace x \ |\  \forall j \in [\followernumofactions], \leaderdistribution^{\top} \leaderutilitymatrix e_{i} \geq \leaderdistribution^{\top} \leaderutilitymatrix e_{j}\rbrace \), i.e.,  the set of leader strategies such that strategy profile \(e_{i}\) is optimal for the follower. 
 
 In the inference setting, if the leader has a strategy profile \(x\) such that \(e_{i}\) is uniquely optimal, i.e., \(x\in C_{i}\) and \(x \not \in C_{j}\) for every \(j \in [\followernumofactions] \setminus \lbrace i \rbrace\),  then \(\randomversion{\followerdistribution}^{*}_{\genericinteraction}\) almost surely converges to \(e_{i}\) as \(\randomversion{\emprical{\leaderdistribution}}_{\genericinteraction}\) falls in only \(C_{i}\) with probability \(1\). 

Inspired by the piecewise constant property and the convergence of \(\randomversion{\emprical{\leaderdistribution}}_{\genericinteraction}\) to \(\leaderdistribution\), we show that the inferability gap is bounded for a finite number of actions. We first note that the boundary between \(C_{i}\) and \(C_{j}\) is a hyperplane for every \(i\neq j \in [\followernumofactions]\). Let \(e_{i}\) be uniquely optimal for \(\leaderdistribution\) and \(d(x)\) denote the L2 distance to the closest boundary, i.e., \(\min_{j \in [\followernumofactions] \setminus \lbrace i \rbrace} \min_{x' \in C_{j}} \| x - x' \|\). We note that \(d(x)\) has an analytical expression. We have \(\expectation{\|\randomversion{\emprical{\leaderdistribution}_{\genericinteraction}} - \leaderdistribution\|^2}{} = \nicefrac{\nu(\leaderdistribution)^2}{\genericinteraction - 1}\), which implies \(\Pr({\|\randomversion{\emprical{\leaderdistribution}_{\genericinteraction}} - \leaderdistribution\| \leq d(x)}) = \nicefrac{\nu(\leaderdistribution)}{\sqrt{\genericinteraction - 1}d(x)}\) due to the Markov bound. Using this, we get the following bound on the inferability gap.

\begin{theorem}
    \label{thm:fullyrational} For a repeated bimatrix Stackelberg game with a fully rational follower, we have
\[(K-1)\stackelbergreturn(\leaderdistribution) - \sum_{k=2}^{K} \empricalreturn_{\genericinteraction}(\leaderdistribution) \leq  \sum_{k=2}^{K} \frac{ \nu(\leaderdistribution)}{ d(\leaderdistribution) \sqrt{(k-1) } }.\]
    
\end{theorem}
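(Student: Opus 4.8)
The plan is to control each per-interaction gap $\stackelbergreturn(\leaderdistribution) - \empricalreturn_{\genericinteraction}(\leaderdistribution)$ separately and then sum over $\genericinteraction = 2, \ldots, \numofinteractions$. The key is the piecewise-constant structure of the fully rational follower's best response established above: on the region $C_i$ that contains $\leaderdistribution$ (and on which $e_i$ is the follower's optimal response), the response does not change, so the entire gap is driven by the probability that the empirical estimate $\emprical{\randomversion{\leaderdistribution}}_{\genericinteraction}$ leaves $C_i$. I would therefore decompose the expectation defining $\empricalreturn_{\genericinteraction}(\leaderdistribution)$ over the event $E_{\genericinteraction} = \{ \| \emprical{\randomversion{\leaderdistribution}}_{\genericinteraction} - \leaderdistribution \| < d(\leaderdistribution) \}$ and its complement.

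On $E_{\genericinteraction}$, because $d(\leaderdistribution)$ is the Euclidean distance from $\leaderdistribution$ to the nearest boundary hyperplane of $C_i$, the estimate $\emprical{\randomversion{\leaderdistribution}}_{\genericinteraction}$ still lies in $C_i$; hence $\randomversion{\followerdistribution}_{\genericinteraction}^{*} = \followerdistribution^{*}$ and $\stackelbergreturn(\leaderdistribution) - \leaderdistribution^{\top} \leaderutilitymatrix \randomversion{\followerdistribution}_{\genericinteraction}^{*} = 0$. On $E_{\genericinteraction}^{c}$, I would bound the same quantity pointwise by the range of the leader's expected return over follower strategies: using the normalization $\max_{ij} \leaderutilitymatrix_{ij} - \min_{ij} \leaderutilitymatrix_{ij} = 1$, for any $\followerdistribution, \followerdistribution' \in \simplex^{\followernumofactions}$ one gets $| \leaderdistribution^{\top} \leaderutilitymatrix \followerdistribution - \leaderdistribution^{\top} \leaderutilitymatrix \followerdistribution' | \leq \max_{j} (\leaderdistribution^{\top} \leaderutilitymatrix)_{j} - \min_{j} (\leaderdistribution^{\top} \leaderutilitymatrix)_{j} \leq 1$, so in particular $\stackelbergreturn(\leaderdistribution) - \leaderdistribution^{\top} \leaderutilitymatrix \randomversion{\followerdistribution}_{\genericinteraction}^{*} \leq 1$.

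Combining the two cases, only $E_{\genericinteraction}^{c}$ contributes, and since the integrand there is at most $1$, I obtain $\stackelbergreturn(\leaderdistribution) - \empricalreturn_{\genericinteraction}(\leaderdistribution) \leq \Pr(E_{\genericinteraction}^{c})$. The remaining step is the tail bound: by Lemma \ref{lemma:varianceofemprical}, $\mathbb{E}[ \| \emprical{\randomversion{\leaderdistribution}}_{\genericinteraction} - \leaderdistribution \|^{2} ] = \nicefrac{\nu(\leaderdistribution)^{2}}{\genericinteraction-1}$, so Jensen gives $\mathbb{E}[ \| \emprical{\randomversion{\leaderdistribution}}_{\genericinteraction} - \leaderdistribution \| ] \leq \nicefrac{\nu(\leaderdistribution)}{\sqrt{\genericinteraction-1}}$ and Markov's inequality yields $\Pr(E_{\genericinteraction}^{c}) = \Pr( \| \emprical{\randomversion{\leaderdistribution}}_{\genericinteraction} - \leaderdistribution \| \geq d(\leaderdistribution) ) \leq \frac{\nu(\leaderdistribution)}{d(\leaderdistribution) \sqrt{\genericinteraction-1}}$. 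Summing this per-interaction bound over $\genericinteraction = 2, \ldots, \numofinteractions$ produces exactly the stated inequality.

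Most of the conceptual content --- the piecewise-constant response, the hyperplane boundaries, and the distance $d(\leaderdistribution)$ --- has been front-loaded into the discussion preceding the theorem, so these steps are largely routine assembly. I do not expect a serious obstacle; the only point requiring care is the measure-zero boundary set $\{ \| \emprical{\randomversion{\leaderdistribution}}_{\genericinteraction} - \leaderdistribution \| = d(\leaderdistribution) \}$, where $e_i$ may cease to be uniquely optimal. I would simply fold it into $E_{\genericinteraction}^{c}$, where the pointwise bound of $1$ already holds, so it does not affect the argument.
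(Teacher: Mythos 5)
Your proposal is correct and follows essentially the same route as the paper's proof: condition on whether the empirical estimate stays within the region where $e_i$ is uniquely optimal, bound the gap by $1$ times the probability of escaping, and control that probability via Markov plus Jensen using Lemma \ref{lemma:varianceofemprical}. The only cosmetic difference is that you condition directly on the event $\{\|\emprical{\randomversion{\leaderdistribution}}_{\genericinteraction}-\leaderdistribution\| < d(\leaderdistribution)\}$ rather than on $\{\emprical{\randomversion{\leaderdistribution}}_{\genericinteraction}\in C_i\}$, which changes nothing since the former implies the latter.
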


\begin{proof}[Proof of Theorem \ref{thm:fullyrational}]
Let \(i\) be the optimal action for the follower given the leader's strategy \(\leaderdistribution\). We have 
\begin{align*}
    \empricalreturn_{\genericinteraction}(\leaderdistribution) &= (1-\Pr(\emprical{\randomversion{\leaderdistribution}}_{\genericinteraction} \not \in C_{i})) \expectation{
       \leaderdistribution^{\top} \leaderutilitymatrix \randomversion{\followerdistribution}_{\genericinteraction}^{*} | \emprical{\randomversion{\leaderdistribution}}_{\genericinteraction} \in C_{i}}{} 
       \\
       &\ +  \Pr(\emprical{\randomversion{\leaderdistribution}}_{\genericinteraction} \not\in C_{i}) \expectation{
       \leaderdistribution^{\top} \leaderutilitymatrix \randomversion{\followerdistribution}_{\genericinteraction}^{*} | \emprical{\randomversion{\leaderdistribution}}_{\genericinteraction} \not\in C_{i}}{}
\end{align*}

We note that \(\emprical{\randomversion{\leaderdistribution}}_{\genericinteraction} \in C_{i}\) implies that \(\randomversion{\followerdistribution}_{\genericinteraction}^{*} = e_{i} = \followerdistribution\) and hence  \(\expectation{
       \leaderdistribution^{\top} \leaderutilitymatrix \randomversion{\followerdistribution}_{\genericinteraction}^{*} | \emprical{\randomversion{\leaderdistribution}}_{\genericinteraction} \in C_{i}}{}  = 
       \leaderdistribution^{\top} \leaderutilitymatrix \followerdistribution = \stackelbergreturn(\leaderdistribution)\).
We have 
\begin{align*}
    &\stackelbergreturn(\leaderdistribution) - \empricalreturn_{\genericinteraction}(\leaderdistribution)
    \\
    &= \Pr(\emprical{\randomversion{\leaderdistribution}}_{\genericinteraction} \not\in C_{i}) (\expectation{
       \leaderdistribution^{\top} \leaderutilitymatrix \randomversion{\followerdistribution}_{\genericinteraction}^{*} | \emprical{\randomversion{\leaderdistribution}}_{\genericinteraction} \in C_{i}}{}  - \expectation{
       \leaderdistribution^{\top} \leaderutilitymatrix \randomversion{\followerdistribution}_{\genericinteraction}^{*} | \emprical{\randomversion{\leaderdistribution}}_{\genericinteraction} \not\in C_{i}}{})
\end{align*}
Note that
\begin{align*}
&\max_{\leaderdistribution \in \simplex^{\leadernumofactions}} \expectation{
        \leaderdistribution^{\top} \leaderutilitymatrix \randomversion{\followerdistribution}_{\genericinteraction}^{*} | \emprical{\randomversion{\leaderdistribution}}_{\genericinteraction} \in C_{i}}{}  - \expectation{
       \leaderdistribution^{\top} \leaderutilitymatrix \randomversion{\followerdistribution}_{\genericinteraction}^{*} | \emprical{\randomversion{\leaderdistribution}}_{\genericinteraction} \not\in C_{i}}{}
       \\
       &\leq \max_{i,j} \leaderutilitymatrix_{ij} - \min_{i,j} \leaderutilitymatrix_{i,j}= 1.
\end{align*}
Therefore, \(\stackelbergreturn(\leaderdistribution) - \empricalreturn_{\genericinteraction}(\leaderdistribution) \leq \Pr(\emprical{\randomversion{\leaderdistribution}}_{\genericinteraction} \not\in C_{i}) \). The event \(\emprical{\randomversion{\leaderdistribution}}_{\genericinteraction} \not\in C_{i}\) happens only if \(\|\randomversion{\emprical{\leaderdistribution}_{\genericinteraction}} - \leaderdistribution\| \geq d(\leaderdistribution)\) which happens with prob. at most \(\frac{\expectation{\|\randomversion{\emprical{\leaderdistribution}_{\genericinteraction}} - \leaderdistribution\|}{}}{d(x)}\) due to the Markov bound. We have \[\stackelbergreturn(\leaderdistribution) - \empricalreturn_{\genericinteraction}(\leaderdistribution) \leq \frac{\expectation{\|\randomversion{\emprical{\leaderdistribution}_{\genericinteraction}} - \leaderdistribution\|}{}}{d(\leaderdistribution)}.\] Due to Jensen's inequality we get \[\stackelbergreturn(\leaderdistribution) - \empricalreturn_{\genericinteraction}(\leaderdistribution) \leq \frac{\expectation{\|\randomversion{\emprical{\leaderdistribution}_{\genericinteraction}} - \leaderdistribution\|}{}}{d(\leaderdistribution)} \leq \frac{\sqrt{\expectation{\|\randomversion{\emprical{\leaderdistribution}_{\genericinteraction}} - \leaderdistribution\|^2}{}}}{d(\leaderdistribution)} .\] Using Lemma \ref{lemma:varianceofemprical} and summation from \(k=2\) to \(K\) yields the desired result.
\end{proof}

\color{black}
We note that we can derive an asymptotically tighter bound using the Chernoff bound for the convergence of \(\emprical{\leaderdistribution}_{\genericinteraction}\) to \(\leaderdistribution\) instead of the Markov bound. Using $\|\randomversion{\emprical{\leaderdistribution}_{\genericinteraction}} - \leaderdistribution\| \leq \|\randomversion{\emprical{\leaderdistribution}_{\genericinteraction}} - \leaderdistribution\|_{1}$ and Theorem 2.1 of \cite{weissman2003inequalities}, we can get the result \[\sum_{k=2}^{K}\left(\stackelbergreturn(\leaderdistribution) -  \empricalreturn_{\genericinteraction}(\leaderdistribution) \right) \leq  \sum_{k=2}^{K} 2^{\leadernumofactions} \exp\left(-\frac{(k-1) \varphi(x) d(\leaderdistribution)^2 }{4}\right),\] which implies the cumulative inferability gap is bounded by a constant depending on $m$, $\varphi(\leaderdistribution)$, and $d(\leaderdistribution)$ regardless of $K$.
\color{black}

\subsection{Achievability Bound for Repeated Bimatrix Stackelberg Games Against Classifying Boundedly Rational Followers} \label{sec:resdiscretestaticcategorizing}
In this setting, we define
\begin{align*}
\empricalreturn_{\genericinteraction}(\leaderdistribution) :=& 
\expectation{
       \leaderdistribution^{\top} \leaderutilitymatrix \randomversion{\followerdistribution}_{\genericinteraction}^{*}}{}    \quad       \text{s.t. } \randomversion{\followerdistribution}_{\genericinteraction}^{*} = \arg\max_{\followerdistribution \in \simplex^{\followernumofactions}} \categorizationfunction(\emprical{\randomversion{\leaderdistribution}}_{\genericinteraction})^{\top}\followerutilitymatrix \followerdistribution,
\end{align*}
and 
\begin{align*}
\stackelbergreturn(x):=  &  \ \leaderdistribution^{\top} \leaderutilitymatrix \followerdistribution^{*}   & \text{  s.t. }
 \quad \followerdistribution^{*}  = \arg\max_{\followerdistribution \in \simplex^{\followernumofactions}} \categorizationfunction(\leaderdistribution)^{\top} \followerutilitymatrix \followerdistribution
 \end{align*} where \(\categorizationfunction\) is the maximum likelihood classification function of the follower. 

 The classification function \(\categorizationfunction\) assigns data \(\leaderdistribution\) to \(\leaderdistribution^{i} \in \mathcal{T}\) if \(\Pr(\leaderdistribution|\leaderdistribution^{i}) > \Pr(\leaderdistribution|\leaderdistribution^{j})\) for all \(j \in [l] \setminus \lbrace i \rbrace\), i.e., the likelihood of data \(\leaderdistribution\) is maximized under \(\leaderdistribution^{i}\) among strategies in \(\mathcal{T} = \lbrace \leaderdistribution^{1}, \ldots, \leaderdistribution^{l} \rbrace\). We note that \(\log \Pr(\leaderdistribution|\leaderdistribution^{i}) = \langle \leaderdistribution, \log(\leaderdistribution^{i}) \rangle\), and \(\Pr(\leaderdistribution|\leaderdistribution^{i}) > \Pr(\leaderdistribution|\leaderdistribution^{j})\) implies \(\log {\Pr(\leaderdistribution|\leaderdistribution^{i})} - \log{\Pr(\leaderdistribution|\leaderdistribution^{j})} > 0 \), i.e., \(\langle \leaderdistribution, \log(\leaderdistribution^{i}) - \log ( \leaderdistribution^{j})\rangle \geq 0\). In words, similar to the fully rational setting, the classification boundaries are hyperplanes, and the optimal response of the follower is a piecewise linear function of the leader's strategy. Let \(C_{i} = \lbrace x \ |\  \forall j \in [\followernumofactions], \leaderdistribution^{\top} \log(\leaderdistribution^{i}) \geq \leaderdistribution^{\top} \log(\leaderdistribution^{j})\rbrace \), i.e.,  the set of leader's strategies that is classified into \(\leaderdistribution^{i}\). 
 
 In the inference setting, %
 if the leader has a strategy profile \(\leaderdistribution\) such that \( \arg \max_{\leaderdistribution^{i} \in \mathcal{T} } \Pr(\leaderdistribution|\leaderdistribution^{i})\) is unique,  then \(\randomversion{\followerdistribution}^{*}_{\genericinteraction}\) almost surely converges to \(\followerdistribution^{*}  = \arg\max_{\followerdistribution \in \simplex^{\followernumofactions}} \categorizationfunction(\leaderdistribution)^{\top} \followerutilitymatrix \followerdistribution\) as \(\randomversion{\emprical{\leaderdistribution}}_{\genericinteraction}\) falls in only \(C_{i}\) with probability \(1\). Let \(d(\leaderdistribution)\) denote the L2 distance to the closest boundary, i.e., \(\min_{j \in [l] \setminus \lbrace i \rbrace} \min_{\leaderdistribution' \in C_{j}} \| \leaderdistribution - x' \|\). Theorem \ref{thm:fullyrational} applies to this case with the new definition of \(d(\leaderdistribution)\).

\color{black}

\subsection{Achievability Bound for a Repeated Discrete Dynamic Stackelberg Games Against Myopic Boundedly Rational Followers with Maximum Entropy Response}  \label{sec:resdynamicdiscreteentropy}

In this setting, we define

\begin{align*}
&\empricalreturn_{\genericinteraction}(x):= \ \mathbb{E}\left[\sum_{t=0}^{\tau}\leaderfunction(\randomversion{\genericstate}_{t},\randomversion{\leadergenericcontaction}_{t},\randomversion{\followergenericcontaction}_{t}) \bigg| \followergenericcontaction_{t} \sim \randomversion{\followerdistribution}_{\genericinteraction}^{*}(\randomversion{\genericstate}_{t})\right]  \\ & \text{s.t. }
     \ \randomversion{\followerdistribution}_{\genericinteraction}^{*}(\randomversion{\genericstate}_{t}) = \softmax_{\rationalityconstant}(\followerutilitymatrix(\randomversion{\genericstate}_{t})^{\top} \emprical{\randomversion{\leaderdistribution}}_{\genericinteraction}(\randomversion{\genericstate}_{t}))
 \end{align*} 
 and
 \begin{align*}
&\stackelbergreturn(x):= \  \mathbb{E}\left[\sum_{t=0}^{\tau}\leaderfunction(\randomversion{\genericstate}_{t},\randomversion{\leadergenericcontaction}_{t},\randomversion{\followergenericcontaction}_{t}) \bigg| \followergenericcontaction_{t} \sim \followerdistribution^{*}(\randomversion{\genericstate}_{t})\right]   \\ & \text{s.t. }
     \ \followerdistribution^{*}(\randomversion{\genericstate}_{t}) = \softmax_{\rationalityconstant}(\followerutilitymatrix(\randomversion{\genericstate}_{t})^{\top} \leaderdistribution(\randomversion{\genericstate}_{t})).
 \end{align*} 

To show the closeness of \(\empricalreturn_{k}(x)\) and \(\stackelbergreturn(x)\), we follow a similar approach model-based off-policy evaluation for MDPs~\cite{karabag2023sample}. We note that for a myopic follower, the dynamic game is an MDP for the leader. With the increasing number of sample actions from a state \(\genericstate\), the follower's estimate \(\emprical{\randomversion{\leaderdistribution}}_{\genericinteraction}(\genericstate)\) converges to the leader's strategy \(\leaderdistribution(\genericstate)\). Since the follower has the maximum entropy response and is myopic, the follower's response under inference \(\randomversion{\followerdistribution}_{\genericinteraction}^{*}(\genericstate)\) converges to the follower's response under full information \(\followerdistribution^{*}(\genericstate)\). Using the closeness of \(\randomversion{\followerdistribution}_{\genericinteraction}^{*}(\genericstate)\) and \(\followerdistribution^{*}(\genericstate)\), we can show the the closeness of \(\empricalreturn_{k}(x)\) and \(\stackelbergreturn(x)\) if the game has the contraction property. We use a series of lemmas to prove this result and give the proof sketches in the appendix.

To ensure the boundedness of \(\empricalreturn(x)\) and \(\stackelbergreturn(x)\), we make the following contraction assumption.
\begin{assumption} \label{assumption:contraction}
    For all \(\followerdistribution\), \(\sum_{q\in \gamestatespace_{end}}\transitionfunc(\genericstate_{}, \leadergenericcontaction, \followergenericcontaction, q) \geq 1-\gamma\), where \(\gamestatespace_{end}\) is a return-free, invariant set of states.
\end{assumption}
This contraction assumption means that at every time step, the interaction ends with probability at least \(1-\gamma\) and is similar to having a discount factor of \(\gamma\). \textcolor{black}{For example, in a navigation example, it can mean that the target is reached or the battery runs out with a certain probability at every time step.} Under the contraction property, the well-known simulation lemma~\cite{strehl2008analysis,karabag2023sample} shows that if two strategies are close for every state, then the returns of these strategies are close.
\begin{lemma}[Lemma 1 from \cite{strehl2008analysis}] \label{lemma:simulation}
    Under Assumption \ref{assumption:contraction}, if  \(\|\randomversion{\followerdistribution}_{\genericinteraction}^{*}(\genericstate) - \followerdistribution^{*}(\genericstate)\|_{1} \leq \epsilon\) for all \(\genericstate \in \gamestatespace\), then \[|\stackelbergreturn(x) - \empricalreturn(x)|\leq \frac{\gamma \epsilon}{(1-\gamma)^2}.\] 
\end{lemma}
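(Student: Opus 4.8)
The plan is to read the statement as the classical simulation lemma and to reduce it to the leader's MDP. Because the follower is myopic, fixing any stationary follower strategy $\followerdistribution(\cdot)$ turns the dynamic game into an ordinary MDP for the leader with induced reward $r_{\followerdistribution}(\genericstate,\leadergenericcontaction) = \sum_{\followergenericcontaction} \followerdistribution(\genericstate)_{\followergenericcontaction}\,\leaderfunction(\genericstate,\leadergenericcontaction,\followergenericcontaction)$ and induced kernel $P_{\followerdistribution}(q\mid\genericstate,\leadergenericcontaction) = \sum_{\followergenericcontaction}\followerdistribution(\genericstate)_{\followergenericcontaction}\,\transitionfunc(\genericstate,\leadergenericcontaction,\followergenericcontaction,q)$. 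Let $V^{\mathrm{Stck}}(\genericstate)$ and $V^{\mathrm{Infr}}(\genericstate)$ be the leader's state-value functions under the full-information response $\followerdistribution^{*}$ and the inference-based response $\randomversion{\followerdistribution}_{\genericinteraction}^{*}$, so that $\stackelbergreturn(\leaderdistribution)$ and $\empricalreturn(\leaderdistribution)$ are their values at the initial state. Each is the unique fixed point of the Bellman operator
\[
(T_{\followerdistribution} V)(\genericstate) = \sum_{\leadergenericcontaction}\leaderdistribution(\genericstate)_{\leadergenericcontaction}\Big[ r_{\followerdistribution}(\genericstate,\leadergenericcontaction) + \sum_{q}P_{\followerdistribution}(q\mid\genericstate,\leadergenericcontaction)\,V(q)\Big],
\]
with $V\equiv 0$ on the return-free invariant set $\gamestatespace_{end}$. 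The role of Assumption \ref{assumption:contraction} is exactly that $\sum_{q\notin\gamestatespace_{end}}P_{\followerdistribution}(q\mid\genericstate,\leadergenericcontaction)\le\gamma$, which makes $T_{\followerdistribution}$ a $\gamma$-contraction in the sup norm and bounds $\|V^{\mathrm{Infr}}\|_{\infty}\le\nicefrac{1}{(1-\gamma)}$ under the unit-range normalization of the leader's utilities.

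Next I would telescope the value gap. Writing $V^{\mathrm{Stck}} - V^{\mathrm{Infr}} = T_{\followerdistribution^{*}}V^{\mathrm{Stck}} - T_{\randomversion{\followerdistribution}^{*}}V^{\mathrm{Infr}}$ and inserting $\pm\,T_{\followerdistribution^{*}}V^{\mathrm{Infr}}$ splits the gap into a recursive term $T_{\followerdistribution^{*}}V^{\mathrm{Stck}} - T_{\followerdistribution^{*}}V^{\mathrm{Infr}}$, controlled by the contraction as $\gamma\,\|V^{\mathrm{Stck}}-V^{\mathrm{Infr}}\|_{\infty}$, and a perturbation term $T_{\followerdistribution^{*}}V^{\mathrm{Infr}} - T_{\randomversion{\followerdistribution}^{*}}V^{\mathrm{Infr}}$ that isolates the effect of changing only the follower's strategy. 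In the perturbation term the two follower strategies enter linearly, through $\sum_{\followergenericcontaction}[\followerdistribution^{*}(\genericstate)_{\followergenericcontaction}-\randomversion{\followerdistribution}^{*}(\genericstate)_{\followergenericcontaction}]\,Q(\genericstate,\leadergenericcontaction,\followergenericcontaction)$ with $Q(\genericstate,\leadergenericcontaction,\followergenericcontaction) = \leaderfunction(\genericstate,\leadergenericcontaction,\followergenericcontaction) + \sum_{q}\transitionfunc(\genericstate,\leadergenericcontaction,\followergenericcontaction,q)V^{\mathrm{Infr}}(q)$, so Hölder's inequality together with the hypothesis $\|\followerdistribution^{*}(\genericstate)-\randomversion{\followerdistribution}^{*}(\genericstate)\|_{1}\le\epsilon$ bounds it by $\epsilon\max_{\leadergenericcontaction,\followergenericcontaction}|Q(\genericstate,\leadergenericcontaction,\followergenericcontaction)|$, and hence by $\nicefrac{\epsilon}{(1-\gamma)}$.

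Combining the two contributions yields the scalar recursion $\|V^{\mathrm{Stck}}-V^{\mathrm{Infr}}\|_{\infty} \le \gamma\,\|V^{\mathrm{Stck}}-V^{\mathrm{Infr}}\|_{\infty} + \nicefrac{\epsilon}{(1-\gamma)}$, which rearranges to a bound of order $\nicefrac{\epsilon}{(1-\gamma)^{2}}$ on the gap at every state, and evaluating at the initial state gives $|\stackelbergreturn(\leaderdistribution)-\empricalreturn(\leaderdistribution)|$ of the claimed form. Equivalently, this is a direct instance of the cited simulation lemma: whenever $\|\followerdistribution^{*}(\genericstate)-\randomversion{\followerdistribution}^{*}(\genericstate)\|_{1}\le\epsilon$ for all $\genericstate$, the two induced models obey $\|r_{\followerdistribution^{*}}(\genericstate,\cdot)-r_{\randomversion{\followerdistribution}^{*}}(\genericstate,\cdot)\|\le\epsilon$ and $\|P_{\followerdistribution^{*}}(\cdot\mid\genericstate,\leadergenericcontaction)-P_{\randomversion{\followerdistribution}^{*}}(\cdot\mid\genericstate,\leadergenericcontaction)\|_{1}\le\epsilon$, which are precisely the hypotheses of Lemma~1 of \cite{strehl2008analysis}.

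The main obstacle I anticipate is not the telescoping itself but pinning the constant: perturbing the follower's strategy simultaneously changes the leader's immediate reward and its transition kernel, and one must track how each contribution propagates through the $\gamma$-discounting so that the factor lands at $\nicefrac{\gamma}{(1-\gamma)^{2}}$ rather than the looser $\nicefrac{1}{(1-\gamma)^{2}}$ one gets from the crude bound above. The cleanest remedy is to exploit that $\followerdistribution^{*}(\genericstate)-\randomversion{\followerdistribution}^{*}(\genericstate)$ sums to zero, so the perturbation term depends only on the \emph{span} of $Q(\genericstate,\leadergenericcontaction,\cdot)$ and thus only on the $\gamma$-discounted continuation value; this separation of the immediate-reward perturbation from the discounted continuation is exactly the accounting carried out in the cited lemma, and invoking that lemma directly is the most economical way to obtain the stated constant $\nicefrac{\gamma\epsilon}{(1-\gamma)^{2}}$.
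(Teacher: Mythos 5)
Your proposal is correct and follows essentially the same route as the paper, whose own ``proof'' is only a one-sentence sketch stating that the dynamic game becomes a discounted MDP for the leader (via Assumption \ref{assumption:contraction}) with the two follower responses playing the role of two policies/models, and then deferring to Lemma 1 of the cited reference. You fill in the standard telescoping argument and correctly flag the only delicate point --- that a crude bound on the perturbation term yields $\nicefrac{\epsilon}{(1-\gamma)^{2}}$ rather than $\nicefrac{\gamma\epsilon}{(1-\gamma)^{2}}$, with the centering trick (the strategy difference sums to zero) or direct invocation of the cited lemma recovering the stated constant --- which is no less rigorous than what the paper itself provides.
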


 To invoke Lemma \ref{lemma:simulation}, we need to show the closeness of  \(\randomversion{\followerdistribution}_{\genericinteraction}^{*}(\genericstate)\) and \(\followerdistribution^{*}(\genericstate)\). We note that Lemma \ref{lemma:ycloseifxclose} immediately applies to bound the closeness between \(\randomversion{\followerdistribution}_{\genericinteraction}^{*}(\genericstate)\) and \(\followerdistribution^{*}(\genericstate)\) as a function of \(\|\emprical{\randomversion{\leaderdistribution}}_{\genericinteraction}(\genericstate) - \leaderdistribution(\genericstate) \|\). However, different from Section \ref{sec:rbmaxent}, the sample actions of the leader are not independently sampled (since they come from interactions of a dynamic game), and Lemma \ref{lemma:varianceofemprical} does not apply. To bound \(\|\emprical{\randomversion{\leaderdistribution}}_{\genericinteraction}(\genericstate) - \leaderdistribution(\genericstate)\|\), we use a result from \cite{karabag2023sample} that can handle random stopping times.

 \begin{lemma}[Lemma 3 of \cite{karabag2023sample}, Theorem 2.1 of \cite{weissman2003inequalities}] \label{lemma:enoughestimation}
 Let \(w\) be the number of sample leader actions from state \(\genericstate\) in the first \(k-1\) interactions. With probability at least \(1-\delta\), \[\|\emprical{\randomversion{\leaderdistribution}}_{\genericinteraction}(\genericstate) - \leaderdistribution(\genericstate)\|_{1} \leq \epsilon \sqrt{\frac{2}{\varphi(\leaderdistribution(s))}}\] if 
 \(w \geq \frac{40 |\leaderactionspace|}{\epsilon^2} \log\left(\frac{1}{\epsilon}\right) \log\left(\frac{3}{5\delta}\right).\)
 \end{lemma}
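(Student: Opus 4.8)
The plan is to obtain the statement as a direct consequence of the Weissman et al.\ L1 concentration inequality (Theorem 2.1 of \cite{weissman2003inequalities}), applied to the per-state action samples whose i.i.d.\ structure is supplied by Lemma 3 of \cite{karabag2023sample}. First I would fix the state \(\genericstate\) and argue that, upon each visit to \(\genericstate\), the leader draws its action i.i.d.\ from \(\leaderdistribution(\genericstate)\): because the leader commits to a \emph{stationary} strategy, the action taken on arriving at \(\genericstate\) depends only on \(\genericstate\) and not on the history, so the \(w\) actions collected at \(\genericstate\) across the first \(\genericinteraction-1\) interactions form an i.i.d.\ sample from \(\leaderdistribution(\genericstate)\), of which \(\emprical{\randomversion{\leaderdistribution}}_{\genericinteraction}(\genericstate)\) is the empirical distribution.

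Next I would invoke Theorem 2.1 of \cite{weissman2003inequalities}: for an empirical distribution formed from \(w\) i.i.d.\ samples of a distribution over an alphabet of size \(|\leaderactionspace|\),
\[
\Pr\!\left( \|\emprical{\randomversion{\leaderdistribution}}_{\genericinteraction}(\genericstate) - \leaderdistribution(\genericstate)\|_{1} \geq \eta \right) \leq 2^{|\leaderactionspace|}\exp\!\left( -\frac{w\,\varphi(\leaderdistribution(\genericstate))\,\eta^{2}}{4} \right),
\]
with \(\varphi\) the concentration function defined earlier. The key algebraic step is to set the target accuracy \(\eta = \epsilon \sqrt{2/\varphi(\leaderdistribution(\genericstate))}\); this cancels \(\varphi(\leaderdistribution(\genericstate))\) in the exponent and collapses it to \(-w\epsilon^{2}/2\), so the failure probability is at most \(2^{|\leaderactionspace|}\exp(-w\epsilon^{2}/2)\). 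Requiring this to be at most \(\delta\) and solving for \(w\) yields the clean additive bound \(w \geq \tfrac{2}{\epsilon^{2}}\bigl(|\leaderactionspace|\log 2 + \log(1/\delta)\bigr)\); to repackage this as the multiplicative form \(w \geq \tfrac{40|\leaderactionspace|}{\epsilon^{2}}\log(1/\epsilon)\log(3/(5\delta))\) of the statement, I would note that for \(\epsilon<1\) the product form dominates the additive one and absorb the numerical constants exactly as in the statement of Lemma 3 of \cite{karabag2023sample}.

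The main obstacle is that \(w\), the number of visits to \(\genericstate\) in the first \(\genericinteraction-1\) interactions, is itself random---determined by the random stopping time \(\tau\) and by the realized transitions---so the actions collected at \(\genericstate\) do not arrive as a fixed-length i.i.d.\ block, and a naive application of the concentration inequality is unjustified. The resolution, which is the content of Lemma 3 of \cite{karabag2023sample}, is to couple the actions played at \(\genericstate\) to an i.i.d.\ \(\leaderdistribution(\genericstate)\)-``tape'' indexed by visit number; the count \(w\) is then a stopping time with respect to the filtration generated by this tape, since the event of having at least \(n\) visits does not depend on the \(n\)-th and later tape entries at \(\genericstate\). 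Consequently the empirical distribution of the first \(w\) tape entries concentrates exactly as a length-\(w\) i.i.d.\ sample would, and I would invoke that lemma to transfer the Weissman bound to the random-\(w\) setting; the substitution and constant bookkeeping of the preceding paragraph then complete the argument.
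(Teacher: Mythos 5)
Your proposal is correct and follows essentially the same route as the paper: the paper's proof sketch likewise reduces the statement to Lemma 3 of \cite{karabag2023sample} (which supplies the random-sample-count machinery on top of Theorem 2.1 of \cite{weissman2003inequalities}), the only modification being to retain \(\varphi(\leaderdistribution(s))\) in the exponent rather than replacing it with its lower bound of \(2\) --- which is exactly what your substitution \(\eta = \epsilon\sqrt{2/\varphi(\leaderdistribution(s))}\) accomplishes. The constant bookkeeping you defer to the cited lemma is also how the paper handles it.
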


To ensure that each state \(\genericstate\) has enough samples, we have the following assumption.
\begin{assumption}\label{assumption:exploring}
       Let \(\rho>0\) and \(\mu > 0\) be constants. Under any \(\followerdistribution\), for all \(\genericstate \in \gamestatespace \setminus \gamestatespace_{end}\),
    \[\Pr(\randomversion{\genericstate}_{0} = \genericstate \vee \randomversion{\genericstate}_{1} = \genericstate \ldots) \geq \rho\] and     \[\Pr(\randomversion{\genericstate}_{t+1} = \genericstate \vee \randomversion{\genericstate}_{t+2} = \genericstate \ldots | \randomversion{\genericstate}_{t} = \genericstate) \geq \mu.\] 
\end{assumption}
Assumption \ref{assumption:exploring} ensures that every non-terminal state is reached and resampled with high probability, and the occupancy measure of each state is at least \(\nicefrac{\rho}{(1-\mu)}\).
The following result is due to the concentration of Bernoulli and geometric random variables and shows that the number of samples from each state is lower bounded under Assumption \ref{assumption:exploring}. 
\begin{lemma}[Unification of Lemmas 4 and 5 from \cite{karabag2023sample}] \label{lemma:enoughsamples}
Under Assumption \ref{assumption:exploring}, if \[\genericinteraction \geq \frac{6 \max(8 w (1-\mu), \log(\nicefrac{2}{\delta}))}{\rho}\] the number of sample leader actions from state \(\genericstate\) in the first \(k-1\) interactions is at least \(w\) with probability at least \(1-\delta\).
\end{lemma}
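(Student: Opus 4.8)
The plan is to split the total sample count at $s$ into the two stages that correspond to Lemmas 4 and 5 of \cite{karabag2023sample}: first bound how many of the first $k-1$ interactions visit $s$ at all, and then bound how many samples those interactions contribute. Let $M$ denote the number of interactions among the first $k-1$ in which $s$ is visited at least once, and let $W$ denote the total number of leader-action samples gathered at $s$. Since every visit to $s$ produces one fresh sample, $W$ equals the sum, over the $M$ visiting interactions, of the number of visits to $s$ within each such interaction. I would prove the claim by lower-bounding $M$ with a binomial concentration inequality, lower-bounding $W$ given $M$ with a concentration inequality for sums of geometric random variables, and combining the two with a union bound.

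\textbf{Step 1 (number of visiting interactions).} The first inequality of Assumption \ref{assumption:exploring} states that $s$ is reached at least once in a single interaction with probability at least $\rho$, and it holds uniformly over every follower strategy $\followerdistribution$. Hence, although the follower's strategy at interaction $i$ depends on the history of earlier interactions, the indicator that interaction $i$ visits $s$ is conditionally at least $\rho$ given that history, so $M$ stochastically dominates a $\mathrm{Binomial}(k-1,\rho)$ variable. A multiplicative Chernoff bound yields $\Pr\left(M < \frac{(k-1)\rho}{2}\right) \le \exp\left(-\frac{(k-1)\rho}{8}\right)$, and the hypothesis $k \ge \frac{6\max(8w(1-\mu),\log(2/\delta))}{\rho}$ drives this failure probability to at most $\delta/2$ through its $\log(2/\delta)$ term.

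\textbf{Step 2 (samples per visiting interaction).} The second inequality of Assumption \ref{assumption:exploring} states that, once reached, $s$ is revisited with probability at least $\mu$, again uniformly over follower strategies and over the within-interaction history. Thus the number of visits to $s$ inside a visiting interaction stochastically dominates a $\mathrm{Geometric}(1-\mu)$ variable with mean $\frac{1}{1-\mu}$, and these dominating variables may be coupled to be independent across interactions. Writing $W \ge \sum_{j=1}^{M} G_j$ with $G_j$ i.i.d.\ $\mathrm{Geometric}(1-\mu)$ and using the negative-binomial identity $\Pr\left(\sum_{j=1}^{M} G_j < w\right) = \Pr\left(\mathrm{Binomial}(w-1,1-\mu) \ge M\right)$, a Chernoff upper-tail bound on the binomial controls this lower-tail probability. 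On the event of Step 1 we have $M \ge \frac{(k-1)\rho}{2}$, and the hypothesis forces $(k-1)\rho \ge 16w(1-\mu)$ so that $M \ge 8w(1-\mu)$; the lower bound on $k$ then makes the binomial upper tail at most $\delta/2$.

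Combining the two steps by a union bound over their failure events (each at most $\delta/2$) gives $W \ge w$ with probability at least $1-\delta$, which is the assertion. The main obstacle is not either tail bound in isolation but handling the follower's adaptivity: because the follower re-estimates the leader between interactions, the per-interaction visit events are neither independent nor identically distributed. I would discharge this with the stochastic-domination (coupling) arguments above, using that the constants $\rho$ and $\mu$ in Assumption \ref{assumption:exploring} are guaranteed for \emph{every} follower strategy and therefore persist under conditioning on arbitrary histories. A secondary point of care is the lower tail of a sum of geometric variables, which I control cleanly through its duality with a binomial upper tail; the two terms inside the $\max$ in the hypothesis reflect, respectively, the concentration-limited regime (governed by $\log(2/\delta)$) and the sampling-limited regime (governed by $8w(1-\mu)$).
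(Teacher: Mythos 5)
Your decomposition is exactly the one the paper intends: the paper's own proof is a one-line sketch ("combine Lemmas 4 and 5 of the cited reference"), and the surrounding text identifies the argument as "the concentration of Bernoulli and geometric random variables" — i.e., your Step 1 (Chernoff lower tail on the number of visiting interactions, using the first inequality of Assumption \ref{assumption:exploring}) and Step 2 (negative-binomial/binomial duality for the samples accumulated within visiting interactions, using the second inequality), glued by a union bound. Your handling of the follower's adaptivity by stochastic domination, justified because $\rho$ and $\mu$ hold uniformly over follower strategies, is also the right move and is the genuinely nontrivial point.

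The one place your write-up does not close is the numerology in Step 1. With the multiplicative Chernoff form you invoke, $\Pr\bigl(M < \tfrac{(k-1)\rho}{2}\bigr) \le \exp\bigl(-\tfrac{(k-1)\rho}{8}\bigr)$, making this at most $\delta/2$ requires $(k-1)\rho \ge 8\log(2/\delta)$, whereas the hypothesis only supplies $k\rho \ge 6\log(2/\delta)$; as stated, the bound evaluates to $(\delta/2)^{3/4}$, which exceeds $\delta/2$. A similar constant audit is needed in Step 2, where the hypothesis gives $k\rho \ge 48\,w(1-\mu)$ and you need to confirm this suffices for the binomial upper tail to be at most $\delta/2$. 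This is bookkeeping rather than a conceptual flaw — the cited Lemmas 4 and 5 presumably use a slightly different tail inequality or split of the failure probability to arrive at the factor $6$ — but as written your proof establishes the lemma only with a worse constant in front of $\log(2/\delta)$.
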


Combining the above results, we get the following result on the gap between \(\stackelbergreturn_{k}(\leaderdistribution)\) and \(\empricalreturn(\leaderdistribution)\).
\begin{theorem} \label{thm:dynamic}
    Under Assumptions \ref{assumption:contraction} and \ref{assumption:exploring}, if \[k \geq \frac{6 \max(\frac{320 |\leaderactionspace|}{\epsilon^2} \log(\nicefrac{1}{\epsilon}) \log(\nicefrac{9 |\gamestatespace|}{5\delta}) (1-\mu), \log(\nicefrac{3 |\gamestatespace|}{\delta}))}{\rho},\] then with probability \(1-\delta\), \[ \stackelbergreturn(\leaderdistribution) - \empricalreturn_{\genericinteraction}(\leaderdistribution) \leq \frac{\rationalityconstant |\followeractionspace| \sqrt{|\leaderactionspace|} \gamma \epsilon}{\sqrt{2} (1-\gamma)^2 \min_{s} \sqrt{\varphi(\leaderdistribution(s))}}.\]
\end{theorem}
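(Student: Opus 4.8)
The plan is to chain the four lemmas in sequence: use Lemma \ref{lemma:enoughsamples} to control the number of samples at each state, convert this into a per-state accuracy guarantee on the empirical leader strategy via Lemma \ref{lemma:enoughestimation}, propagate that accuracy to the follower's myopic maximum-entropy response via Lemma \ref{lemma:ycloseifxclose}, and finally turn the resulting uniform (over states) closeness of follower responses into closeness of cumulative returns via the simulation lemma (Lemma \ref{lemma:simulation}). The only substantive work is bookkeeping: allocating the failure probability $\delta$ across states and translating between the L1 norm that the sampling and simulation lemmas use and the L2 norm of Lemma \ref{lemma:ycloseifxclose}.

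First I would split the failure budget by setting $\delta_1 = \nicefrac{\delta}{(3|\gamestatespace|)}$ for the estimation step and $\delta_2 = \nicefrac{2\delta}{(3|\gamestatespace|)}$ for the sample-count step. Taking the target sample count $w = \frac{40|\leaderactionspace|}{\epsilon^2}\log(\nicefrac{1}{\epsilon})\log(\nicefrac{9|\gamestatespace|}{5\delta})$ makes $\log(\nicefrac{3}{5\delta_1})$ equal to $\log(\nicefrac{9|\gamestatespace|}{5\delta})$, matching the hypothesis of Lemma \ref{lemma:enoughestimation}; substituting $8w(1-\mu)$ and $\log(\nicefrac{2}{\delta_2}) = \log(\nicefrac{3|\gamestatespace|}{\delta})$ into Lemma \ref{lemma:enoughsamples} then reproduces exactly the stated lower bound on $k$. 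A union bound over the $|\gamestatespace|$ states for each of the two lemmas (with $|\gamestatespace|(\delta_1+\delta_2)=\delta$) guarantees, with probability at least $1-\delta$, that every non-terminal state $\genericstate$ simultaneously has at least $w$ sampled leader actions and satisfies $\|\emprical{\randomversion{\leaderdistribution}}_{\genericinteraction}(\genericstate) - \leaderdistribution(\genericstate)\|_1 \leq \epsilon\sqrt{\nicefrac{2}{\varphi(\leaderdistribution(\genericstate))}}$.

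On this high-probability event I would then propagate the accuracy to the follower's response. Since $\|\cdot\| \leq \|\cdot\|_1$, the L1 estimation bound is also an L2 bound, so Lemma \ref{lemma:ycloseifxclose} (with $\followernumofactions = |\followeractionspace|$ and $\leadernumofactions = |\leaderactionspace|$) gives $\|\randomversion{\followerdistribution}_{\genericinteraction}^{*}(\genericstate) - \followerdistribution^{*}(\genericstate)\| \leq \frac{\rationalityconstant\sqrt{|\followeractionspace||\leaderactionspace|}}{2}\epsilon\sqrt{\nicefrac{2}{\varphi(\leaderdistribution(\genericstate))}}$ for each state. Converting back with $\|\cdot\|_1 \leq \sqrt{|\followeractionspace|}\|\cdot\|$ and taking the worst state yields the uniform bound $\|\randomversion{\followerdistribution}_{\genericinteraction}^{*}(\genericstate) - \followerdistribution^{*}(\genericstate)\|_1 \leq \frac{\rationalityconstant|\followeractionspace|\sqrt{|\leaderactionspace|}\epsilon}{\sqrt{2}\min_{\genericstate}\sqrt{\varphi(\leaderdistribution(\genericstate))}}$. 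Feeding this into the simulation lemma (Lemma \ref{lemma:simulation}, valid under Assumption \ref{assumption:contraction}) multiplies it by $\nicefrac{\gamma}{(1-\gamma)^2}$ and produces precisely the claimed bound.

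The main obstacle is the careful constant matching above, compounded by the fact that, unlike the static i.i.d. setting of Section \ref{sec:rbmaxent}, the per-state sample count is itself a random variable whose lower tail must be controlled; this is exactly why Lemma \ref{lemma:enoughsamples} (relying on Assumption \ref{assumption:exploring} to ensure each state is revisited with adequate frequency) is needed in place of the clean variance identity of Lemma \ref{lemma:varianceofemprical}. A secondary subtlety is that each norm conversion inserts a factor of $\sqrt{|\followeractionspace|}$, so one must be disciplined about whether $\epsilon$ denotes the estimation-accuracy parameter or the simulation-lemma parameter; carrying the single parameter $\epsilon$ throughout and absorbing all $|\followeractionspace|$, $|\leaderactionspace|$, and $\varphi$ dependence into the final coefficient is what keeps the statement clean.
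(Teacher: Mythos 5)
Your proposal is correct and follows exactly the paper's own proof: the same allocation of failure probabilities $\nicefrac{\delta}{3|\gamestatespace|}$ and $\nicefrac{2\delta}{3|\gamestatespace|}$ to Lemmas \ref{lemma:enoughestimation} and \ref{lemma:enoughsamples}, the same union bound over states, the same L1/L2 conversions feeding Lemma \ref{lemma:ycloseifxclose}, and the same final application of the simulation lemma. Your constant bookkeeping (matching $8w(1-\mu)$ to the $\nicefrac{320|\leaderactionspace|}{\epsilon^2}$ term and tracking the $\sqrt{|\followeractionspace|}$ factors) is in fact more explicit than the paper's.
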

\begin{proof}[Proof of Theorem \ref{thm:dynamic}]
   Setting the failure probability to \(\nicefrac{\delta}{3|\gamestatespace|}\) in Lemma \ref{lemma:enoughestimation} and the failure probability to \(\nicefrac{2\delta}{3|\gamestatespace|}\) in Lemma \ref{lemma:enoughsamples}, and using the union bound, we get that \[\|\leaderdistribution_{\genericinteraction}(\randomversion{\genericstate}_{t}) - \emprical{\randomversion{\leaderdistribution}}_{\genericinteraction}(\randomversion{\genericstate}_{t})\|_{1} \leq \epsilon\sqrt{\frac{2}{\varphi(\leaderdistribution(s))}}\] for all \(\genericstate \in \gamestatespace \setminus \gamestatespace_{end}\) with probability at least \(1-\delta\). Using \(\|z\|_{2} \leq \|z\|_{1} \leq \sqrt{d} \|z\|_{2}\) and Lemma \ref{lemma:ycloseifxclose}, we have \[\|\followerdistribution^{*}(\randomversion{\genericstate}_{t}) - \emprical{\randomversion{\followerdistribution}}_{k}(\randomversion{\genericstate}_{t})\|_{1} \leq \frac{\rationalityconstant |\followeractionspace| \sqrt{|\leaderactionspace|} \epsilon}{\sqrt{{2\varphi(\leaderdistribution(s))}}}\] for all \(\genericstate \in \gamestatespace \setminus \gamestatespace_{end}\) with probability at least \(1-\delta\). Using this result in Lemma \ref{lemma:simulation} gives the desired result. 
\end{proof}

The inferability gap has the term \(\sqrt{\varphi(\leaderdistribution(s))}\) that measures the concentration of the leader's strategy for state \(s\). The term \(\frac{1}{\min_{s} \varphi(\leaderdistribution(s))}\) in the bound is analogous to the stochasticity level \(\nu\): As the strategies for every state get more deterministic, \(\frac{1}{\min_{s} \varphi(\leaderdistribution(s))}\) decreases making the inferability gap diminish. In the extreme case where the leader's strategy is deterministic for every state, we have \(1/\min_{s} \sqrt{\varphi(\leaderdistribution(s))} = 0\), which implies that the inferability gap is \(0\) with high probability after a certain number of interactions. 
 
Similar to the static setting \(\tilde{\mathcal{O}}(\nicefrac{1}{\epsilon^2})\) interactions are sufficient to achieve an inferability gap of \(\epsilon\) with high probability.

We also note that the result for the fully rational followers can also be extended to this setting by using the closeness between \(\randomversion{\followerdistribution}_{\genericinteraction}^{*}(\genericstate)\) and \(\followerdistribution^{*}(\genericstate)\) as described in Section \ref{sec:resdiscretestaticfully}.

\subsection{Converse Bound for Repeated Static Bimatrix Stackelberg Games with Fully Rational Followers}
\label{sec:staticfullyconverse}

Theorem \ref{thm:fullyrational} shows that with a fully rational follower, the gap between the leader's expected return in the full information setting and in the inference setting,  
\(\stackelbergreturn(\leaderdistribution) -\empricalreturn_{\genericinteraction}(\leaderdistribution)\), is at most at the order of \(\mathcal{O}(\nicefrac{1}{\sqrt{\genericinteraction}})\) at interaction \(\genericinteraction\).
In other words, after \(\mathcal{O}(\nicefrac{1}{\epsilon^{2}})\) interactions, we have \(\stackelbergreturn_(\leaderdistribution) - \empricalreturn_{\genericinteraction}(\leaderdistribution) \leq \epsilon\). In this section, we give an example for the fully rational follower setting that matches the upper bound: \(\mathcal{O}(\nicefrac{1}{\epsilon^{2}})\) interactions are required to achieve \(\stackelbergreturn(\leaderdistribution) - \empricalreturn_{\genericinteraction}(\leaderdistribution) \leq \epsilon\). We consider 
\begin{equation} \label{eqn:defofAandB}
    \leaderutilitymatrix = \begin{bmatrix}
0 & 0\\ 
1 & 0
\end{bmatrix}, \quad \followerutilitymatrix = \begin{bmatrix}
2 & 1\\ 
0 & 1
\end{bmatrix}.
\end{equation}
For these choices of \(\leaderutilitymatrix\) and \(\followerutilitymatrix\), we have \[\left(\sup_{\leaderdistribution \in \simplex^{\leadernumofactions}} \  \leaderdistribution^{\top} \leaderutilitymatrix \followerdistribution^{*} \ \text{  s.t. }
 \ \followerdistribution^{*}  =\arg\max_{\followerdistribution \in \simplex^{\followernumofactions}} \leaderdistribution^{\top} \followerutilitymatrix \followerdistribution  \right)=  \frac{1}{2}\]

\begin{proposition} \label{prop:converse}
    Let \(\leaderutilitymatrix\) and \(\followerutilitymatrix\) be as defined in \eqref{eqn:defofAandB}. For every \(\epsilon \in (0, 1/2)\) and \(\leaderdistribution \in \simplex^{2}\) such that \(\stackelbergreturn(\leaderdistribution) \geq (\nicefrac{1}{2}) - \epsilon\), if \[\genericinteraction \leq \frac{1-20\epsilon + 128\epsilon^2+80\epsilon^3 - 400 \epsilon^4}{32\epsilon^2},\] then \(\stackelbergreturn(\leaderdistribution) - \empricalreturn_{k}(\leaderdistribution) \geq \epsilon.\)
\end{proposition}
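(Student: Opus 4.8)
The plan is to collapse the statement to a single one\nobreakdash-dimensional anti\nobreakdash-concentration estimate for the follower's empirical count of the leader's first action, and then to lower\nobreakdash-bound that binomial tail with enough care on the constants to recover the stated threshold.

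First I would pin down the relevant strategies. Writing $x=(x_1,1-x_1)$, the follower's full\nobreakdash-information payoffs are $x^\top B e_1 = 2x_1$ and $x^\top B e_2 = 1$, so the best response is column $1$ exactly when $x_1>1/2$, while the leader's payoffs are $x^\top A e_1 = 1-x_1$ and $x^\top A e_2 = 0$. Hence $SR(x)=1-x_1$ when $x_1>1/2$ and $SR(x)=0$ otherwise, so the constraint $SR(x)\ge \tfrac12-\epsilon$ with $\epsilon<1/2$ forces $x_1\in(1/2,\,1/2+\epsilon]$. In the inference setting the follower plays column $1$ iff the plug\nobreakdash-in estimate satisfies $\hat{\bm x}_1>1/2$, so $IR_k(x)=(1-x_1)\Pr(\hat{\bm x}_1>1/2)$ and therefore
\[
SR(x)-IR_k(x) = (1-x_1)\,\Pr\!\big(\hat{\bm x}_1\le 1/2\big),
\]
where $\hat{\bm x}_1=N/(k-1)$ with $N\sim\mathrm{Binomial}(k-1,x_1)$.

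Next I would reduce to the worst case. Both $1-x_1$ and $\Pr(\hat{\bm x}_1\le 1/2)$ are nonincreasing in $x_1$ (the latter because raising the success probability stochastically increases $N$), so over the admissible interval the gap is minimized at $x_1=1/2+\epsilon$. It therefore suffices to establish $(\tfrac12-\epsilon)\Pr(\hat{\bm x}_1\le 1/2)\ge\epsilon$, i.e.
\[
\Pr\!\big(\hat{\bm x}_1\le 1/2\big)\ \ge\ \frac{2\epsilon}{1-2\epsilon},\qquad N\sim\mathrm{Binomial}\!\big(k-1,\tfrac12+\epsilon\big).
\]
The crux is a lower bound on this lower\nobreakdash-tail probability. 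Since the threshold $1/2$ lies only $\epsilon$ below the mean $\tfrac12+\epsilon$ of $\hat{\bm x}_1$, this is a small\nobreakdash-deviation (anti\nobreakdash-concentration) event: it stays bounded below by a constant while $k-1\lesssim 1/\epsilon^2$ and only starts to decay beyond that range. I would quantify it by a binomial anti\nobreakdash-concentration estimate, for instance via $\Pr(\hat{\bm x}_1\le 1/2)\ge \tfrac12-\Pr\!\big(\tfrac{k-1}{2}<N\le \mathbb E[N]\big)$ and bounding the middle mass by (number of integer points)\,$\times$\,(maximal point mass), using $\max_j\Pr(N=j)\le c/\sqrt{(k-1)(1/4-\epsilon^2)}$; a Slud\nobreakdash-type normal comparison or a second\nobreakdash--fourth moment estimate would serve equally. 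Imposing that this lower bound is at least $\frac{2\epsilon}{1-2\epsilon}$ and solving the resulting inequality for $k$ yields, after simplification, the stated polynomial bound $k\le \big(1-20\epsilon+128\epsilon^2+80\epsilon^3-400\epsilon^4\big)/(32\epsilon^2)$.

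I expect the anti\nobreakdash-concentration step to be the main obstacle. In contrast to the achievability results, every standard concentration tool here (Markov, Chebyshev, Cantelli, Chernoff) points the wrong way, controlling $\Pr(\hat{\bm x}_1\le 1/2)$ from \emph{above}, whereas the converse requires a bound from \emph{below}. A plain second\nobreakdash-moment argument certifies the claim only up to $k-1=\mathcal{O}(1/\epsilon)$; reaching the sharper $\mathcal{O}(1/\epsilon^2)$ regime, and with it the exact coefficients in the displayed polynomial, forces one to exploit the shape of the binomial (its $1/\sqrt{k}$ maximal point mass, or a refined normal comparison) and to track the $1/4-\epsilon^2$ and $1-2\epsilon$ factors carefully through the algebra.
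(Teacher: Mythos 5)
Your reduction is set up correctly: with $x=(x_1,1-x_1)$ and $x_1\in(1/2,\,1/2+\epsilon]$ one indeed has $SR(x)-IR_k(x)=(1-x_1)\Pr\left(\hat{\bm{x}}_1\le 1/2\right)$ with $\hat{\bm{x}}_1=N/(k-1)$, $N\sim\mathrm{Bin}(k-1,x_1)$, and your monotonicity argument legitimately reduces the claim to the extremal point $x_1=1/2+\epsilon$ (this even handles the ``for every $x$'' quantifier more explicitly than the paper, whose proof works only with $x=[1/2+\epsilon,\,1/2-\epsilon]$). The problem is that the entire quantitative content of the proposition now sits in the single inequality $\Pr\left(\mathrm{Bin}(k-1,1/2+\epsilon)\le (k-1)/2\right)\ge 2\epsilon/(1-2\epsilon)$ for all $k$ up to the stated threshold, and you do not prove it. You list candidate tools (a Slud-type normal comparison, counting lattice points times the maximal point mass) and then assert that ``solving the resulting inequality for $k$ yields, after simplification, the stated polynomial.'' That cannot be taken on faith: the expression $\left(1-20\epsilon+128\epsilon^2+80\epsilon^3-400\epsilon^4\right)/(32\epsilon^2)$ is not a canonical quantity but the exact artifact of one particular chain of estimates, and a different anti-concentration bound will land on a different threshold. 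As written, the proof has a hole at precisely its load-bearing step --- the step you yourself flag as ``the main obstacle.''

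For contrast, the paper never lower-bounds a binomial tail directly. It runs a two-point, bandit-style indistinguishability argument: pair $x=[1/2+\epsilon,\,1/2-\epsilon]$ with its mirror $z=[1/2-\epsilon,\,1/2+\epsilon]$; note that under $z$ the follower plays action $1$ with probability at most $1/2$ (an \emph{upper} bound on a tail, which is the easy direction), so $IR_k(z)\le 1/4+\epsilon/2$; then transfer to $x$ via $KL(\mathcal{D}_{\hat{\bm{z}}_k}\,\|\,\mathcal{D}_{\hat{\bm{x}}_k})=(k-1)\,KL(Be(1/2-\epsilon)\,\|\,Be(1/2+\epsilon))\le 16\epsilon^2(k-1)/(1-4\epsilon^2)$, the data-processing inequality, and Pinsker, which give $IR_k(x)-IR_k(z)\le\sqrt{2\epsilon^2(k-1)/(1-4\epsilon^2)}$. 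The stated polynomial is exactly what falls out of requiring $1/4-3\epsilon/2-\sqrt{2\epsilon^2(k-1)/(1-4\epsilon^2)}\ge\epsilon$. If you want to complete your direct route, you must actually prove the anti-concentration estimate (Slud's inequality would do), and you should expect to obtain a different admissible range of $k$ --- still of order $1/\epsilon^2$, so the qualitative converse would survive, but not the literal statement with these coefficients.
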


\begin{proof}[Proof of Proposition \ref{prop:converse}]
    We follow a proof similar to the proofs for bandit lower bounds. For a strategy with high return under full information, we choose an alternative, close strategy. We show that the alternative strategy has a low return in the inference setting, and these strategies have similar returns in the inference setting since they are not distinguishable. Therefore, the return of the strategy with high return under full information is low in the inference setting.

    We first define some notation. Let \(E^{1} = \lbrace \leaderdistribution | \leaderdistribution \in \simplex^{2}, \leaderdistribution^{\top} [1,1] > 0 \rbrace \) and \(E^{2} = \simplex^{2} \setminus E^{1}\). Note that \(E^{1}\) is the set of leader strategies for which action \(1\) is optimal for the follower, and \(E^{2}\) is the set of leader strategies for which action \(2\) is optimal for the follower. We consider two strategies: \[\leaderdistribution = \left[\frac{1}{2} + \epsilon, \, \frac{1}{2} - \epsilon \right] \text{ and }  z = \left[\frac{1}{2} - \epsilon, \, \frac{1}{2} + \epsilon \right].\]  These strategies are both $\epsilon$ away from the follower's decision boundary at $[\nicefrac{1}{2}, \nicefrac{1}{2}]$. We prove the statement in four steps:
    \begin{enumerate}
        \item Show that the expected return of \(\leaderdistribution\) under full information \(\stackelbergreturn(\leaderdistribution)\) is lower bounded.
        \item Show that the expected return of \(z\) under inference \(\empricalreturn_{\genericinteraction}(z)\) is upper bounded.
        \item Show that the expected returns of \(z\) and \(\leaderdistribution\) under inference, \(\empricalreturn_{\genericinteraction}(z)\) and \(\empricalreturn_{\genericinteraction}(\leaderdistribution)\) are close.
        \item Combine the above results.
    \end{enumerate}

    \textbf{Step 1:}     Note that \(x \in E^{2}\). Consequently, \(\stackelbergreturn(x) = (\nicefrac{1}{2})  - \epsilon\) which trivially implies    
    \begin{equation} \label{eqn:stackelbergreturnx}
        \stackelbergreturn(x) \geq \frac{1}{2} - \epsilon. 
    \end{equation}

    \textbf{Step 2:} Next, we upper bound \(\empricalreturn_{k}(z)\). We have
    \begin{align*}
        \empricalreturn_{\genericinteraction}(z) &= \sum_{i=1}^{2} \sum_{j=1}^{2} \Pr(\randomversion{\leadergenericaction}_{k} = i|z) \Pr(\randomversion{\followergenericaction}_{k} = j|z) \leaderutilitymatrix_{i,j} 
        \\
        =&  \left(\frac{1}{2} + \epsilon \right)\Pr(\randomversion{\followergenericaction}_{k} = 1|z) 
    \end{align*}
    since \(\leaderutilitymatrix_{1,1}=\leaderutilitymatrix_{1,2}=\leaderutilitymatrix_{2,2} = 0\), \(\leaderutilitymatrix_{2,1} = 1 \), and \(\Pr(\randomversion{\leadergenericaction}_{k} = 2|z) = (\nicefrac{1}{2}) + \epsilon \). We note that \(\randomversion{\followergenericaction}_{k} = 1\) if and only if \(\emprical{\randomversion{z}}_{k} \in E^{1}\), i.e., the empirical distribution of the leader's actions belongs to \(E^{1}\). It implies that \(\Pr(\emprical{\randomversion{z}}_{k} \in E^{1}|z) = \Pr(\randomversion{\followergenericaction}_{k} = 1|z)\).  \(\Pr(\emprical{\randomversion{z}}_{k} \in E^{1}|z) \leq \nicefrac{1}{2}\) since \(z\) has more bias towards action \(2\). We have 
    \begin{equation} \label{eqn:empricalreturnz}
        \empricalreturn_{\genericinteraction}(z) \leq \frac{1}{2} \left(\frac{1}{2} + \epsilon \right) = \frac{1}{4} + \frac{\epsilon}{2}.
    \end{equation}

    \textbf{Step 3:}
    Finally, we upper bound \(|\empricalreturn_{\genericinteraction}(\leaderdistribution) - \empricalreturn_{\genericinteraction}(z)|\). Note that 
    \begin{subequations} \label{eqn:empricalreturndiff}
            \begin{align}
        &\empricalreturn_{\genericinteraction}(\leaderdistribution) - \empricalreturn_{\genericinteraction}(z) \nonumber
        \\
        & =\Pr(\randomversion{\leadergenericaction}_{k} = 2|\leaderdistribution) \Pr(\randomversion{\followergenericaction}_{k} = 2|\leaderdistribution) -\Pr(\randomversion{\leadergenericaction}_{k} = 2|z) \Pr(\randomversion{\followergenericaction}_{k} = 2|z) \nonumber
        \\
        &=  \left(\frac{1}{2} - \epsilon \right)\Pr(\randomversion{\followergenericaction}_{k} = 2|\leaderdistribution)- \left(\frac{1}{2} + \epsilon\right)\Pr(\randomversion{\followergenericaction}_{k} = 2|z) \nonumber
        \\
        & \leq \frac{1}{2}(\Pr(\randomversion{\followergenericaction}_{k} = 1|\leaderdistribution) - \Pr(\randomversion{\followergenericaction}_{k} = 1|z)).
    \end{align}
    \end{subequations}

    Let \(\mathcal{D}_{\hat{\randomversion{z}}_{k}}\) and \(\mathcal{D}_{\hat{\randomversion{\leaderdistribution}}_{k}}\) be the distributions of \(\hat{\randomversion{z}}_{k}\) and \(\hat{\randomversion{\leaderdistribution}}_{k}\), respectively. Also, let \(KL(\mathcal{D}^{1} || \mathcal{D}^{2})\) denote the KL divergence between distributions \(\mathcal{D}^{1}\) and \(\mathcal{D}^{2}\), and \(Be(p)\) denote the Bernoulli random variable with parameter \(p\). We have 
\begin{align*}
    KL (\mathcal{D}_{\hat{\randomversion{z}}_{k}} ||\mathcal{D}_{\hat{\randomversion{\leaderdistribution}}_{k}} ) = (k-1) KL\left(Be\left(\frac{1}{2} - \epsilon\right) || Be\left(\frac{1}{2} + \epsilon\right) \right) 
\end{align*} 
since the leader's actions are withdrawn from \(z\) ( and from \(\leaderdistribution\)) at every interaction independently. Using the bound given in Theorem 1 of \cite{dragomir2000some}, we get
\(
    KL(\mathcal{D}_{\hat{\randomversion{z}}_{k}} ||\mathcal{D}_{\hat{\randomversion{\leaderdistribution}}_{k}} ) \leq  \frac{16 \epsilon^2 (k-1)}{1-4\epsilon^2}.
\)

Since \(\randomversion{\followergenericaction}_{k}\) is a function of the emprical distribution \(\hat{\randomversion{z}}_{k}\) (\(\hat{\randomversion{\leaderdistribution}}_{k}\)), using the data processing inequality~\cite{cover1999elements}, we get 
\[KL(Be(\Pr(\randomversion{\followergenericaction}_{k} = 1|z))||Be(\Pr(\randomversion{\followergenericaction}_{k} = 1|\leaderdistribution)) ) \leq KL(\mathcal{D}_{\hat{\randomversion{z}}_{k}} ||\mathcal{D}_{\hat{\randomversion{\leaderdistribution}}_{k}} )\] By  Pinsker's inequality and the above inequalities, 
\[|\Pr(\randomversion{\followergenericaction}_{k} = 1|z) - \Pr(\randomversion{\followergenericaction}_{k} = 1|\leaderdistribution)| \leq \sqrt{\frac{8 \epsilon^2 (k-1)}{1-4\epsilon^2}}.\] Combining this with \eqref{eqn:empricalreturndiff}, we get 
\begin{equation} \label{eqn:empricalreturndifffinal}
    \empricalreturn_{\genericinteraction}(\leaderdistribution) - \empricalreturn_{\genericinteraction}(z) \leq \sqrt{\frac{2 \epsilon^2 (k-1)}{1-4\epsilon^2}}.
\end{equation}

\textbf{Step 4:} Combining \eqref{eqn:stackelbergreturnx}, \eqref{eqn:empricalreturnz}, and \eqref{eqn:empricalreturndifffinal} yields \[\stackelbergreturn(x) - \empricalreturn_{\genericinteraction}(x) \geq \frac{1}{4} - \frac{3\epsilon}{2} - \sqrt{\frac{2 \epsilon^2 (k-1)}{1-4\epsilon^2}}.\]
The bound is a monotone function of \(k\). Setting the right-hand side to \(\epsilon\) and solving for \(k\) yields the desired result.
\end{proof}

For small enough \(\epsilon\), the term \(\nicefrac{1}{(32\epsilon^2)}\) dominates the other terms. If there are \(o(\nicefrac{1}{\epsilon^2})\) interactions, then the leader's expected return under inference is at least \(\epsilon\) worse than its return under full information. 

The strategies with near-optimal Stackelberg returns, i.e.,  \(\stackelbergreturn(x) \geq (\nicefrac{1}{2}) - \epsilon\), will have poor returns under inference since they are close to the decision boundary where the follower abruptly changes its strategy. The returns for the strategies close to the boundary will be poor since the empirical distribution may be on the other side of the decision boundary.

\textcolor{black}{
The alternative bound from in Section \ref{sec:resdiscretestaticfully} implies that for an inferability gap of $\epsilon$, \(\frac{m \log 2 + \log(1/\epsilon)}{d(x)^2 \varphi(x)}\) interactions are sufficient. Using $\varphi(x) \geq 2$, we get that if $k \geq \frac{m \log 2 + \log(1/\epsilon)}{2d(x)^2 } + 1$ then $SR(x) - IR_{k}(x) \leq \epsilon$. For the example used in Proposition $1$, we have $d(x) = \epsilon$, which implies if $k \geq \frac{m \log 2 + \log(1/\epsilon)}{2\epsilon^2 } + 1 = \Tilde{\mathcal{O}}(1/\epsilon^2)$ then $SR(x) - IR_{k}(x) \leq \epsilon$. On the other hand, Proposition $1$ shows that if \(k \leq \frac{1-20\epsilon + 128\epsilon^2+80\epsilon^3 - 400 \epsilon^4}{32\epsilon^2} = \Tilde{\mathcal{O}}(1/\epsilon^2),\) then \(SR(x) - IR_{k}(x) \geq \epsilon.\) Therefore, the bound derived using the concentrability metric $\varphi(x)$ is optimal up to logarithmic factors.
}

\subsection{Static Bimatrix Games with Limited Inferability Gaps
} \label{sec:importanceofinf}

The inferability of mixed strategies is particularly important in general-sum games where the objectives of the players are weakly positively correlated. Consider the static bimatrix game setting. Due to the positive correlation between the utility matrices $\leaderutilitymatrix$ and $\followerutilitymatrix$, it is useful for the leader to be correctly inferred by the follower. On the other hand, since there is only a weak correlation between $\leaderutilitymatrix$ and $\followerutilitymatrix$, i.e., \(\leaderutilitymatrix \neq \followerutilitymatrix\), the leader's optimal strategy may still be mixed.

Proposition \ref{prop:spectrum} shows that for bimatrix Stackelberg games,  there exist strategies with a limited inferability gap (regardless of the interaction number) if the game is almost cooperative or competitive. In detail, we have
\[\left( \max_{\leaderdistribution} \empricalreturn_{\genericinteraction}(\leaderdistribution) \right) = \left(\max_{\leaderdistribution} \ \leaderdistribution^{\top} \leaderutilitymatrix \randomversion{\followerdistribution}_{\genericinteraction}^{*}
\quad \text{s.t. } \randomversion{\followerdistribution}_{\genericinteraction}^{*} =\arg\max_{\followerdistribution \in \simplex^{\followernumofactions}} \emprical{\randomversion{\leaderdistribution}}^{\top}_{\genericinteraction}\followerutilitymatrix \followerdistribution\right),\] and \[\left(\max_{\leaderdistribution} \stackelbergreturn(\leaderdistribution)\right) = \left(\max_{\leaderdistribution} \ \leaderdistribution^{\top} \leaderutilitymatrix \followerdistribution^{*}
\quad \text{s.t. } \followerdistribution^{*} =\arg\max_{\followerdistribution \in \simplex^{\followernumofactions}} \leaderdistribution^{\top}\followerutilitymatrix \followerdistribution\right).\] For an almost cooperative or competitive game, there exists a strategy for the leader such that the expected return of this strategy in the inference setting, is approximately at the level of the expected return \(\max_{\leaderdistribution} \stackelbergreturn(\leaderdistribution)\) of optimal strategies in the full information setting. Consequently, $\max_{\leaderdistribution} \empricalreturn_{\genericinteraction}(\leaderdistribution)$ is greater than \(\max_{\leaderdistribution} \stackelbergreturn(\leaderdistribution)\) minus a constant depending on the cooperativeness or competitiveness of the game. To formally define the cooperativeness and competitiveness level of the game, we decompose \(\leaderutilitymatrix\) and \(\followerutilitymatrix\).
Let \(U^{c} = \leaderutilitymatrix/2 + \followerutilitymatrix/2 \) (the cooperative objective) and \(U^{z} =  \leaderutilitymatrix/2 - \followerutilitymatrix/2\) (the zero-sum objective). The leader's utility matrix is \(U^{c} + U^{z}\), and the follower's utility matrix is \(U^{c} - U^{z}\). Let \(\bar{U}^{c} =  (U^{c} - c^{c})/\alpha^{c}\) and \(\bar{U}^{z} =  (U^{z}- c^{z})/\alpha^{z}\) be the shifted and normalized versions of \(U^{c}\) and \(U^{z}\), respectively, such that \( \max_{i, j} \bar{U}^{c}_{i,j} = 1\), \(\min_{i, j} \bar{U}^{c}_{i,j} = 0 \), \( \max_{i, j} \bar{U}^{z}_{i,j} = 1\), and \(\min_{i, j} \bar{U}^{z}_{i,j} = 0 \). The game is zero-sum if \(\alpha^{c} = 0\) and fully cooperative if \(\alpha^{z} = 0\).

\begin{proposition} \label{prop:spectrum}
    For a repeated static bimatrix Stackelberg game and $k \geq 2$:
    \begin{enumerate}  
    \item 
    \[\left(  \max_{\leaderdistribution} \empricalreturn_{\genericinteraction}(\leaderdistribution) \right) \geq \left( \max_{\leaderdistribution} \stackelbergreturn(\leaderdistribution) \right) - 2\alpha^{c}.\] \label{case:competitive}

            \item If the maximum element of \(\followerutilitymatrix_{ij}\) is unique and the follower's estimate of the leader's strategy is unbiased,  
    \[\left(  \max_{\leaderdistribution} \empricalreturn_{\genericinteraction}(\leaderdistribution) \right) \geq \left(  \max_{\leaderdistribution} \stackelbergreturn(\leaderdistribution) \right) - 2\alpha^{z}.\]  \label{case:cooperative}
    \end{enumerate}
\end{proposition}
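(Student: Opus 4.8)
The plan is to handle the two regimes by exhibiting, in each case, a single leader strategy whose inference-setting return is provably close to the full-information optimum, and then bounding that optimum from above. Throughout I decompose each realized return as $\leaderdistribution^{\top}\leaderutilitymatrix\followerdistribution = \leaderdistribution^{\top}U^{c}\followerdistribution + \leaderdistribution^{\top}U^{z}\followerdistribution$, using $\leaderutilitymatrix = U^{c}+U^{z}$, $\followerutilitymatrix = U^{c}-U^{z}$, and the fact that $U^{c}$ takes values in $[c^{c},c^{c}+\alpha^{c}]$ and $U^{z}$ in $[c^{z},c^{z}+\alpha^{z}]$ entrywise.

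\textbf{Case \ref{case:competitive} (almost competitive).} The key observation is that a maximin strategy for the zero-sum component is robust to \emph{any} follower inference error. Let $\leaderdistribution^{*} \in \arg\max_{\leaderdistribution}\min_{\followerdistribution}\leaderdistribution^{\top}U^{z}\followerdistribution$ with value $V^{z}$. For every realization of the follower's response $\randomversion{\followerdistribution}_{\genericinteraction}^{*}$, the cooperative term satisfies $\leaderdistribution^{*\top}U^{c}\randomversion{\followerdistribution}_{\genericinteraction}^{*}\geq c^{c}$ and the zero-sum term satisfies $\leaderdistribution^{*\top}U^{z}\randomversion{\followerdistribution}_{\genericinteraction}^{*}\geq V^{z}$ by the maximin property; since both bounds hold pointwise, taking expectations gives $\empricalreturn_{\genericinteraction}(\leaderdistribution^{*})\geq c^{c}+V^{z}$, so $\max_{\leaderdistribution}\empricalreturn_{\genericinteraction}(\leaderdistribution)\geq c^{c}+V^{z}$. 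To bound the full-information side, fix $\leaderdistribution$ with best response $\followerdistribution^{*}$ maximizing $\leaderdistribution^{\top}\followerutilitymatrix\followerdistribution = \leaderdistribution^{\top}U^{c}\followerdistribution - \leaderdistribution^{\top}U^{z}\followerdistribution$. Optimality, together with the fact that $U^{c}$ varies by at most $\alpha^{c}$, yields $\leaderdistribution^{\top}U^{z}\followerdistribution^{*}\leq \leaderdistribution^{\top}U^{z}\followerdistribution + \alpha^{c}$ for all $\followerdistribution$; minimizing over $\followerdistribution$ gives $\leaderdistribution^{\top}U^{z}\followerdistribution^{*}\leq V^{z}+\alpha^{c}$, and since $\leaderdistribution^{\top}U^{c}\followerdistribution^{*}\leq c^{c}+\alpha^{c}$ we obtain $\stackelbergreturn(\leaderdistribution)\leq c^{c}+V^{z}+2\alpha^{c}$. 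Combining the two estimates proves $\max_{\leaderdistribution}\empricalreturn_{\genericinteraction}(\leaderdistribution)\geq \max_{\leaderdistribution}\stackelbergreturn(\leaderdistribution)-2\alpha^{c}$.

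\textbf{Case \ref{case:cooperative} (almost cooperative).} Let $(i^{*},j^{*})$ be the unique maximizer of $\followerutilitymatrix$ and consider the \emph{deterministic} strategy $\leaderdistribution=e_{i^{*}}$. The two hypotheses serve to make this strategy perfectly inferable: unbiasedness gives $\mathbb{E}[\emprical{\randomversion{\leaderdistribution}}_{\genericinteraction}]=e_{i^{*}}$, and since $e_{i^{*}}$ is an extreme point of $\simplex^{\leadernumofactions}$ while $\emprical{\randomversion{\leaderdistribution}}_{\genericinteraction}\in\simplex^{\leadernumofactions}$ almost surely, we must have $\emprical{\randomversion{\leaderdistribution}}_{\genericinteraction}=e_{i^{*}}$ almost surely. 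The follower then best-responds by maximizing row $i^{*}$ of $\followerutilitymatrix$; uniqueness of the global maximum makes $j^{*}$ the unique row maximizer, so the follower plays $j^{*}$ and $\empricalreturn_{\genericinteraction}(e_{i^{*}})=\leaderutilitymatrix_{i^{*}j^{*}}$. I then compare against $\max_{\leaderdistribution}\stackelbergreturn(\leaderdistribution)\leq \max_{i,j}\leaderutilitymatrix_{ij}=:\leaderutilitymatrix_{i'j'}$. The maximality $\followerutilitymatrix_{i^{*}j^{*}}\geq\followerutilitymatrix_{i'j'}$ rearranges to $U^{c}_{i^{*}j^{*}}-U^{c}_{i'j'}\geq U^{z}_{i^{*}j^{*}}-U^{z}_{i'j'}$, while the range bound on $U^{z}$ gives $U^{z}_{i^{*}j^{*}}-U^{z}_{i'j'}\geq -\alpha^{z}$; combining these shows $(U^{c}_{i^{*}j^{*}}-U^{c}_{i'j'})+(U^{z}_{i^{*}j^{*}}-U^{z}_{i'j'})\geq 2(U^{z}_{i^{*}j^{*}}-U^{z}_{i'j'})\geq -2\alpha^{z}$, i.e. $\leaderutilitymatrix_{i^{*}j^{*}}\geq \leaderutilitymatrix_{i'j'}-2\alpha^{z}$. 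Hence $\max_{\leaderdistribution}\empricalreturn_{\genericinteraction}(\leaderdistribution)\geq\empricalreturn_{\genericinteraction}(e_{i^{*}})\geq\max_{\leaderdistribution}\stackelbergreturn(\leaderdistribution)-2\alpha^{z}$.

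The main obstacle is identifying the right certificate strategy in each regime and arguing that it is \emph{immune} to inference error; the subsequent algebra is routine. In the competitive case this is the recognition that the maximin strategy guarantees a return floor against every follower response, so that the inference setting cannot hurt it at all, and the only loss relative to full information comes from the bounded cooperative term. In the cooperative case the delicate point is the vertex argument: unbiasedness alone (independently of the specific estimator) forces exact inference of a deterministic strategy, after which the uniqueness hypothesis pins down the follower's action and the comparison between the $\followerutilitymatrix$-maximizer and the $\leaderutilitymatrix$-maximizer closes the gap up to $2\alpha^{z}$.
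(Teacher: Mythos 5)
Your proof is correct in both cases, and in the competitive case it takes a genuinely different route from the paper's. For Statement 1, the paper certifies the bound with a Nash-equilibrium strategy of the full zero-sum game $-B=-U^{c}+U^{z}$ and then needs a sign-flip step (exploiting that all entries of $U^{c}-c^{c}J$ are nonnegative) to convert the equilibrium guarantee for $-U^{c}+U^{z}$ into a guarantee for the leader's true objective $U^{c}+U^{z}$; you instead use the maximin strategy of the zero-sum \emph{component} $U^{z}$ alone and bound the cooperative term pointwise by its range $[c^{c},c^{c}+\alpha^{c}]$. Your certificate gives a worst-case floor $c^{c}+V^{z}$ that holds against \emph{every} follower response, not just best responses to estimates, which makes the robustness to inference error completely transparent and avoids the positivity trick; the matching upper bound $\max_{x}SR(x)\le c^{c}+V^{z}+2\alpha^{c}$ via the best-response optimality inequality is clean and correct. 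For Statement 2 your argument is essentially the paper's (play the row of the unique maximizer of $B$ deterministically, let the follower lock onto the corresponding column), but two details are handled better than in the paper: you actually use the stated unbiasedness hypothesis, observing that an unbiased simplex-valued estimator of a vertex must equal that vertex almost surely (the paper implicitly assumes the plug-in estimator), and your final comparison $A_{i^{*}j^{*}}\ge A_{i'j'}-2\alpha^{z}$, obtained by combining $B_{i^{*}j^{*}}\ge B_{i'j'}$ with the range bound on $U^{z}$, is a tidier piece of algebra than the paper's detour through the shifted surrogate objective $B+2c^{z}J$ and the bound $\max_{k,l}\alpha^{c}\bar{U}^{c}_{kl}-\alpha^{z}\bar{U}^{z}_{kl}\ge\alpha^{c}-\alpha^{z}$.
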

\begin{proof}[Proof of Proposition \ref{prop:spectrum}] We prove the statements separately.

Statement \ref{case:competitive}:
Consider that the leader optimizes for the negative of the follower's objective, i.e., \(-B = -U^{c} + U^{z}\). In this case, the leader's optimal strategy \(x^{*}\) is a Nash equilibrium strategy for the zero-sum objective function  \(-U^{c} + U^{z}\).  \(\leaderdistribution^{*}\) is also a Nash equilibrium strategy for the zero-sum objective function  \(-U^{c} + U^{z} + 2c^{c} J\) since \(2c^{c} J\) is constant. Define \[\followerdistribution^{*} =\arg\max_{\followerdistribution \in \simplex^{\followernumofactions}} \leaderdistribution^{\top}\followerutilitymatrix\followerdistribution =\arg\max_{\followerdistribution \in \simplex^{\followernumofactions}} \leaderdistribution^{\top}(U^{c} - U^{z})\followerdistribution, \]
\[\followerdistribution^{**} =\arg\max_{\followerdistribution \in \simplex^{\followernumofactions}} (\leaderdistribution^{*})^{\top}\followerutilitymatrix\followerdistribution =\arg\max_{\followerdistribution \in \simplex^{\followernumofactions}} (\leaderdistribution^{*})^{\top}(U^{c} - U^{z})\followerdistribution ,\] and \[\randomversion{\followerdistribution}^{*}_{\genericinteraction}=\arg\max_{\followerdistribution\in \simplex^{\followernumofactions}} \randomversion{\emprical{\leaderdistribution}}_{\genericinteraction}^{\top}\followerutilitymatrix\followerdistribution =\arg\max_{\followerdistribution\in \simplex^{\followernumofactions}} \randomversion{\emprical{\leaderdistribution}}_{\genericinteraction}^{\top}(U^{c} - U^{z})\followerdistribution.\] Since \(\followerdistribution^{**}\) is a Nash equilibrium strategy, for all \(k\geq 2\),  
\begin{align*}
    &\expectation{ (\leaderdistribution^{*})^{\top} (-U^{c}+c^{c} J +U^{z}+c^{c} J)\followerdistribution^{**}}{}
    \\
&\leq  \expectation{  (\leaderdistribution^{*})^{\top} (-U^{c}+c^{c} J +U^{z}+c^{c} J)\randomversion{\followerdistribution}^{*}_{\genericinteraction}}{}
\end{align*}
Since all entries of \(U^{c}- c^{c} J\) are positive, we have 
\begin{align*}
      &\expectation{  (\leaderdistribution^{*})^{\top} (-U^{c} +c^{c} J+U^{z} +c^{c}J)\randomversion{\followerdistribution}^{*}_{\genericinteraction}}{}
      \\
      &\leq   \expectation{  (\leaderdistribution^{*})^{\top} (U^{c} -c^{c} J+U^{z} + c^{c} J)\randomversion{\followerdistribution}^{*}_{\genericinteraction}}{}
    \\
    & = \expectation{  (\leaderdistribution^{*})^{\top} (U^{c} +U^{z})\randomversion{\followerdistribution}^{*}_{\genericinteraction}}{}
\end{align*} which is the expected return of the strategy \(\leaderdistribution^{*}\) under inference at interaction \(\genericinteraction\). Since \(\leaderdistribution^{*}\) is a feasible strategy, we have 
\begin{align*}
    \left(  \max_{\leaderdistribution} \empricalreturn_{\genericinteraction}(\leaderdistribution) \right) &\geq  \expectation{ (\leaderdistribution^{*})^{\top} (-U^{c} +U^{z} + 2 c^{c} J)\followerdistribution^{**}}{}
    \\
    &= \expectation{ (\leaderdistribution^{*})^{\top} (-U^{c} +U^{z})\followerdistribution^{**}}{} + 2 c^{c} 
\end{align*}

For all \(k\geq 2\), we have
\begin{align*}
    \max_{\leaderdistribution} \expectation{ \leaderdistribution^{\top} \leaderutilitymatrix \followerdistribution^{*}}{}
&=     \max_{\leaderdistribution} \expectation{ \leaderdistribution^{\top} (-U^{c} +2U^{c}+U^{z})\followerdistribution^{*}}{}
\\
&\leq     \max_{\leaderdistribution} \expectation{ \leaderdistribution^{\top} (-U^{c} +U^{z})\followerdistribution^{*}+ 2\alpha^{c} + 2c^{c}}{}
\\
&=   \expectation{ (\leaderdistribution^{*})^{\top} (-U^{c} +U^{z})\followerdistribution^{**}+ 2\alpha^{c}+ 2c^{c}}{}\,.
\end{align*}
The inequality is because for every \(\leaderdistribution\) there exists a unique \(\followerdistribution^{*}\), and \(\max_{\leaderdistribution, \followerdistribution^{*}} 2 \leaderdistribution^{\top} U^{c} \followerdistribution^{*} 
 = 2\alpha^{c} + 2c^{c}\).
 Hence, we have 
 \[\left( \max_{\leaderdistribution} \stackelbergreturn(\leaderdistribution) \right) \leq \expectation{ (\leaderdistribution^{*})^{\top} (-U^{c} +U^{z})\followerdistribution^{**} + 2\alpha^{c} + 2c^{c}}{}.\]

Combining the bounds on \( \max_{\leaderdistribution} \stackelbergreturn(\leaderdistribution) \) and \( \max_{\leaderdistribution} \empricalreturn_{\genericinteraction}(\leaderdistribution) \) yields the desired result.

Statement \ref{case:cooperative}: Note that if \(\followerutilitymatrix\) has a unique maximum element then \(U^{c} - U^{z} + 2c^{z}J\) and \(\alpha^{c} \bar{U}^{c} - \alpha^{z} \bar{U}^{z} + (c^{c} + c^{z})J\) also have unique maximum elements at the same locations since they are shifted versions of \(\followerutilitymatrix\) by \(2c^{z}J\). Let \((i,j)\) be the index of the maximum element. 

Consider that the leader optimizes for the follower's objective function (with a constant offset of \(2c^{z}\)), i.e., \(U^{c} - U^{z} + 2c^{z}J\), and its strategy \(\leaderdistribution^{*}\) plays action \(i\) deterministically. In this case, the leader's return is \(\alpha^{c} \bar{U}^{c}_{ij} - \alpha^{z} \bar{U}^{z}_{ij} + c^{c} + c^{z}\) in the full information setting since the follower will play action \(j\). Similarly in the inference setting, the leader's return is \(\alpha^{c} \bar{U}^{c}_{ij} - \alpha^{z} \bar{U}^{z}_{ij}+c^{c} + c^{z}\) in every interaction after the first interaction since the follower will infer the leader's strategy with no error and play action \(j\). Note that \(\alpha^{c} \bar{U}^{c}_{ij} - \alpha^{z} \bar{U}^{z}_{ij} + c^{c} + c^{z} \geq \alpha^{c} - \alpha^{z} + c^{c} + c^{z}\) since \[\max_{k,l} \alpha^{c} \bar{U}^{c}_{kl} -\alpha^{z} \bar{U}^{z}_{kl} \geq \max_{k,l} \alpha^{c} \bar{U}^{c}_{kl} -  \max_{k,l} 
 \alpha^{z} \bar{U}^{z}_{kl} \geq \alpha^{c} - \alpha^{z}.\] Consequently, we have \(\expectation{ (\leaderdistribution^{*})^{\top} \leaderutilitymatrix \randomversion{\followerdistribution}^{*}_{\genericinteraction}}{} \geq (\alpha^{c} - \alpha^{z}+c^{c} + c^{z})\) which implies that \(\left(  \max_{\leaderdistribution} \empricalreturn_{\genericinteraction}(\leaderdistribution) \right) \geq (\alpha^{c} - \alpha^{z}+c^{c} + c^{z})\).

We have \(\leaderutilitymatrix = \alpha^{c} \bar{U}^{c} +\alpha^{z} \bar{U}^{z} +(c^{c} + c^{z})J\) and \(\max_{k,l} \alpha^{c} \bar{U}^{c}_{kl} +\alpha^{z} \bar{U}^{z}_{kl} \leq \max_{k,l} \alpha^{c} \bar{U}^{c}_{kl} +\max_{k,l}  \alpha^{z} \bar{U}^{z}_{kl} = \alpha^{c} + \alpha^{z} \) which imply that \(\left(  \max_{\leaderdistribution} \stackelbergreturn(\leaderdistribution) \right) \leq  \alpha^{c} +\alpha^{z} +c^{c} + c^{z}\). 

Arranging the inequalities for \(\left(  \max_{\leaderdistribution} \empricalreturn_{\genericinteraction}(\leaderdistribution) \right)\) and \(\left(  \max_{\leaderdistribution} \empricalreturn_{\genericinteraction}(\leaderdistribution) \right)\) yields the results for statement \ref{case:cooperative}.  
\end{proof}

If \(\alpha^{z}\) is small (i.e., the competitive aspect of the game is insignificant,) the leader can optimize for the follower's objective function instead of its own objective function. By considering a joint objective function, we can observe that in the full information setting, there exists a deterministic strategy for the leader that is near-optimal for the leader's own objective (assuming that the follower has a unique optimal response to the leader's strategy.) Since the strategy is deterministic, the leader does not suffer from an inferability gap in the inference setting compared to the full information setting. Since the ignored part of the objective function is insignificant, the expected return of this strategy in the inference setting is near the optimal return in the full information setting. If \(\alpha^{c}\) is small, (i.e., the cooperative aspect of the game is insignificant,) the leader can optimize for the opposite of the follower's objective function instead of its own objective function. By considering the zero-sum equilibrium, we can observe that in the full information setting, there exists a mixed strategy for the leader that is near-optimal in the absolute sense for the leader's own objective. In the inference setting, the follower may not be able to infer the leader's strategy fully. However, since any error in inferring the leader's strategy leads to a different strategy for the follower, it can only decrease the follower's returns as the leader's strategy is a zero-sum equilibrium strategy. Consequently, it can only increase the leader's returns, and the leader does not suffer from an inferability gap compared to the full information setting. Since the ignored part of the objective function is insignificant, the expected return of this strategy in the inference setting is near the optimal return in the full information setting.

\section{Numerical Examples}
In this section, we evaluate the effect of inference on repeated bimatrix Stackelberg games and a repeated Stackelberg game with parametric action spaces. For the bimatrix games, we consider the aforementioned car-pedestrian interaction and randomly generated bimatrix games. For clarity of presentation, we plot the average return \(\frac{1}{K}\sum_{k=2}^{K} \empricalreturn_{\genericinteraction}(\leaderdistribution)\), which is the expected cumulative return up to interaction \(\numofinteractions\) divided by \(\numofinteractions\). We approximate the expectation with repeated simulations. 

\subsection{Car-Pedestrian Interactions}

\begin{figure}
 \vspace*{10pt}
    \centering
    \input{plots/car_pedestrian_lambda_5}
    \input{plots/car_pedestrian}  
    \caption{The car's average return (averaged over $10^4$ simulations) in the pedestrian-car example. 
    Solid lines represent the average return for different strategies where $p$ is the probability of the car stopping.  
    Dashed lines represent the average return per interaction under full information, i.e., $\leaderdistribution^{\top} \leaderutilitymatrix\sigma_{\lambda}(B^{\top}x) $ for $x = [p, 1-p ].$ (Top) $\lambda = 5$. (Bottom) $\lambda =100$. }
    \label{fig:CP_matrix}
\end{figure}
We consider the bimatrix game presented in Table \ref{tab:bimarix} with a boundedly rational follower with maximum entropy response. We simulate the gameplay under inference for $100$ interactions with rationality constants $\rationalityconstant = 5$ and $\rationalityconstant = 100$. The car's strategy is determined by $p$, i.e., the probability that the car stops. For \(\rationalityconstant=5\) and \(\rationalityconstant=100\), the optimal \(p\) are \(0.77\) and \(0.53\) in the full information setting, respectively. 
We show the results in Fig. \ref{fig:CP_matrix} for different values of \(p\).

For $\lambda=100$, all strategies receive higher average returns as the number of interactions increases as the pedestrian's estimation improves. In the long run, $p = 0.53$, the optimal strategy for the car in the full information setting, would achieve the highest return.
However, after 100 interactions, this strategy is still underperforming compared to more deterministic strategies.
This is because a small error in the pedestrian's estimation \(\emprical{\randomversion{p}}_{\genericinteraction}\) results in large changes in the pedestrian's strategy, demonstrating the impact inference has on the leader's return.
On the other hand, as we expected, the strategies with higher stopping probabilities achieve higher transient returns:   
More deterministic strategies are easier for the pedestrian to infer at a small number of interactions $K$, and any error in \(\emprical{\randomversion{p}}_{\genericinteraction}\) results in only small changes to the pedestrian's strategy.
\textcolor{black}{
For $\lambda=5$, we also observe that more deterministic strategies such as $p=0.8$ or $p=0.9$ achieve higher returns in the transient period than $p=0.9$, which is optimal for the full information case. In this case, the pedestrian is less rational, caring less about the leader's strategy, and takes actions more uniformly randomly. Consequently, the inferability gaps are smaller, aligned with the bound given in Corollary \ref{thm:achieveability}. }

\subsection{General-sum Tug of War}
We consider a static Stackelberg game with parametric mixed strategies to demonstrate the importance of inferability. The leader's strategies are normal distributions \(\mathcal{N}(\mu^{\leader}, (s^{\leader})^2)\) parametrized by \((\mu^{\leader}, s^{\leader}) \in \mathbb{R}^2\). Similarly, the follower's are normal distributions \(\mathcal{N}(\mu^{\follower}, (s^{\follower})^2)\) parametrized by \((\mu^{\follower}, s^{\follower}) \in \mathbb{R}^2\). The leader's action is \(\leadergenericcontaction \sim \mathcal{N}(\mu^{\leader}, (s^{\leader})^2)\), and the follower's action is \(\followergenericcontaction \sim \mathcal{N}(\mu^{\follower}, (s^{\follower})^2)\). Let $0\leq c_{low} < c_{high} $ be constants. The return of the leader is \(1\) if \(\leadergenericcontaction + \followergenericcontaction \in [c_{low}, c_{high}]\), and \(0\) otherwise. Symmetrically around \(0\), the return of the follower is \(1\) if \(\leadergenericcontaction + \followergenericcontaction \in [-c_{high}, -c_{low}]\), and \(0\) otherwise. 

We note that the leader and follower try to pull the sum of their actions in different directions. This aspect of the game resembles a competitive ``tug of war''. On the other hand, for inconclusive outcomes, i.e., \(\leadergenericcontaction + \followergenericcontaction \in (-c_{low}, c_{low})\), or extreme outcomes, i.e., \(\leadergenericcontaction + \followergenericcontaction \in (-\infty, -c_{high}) \cup (c_{high}, \infty)\), both parties receive the same low return of \(0\). This aspect of the game is cooperative in that both parties aim to avoid these regions.

\begin{figure}[t]
    \centering
\input{plots/tug_of_war}
    \caption{The leader's expected return in the tug of war game for different mean and standard deviation values ($c_{low}=0.01$ and $c_{high}=5$). Lower values of standard deviation lead to higher returns for the lower number of interactions. \textcolor{black}{The leader's mean parameter has no effect on the expected return.}}
    \label{fig:tugofwar}
\end{figure}
Due to the independent sampling of \(\leadergenericcontaction\) and \(\followergenericcontaction\), we have \(\leadergenericcontaction + \followergenericcontaction \sim \mathcal{N}(\mu^{\leader}+\mu^{\follower}, (s^{\leader})^2 + (s^{\follower})^2)\). 
\textcolor{black}{One can observe that given the leader's strategy \(\mathcal{N}(\mu^{\leader}, (s^{\leader})^2)\), the follower's optimal strategy is \(\mathcal{N}((-c_{high}-c_{low})/2 - \mu^{\leader}, 0)\) since it maximizes the probability of \(\leadergenericcontaction + \followergenericcontaction \in [-c_{high}, -c_{low}]\) which is the follower's expected return. Given the follower's optimal strategy, we have \(\leadergenericcontaction + \followergenericcontaction \sim \mathcal{N}((-c_{high}-c_{low})/2, (s^{\leader})^2)\). The leader's expected return is the probability that \(\leadergenericcontaction + \followergenericcontaction \in [c_{low}, c_{high}]\) and the leader's mean parameter \(\mu^{\leader}\) has no effect on the leader's expected return. } As shown in Figure \ref{fig:tugofwar}, \(\stackelbergreturn(\mu^{\leader}, s^{\leader}) = \Pr(\leadergenericcontaction + \followergenericcontaction \in [c_{low}, c_{high}])\) is a unimodal function of \(s^{\leader}\) between \([0, \infty)\) and attains its maximum for a finite positive value of \(s^{\leader}\). Consequently, the leader's optimal strategy is mixed.

Consider a scenario where the leader and the follower will interact a certain number of times. The follower estimates the leader's mean from the previous interactions using the plug-in estimator. \textcolor{black}{After $k$ interactions, the follower's estimate of the leader's mean parameter is $\hat{\randomversion{\mu}}^{l}_{k}$ that is distributed as \(\mathcal{N}(\mu^{\leader}, (s^{\leader})^2)/k\). The follower's action $b_{k+1}$ is distributed as $\mathcal{N}(  (-c_{high}-c_{low})/2 - \hat{\randomversion{\mu}}^{l}_{k}, 0) = \mathcal{N}((-c_{high}-c_{low})/2 - \mu^{\leader}, (s^{\leader})^2)/k)$. Since the leader's action $\randomversion{a}_{k+1}$ is distributed as \(\mathcal{N}(\mu^{\leader}, (s^{\leader})^2)\), and $\randomversion{a}_{k+1}$ and $b_{k+1}$ are independently sampled, we have $a_{k+1} + b_{k+1} \sim \mathcal{N} ((-c_{high}-c_{low})/2, (s^{\leader})^2) + (s^{\leader})^2)/k)$, and the leader's return expected return at $(k+1)$-th interaction is the probability that $a_{k+1} + b_{k+1} \in [c_{low}, c_{high}]$.  We note that the leader's mean parameter \(\mu^{\leader}\) has no effect on the expected return since the follower's unbiased estimate cancels any changes to $\mu^{\leader}$. On the other hand, the expected return depends on the leader's variance parameter.} If the leader's strategy has high variance, i.e., it is not easily inferable, then the follower's estimation will be inaccurate. In return, the leader will suffer from an inferability gap. Figure \ref{fig:tugofwar} shows that strategies with lower variance achieve a higher return for lower number of interactions. 

\color{black}

\begin{figure}
 \vspace*{10pt}
    \centering
    \input{plots/optimal_C_random_matrix}
    \caption{The leader's average return for the randomly generated bimatrix games. 
    Solid lines represent the average return for the bound's local maxima for different values of the regularization constant $c$.  
    Dashed lines represent the average return per interaction under full information, i.e., $(x^*(c )))^{\top} \leaderutilitymatrix\sigma_{\lambda}(B^{\top}x^*(c )) $.}
    \label{fig:random_matrix}
\end{figure}

\subsection{Randomly Generated Bimatrix Games} \label{sec:experimentrandomgames}
We evaluate the performance under inference for randomly generated bimatrix games when the follower is boundedly rational with maximum entropy response.
From the achievability bound given in Corollary  \ref{thm:achieveability}, 
\begin{equation*}
    (K-1)\stackelbergreturn(\leaderdistribution) - c \, \nu(\leaderdistribution)  \leq   \sum_{k=2}^{K}\empricalreturn_{\genericinteraction}(\leaderdistribution)  
\end{equation*}
for some constant $c$ depending on \(\numofinteractions\).
We use \(\nu\) as a regularizer and optimize the bound for fixed values of $c$: 
\begin{equation*}
    \leaderdistribution^{*}(c) = \arg\max_{\leaderdistribution \in \simplex^{\leadernumofactions}} \stackelbergreturn(\leaderdistribution) - c \text{ } (\nu(\leaderdistribution)) ^{2}.
\end{equation*} 
and compare the performance of leader strategies for different values of $c$. We replace $\nu$ with $\nu^{2}$ in the optimization problem since the gradients of \(\nu^2\) are Lipschitz continuous. 
\textcolor{black}{
We note that even when $c = 0$, this is a nonconvex optimization problem. On the other hand, the objective is Lipschitz continuous thanks to the softmax response. To find a local optimum, we use gradient descent with decaying stepsize. 
We use the leader's optimal strategy from the Stackelberg game with a fully rational follower as the starting point for the gradient descent. We note that for the bimatrix game under consideration, by exploiting that the pure strategies are optimal for the follower, such a strategy can be obtained by solving $\followernumofactions$ linear programs~\cite{conitzer2006computing}. Each of the linear programs fixes the follower's action and constrains the leader's strategies to a polytope to ensure that the considered follower action is optimal.}

In this example, we randomly generate bimatrix games.  For each bimatrix game, the entries of the leader's  utility matrix \(  \leaderutilitymatrix\) are uniformly randomly distributed between  \(0 \) and \(1.\)
The follower's utility matrix \( \followerutilitymatrix  = \nicefrac{\leaderutilitymatrix}{2}  + \nicefrac{ C}{2}\), where $C$ is a uniformly randomly distributed matrix between  \(0 \) and \(1.\) This construction makes \(\leaderutilitymatrix\) and \(\followerutilitymatrix\) weakly positively correlated highlighting the importance of mixed strategies and inferability as explained in Section \ref{sec:importanceofinf}.

We randomly generate \(10,000 \) $ 4 \times 4$ bimatrix games. For each random bimatrix game, we find the leader's strategy \(\leaderdistribution^{*}(c) \) for \(c = 0, 1, 10,\) and \(100 \).
For each bimatrix game, we simulate play for $100$ interactions with rationality constant $\rationalityconstant = 100.$
We repeat the simulations 100 times, and the leader's return is averaged at each interaction over these simulations. 
Then, the leader's average return until interaction \(k\) for each bimatrix is averaged at each interaction \(k\) over all bimatrix games. Results are shown in Fig. \ref{fig:random_matrix}.

In these simulations, higher regularization constants correspond to more inferable (less stochastic) strategies, as more weight is given to the stochasticity level of a strategy. The optimal strategies for the full information setting \(\leaderdistribution^{*}(c=0) \) (the optimal strategy with no stochasticity regularization) achieves a higher average expected return in the long run (after $45$ interactions) since the follower's estimation accuracy improves with more interactions. However, these strategies still suffer inferability gap after $100$ interactions. 
For the first $ 45$ interactions, the regularization constant $c= 100$ yields higher average returns, and the average return reaches its final value even after the first interaction since the generated strategies are deterministic and estimated by the follower perfectly.

\section{Human Subject Study for the Autonomous Car Example}

We assess the importance of inferability in repeated Stackelberg games through an in-person human subject study\footnote{This study was approved by UT IRB study \#7222.} in a simulated driving environment. The experiment closely follows the autonomous car and pedestrian interaction scenario described in Section \ref{section:carpedestrianexample}. Different from the autonomous car and pedestrian interaction, the human in the study is the driver who is controlling a car.

\subsection{Experiment Scenario}
The experiment represents the interaction at T-intersection in the CARLA simulator~\cite{dosovitskiy2017carla}. 
There are two agents in the environment, an autonomous car and a human-driven car. The human-driven car approaches the intersection on the terminating road, and the autonomous car approaches the intersection on the through road. 
A view from the environment is given in Fig. \ref{fig:turn}.
\begin{figure}[t]
    \centering
    \includegraphics[width=1\linewidth]{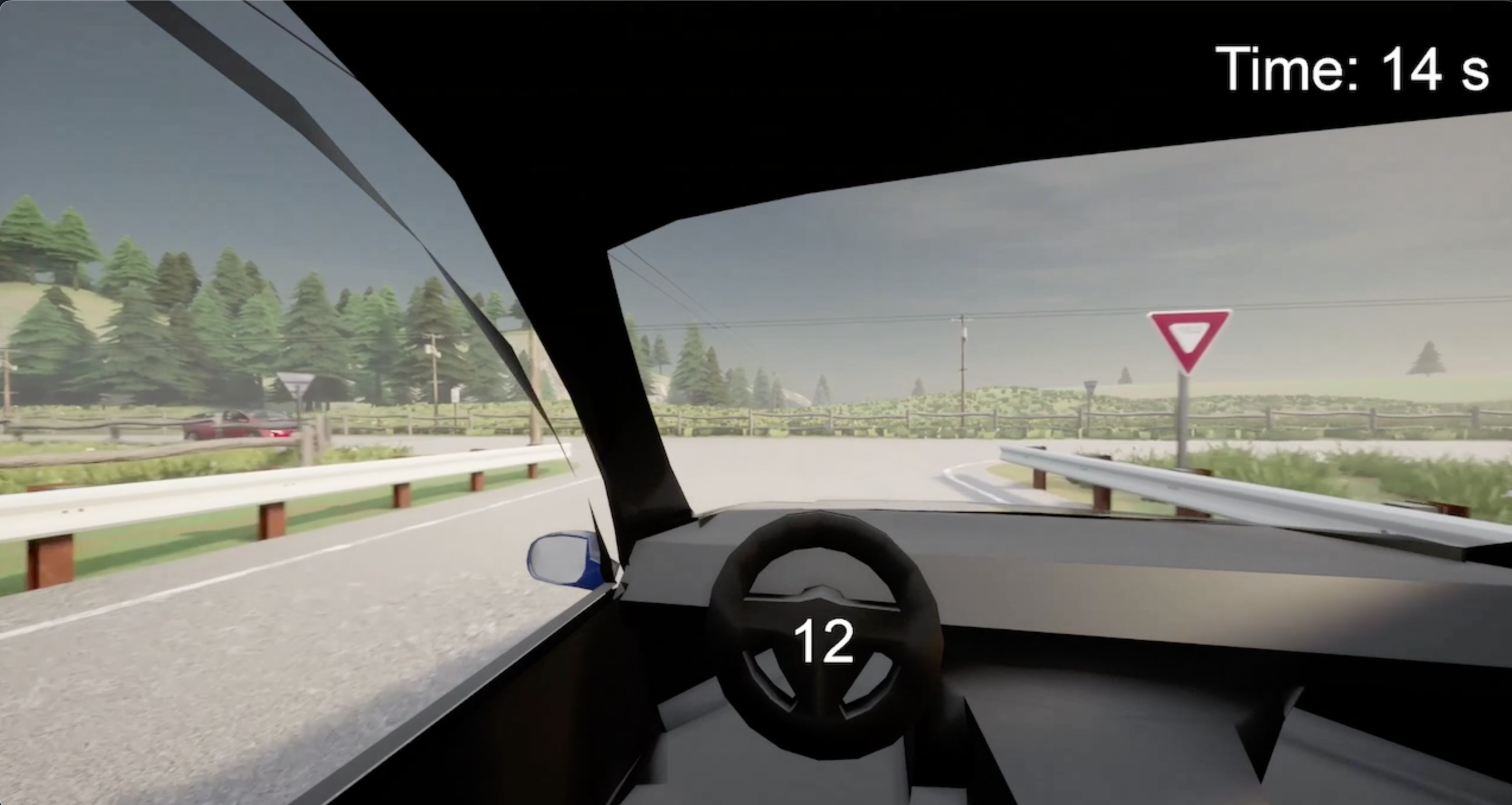}
    \caption{The simulated driving environment from the human driver's point of view as the driver approaches to the T-intersection. The autonomous car is the red car on the left.}
    \label{fig:turn}
\end{figure}

At the intersection, both the through road and the terminating road have yield signs. If the autonomous car decides to stop, it yields and gives the right of way to the human driver. If it decides not to stop, but the human also does not stop, the autonomous car makes an emergency stop to avoid a crash. 

The human-driven car is controlled using a commercially available video game driving wheel and pedals.

\subsection{Objectives and Rewards}
The objectives of the autonomous car are to pass through the intersection in minimal time and avoid an emergency stop. The instructed objectives of the human driver are completing a left turn at the intersection in minimal time and avoiding getting fined for not stopping. Based on these objectives, we give (virtual) rewards to the participants.

To observe the interaction between the players at the intersection, we do not give a reward to the participants if the turn is not completed within a certain time frame. Completing the turn in time and stopping for the autonomous car is not possible in the experiment if the autonomous car initially decides not to stop for the human driver. 

    \begin{table}[h]
    \caption{Utilities for the autonomous car and human driven car interaction}\label{tab:bimarixhss}
    \centering
\begin{tabular}{cccc}
          &                              & \multicolumn{2}{l}{Human's actions}                 \\ \cline{3-4} 
          & \multicolumn{1}{c|}{}        & \multicolumn{1}{c|}{Stop}  & \multicolumn{1}{c|}{Proceed}  \\ \cline{2-4} 
\multicolumn{1}{c|}{\multirow{2}{*}{\begin{tabular}[c]{@{}c@{}}Autonomous car's\\ actions\end{tabular}}} & \multicolumn{1}{c|}{Stop} & \multicolumn{1}{c|}{(0,2)} & \multicolumn{1}{c|}{(0,1)} \\ \cline{2-4} 
\multicolumn{1}{c|}{}                                                                         & \multicolumn{1}{c|}{Proceed}    & \multicolumn{1}{c|}{(2,0)}  & \multicolumn{1}{c|}{(-8,1)} \\ \cline{2-4} 
\end{tabular}
\end{table}

Table \ref{tab:bimarixhss} shows the (virtual) rewards given to the players when the human completes the turn in time. Similar to the example from Section \ref{section:carpedestrianexample}, the autonomous car collects a reward of $0$ if it stops for the human driver. It collects a reward of $2$ if it proceeds without stopping and the human driver stops. It collects a reward of $-8$ if it proceeds and the human decides to proceed as well because of an emergency stop. 

The participants collect a reward of $2$ if the turn is completed in time and they also stop for the autonomous car. They collect a reward of $1$ if the turn is completed in time, but they do not stop for the autonomous car. They collect a reward of $0$ if the turn is not completed in time.

As in the example from Section \ref{section:carpedestrianexample}, the optimal strategy for the autonomous car is to choose a stopping probability $p$ such that \(\genericprobability > 0.5\) and \(\genericprobability \approx 0.5\) in the full information setting. Under such a strategy, the human driver's optimal action is to stop, and the autonomous car's expected return is approximately $0.5$. 

\subsection{Experiment Setting and Independent Variables}
We recruited 24 participants with ages ranging from 19 to 31. After the participants completed a series of training turns to get familiar with the controllers, we asked each participant to interact with two types of autonomous cars that have different stopping probabilities. The types consisted of autonomous cars with stopping probabilities of 0.1, 0.25, 0.4, 0.6, 0.75, and 0.9. We randomly assigned the participants to the types. 

A participant interacted with a type 15 times, where each interaction lasted less than 45 seconds. At each interaction, the autonomous vehicle stopped with its type probability independent of the other interactions.

Before the interactions began, we instructed the participants to maximize their collected virtual rewards. After each interaction, we showed the participants the outcome of the scenario and the virtual reward.

The video recording of an interaction is available at \url{https://github.com/mustafakarabag/Inferability}.

\subsection{Dependent Variables}
Before each interaction (including the first interaction), participants filled out a survey regarding their estimations about the autonomous car. The questions and answer scales are (i) ``What is the percentage that the other car would do a courtesy stop in the next interaction?'' with allowed answers between 0 and 100 with increments of 1, and (ii) ``What is your confidence level in this estimation?'' with responses ``Not confident at all", ``Slightly confident", ``Somewhat confident", ``Fairly confident", and
                     ``Completely confident".

We collected the outcome of each interaction, including whether the autonomous car did a courtesy stop, the autonomous car did an emergency stop, the human driver stopped, and the human driver completed the turn in time. 

\subsection{Results and Discussion}

We give the results for the study in Figs. \ref{fig:estimationgraphshss}-\ref{fig:returngraphshss}. 
In Fig. \ref{fig:estimationgraphshss}, we plot the participant's estimate for the autonomous car's stopping probability after every interaction (solid lines) and the sample mean estimate for the autonomous car's stopping probability after every interaction (dashed lines). The pair of transparent lines for each color represents the data for a participant. The solid black line, A-E (Averaged estimate), represents the estimate of the autonomous car's stopping probability by humans, averaged over all participants. The dashed black line, A-E* (Averaged optimal estimates), represents the sample mean estimate of the autonomous car's stopping probability by humans, averaged over all participants.

The participants reported the following final confidence levels for the estimates after all interactions: \(4.12\text{ for }p=0.1\), \ \(  3.71 \text{ for }p=0.25\),\ \( 4.14 \text{ for } p=0.4\),\ \( 3.77 \text{ for } p=0.6\), \ \(4.22 \text{ for } p=0.75\), and \(4.28 \text{ for } p=0.9\) where  ``Not confident at all" is level $1$ and ``Completely confident" is level $5$.

\begin{figure}[!t]
  \centering
\begin{tikzpicture}

\definecolor{darkgray176}{RGB}{176,176,176}
\definecolor{lightgray204}{RGB}{204,204,204}

\begin{axis}[
legend cell align={left},
legend style={
  fill opacity=0.8,
  draw opacity=1,
  text opacity=1,
  at={(0.03,0.97)},
  anchor=north west,
  draw=lightgray204
},
tick align=outside,
tick pos=left,
unbounded coords=jump,
x grid style={darkgray176},
xlabel={\small Interaction},
xmin=1, xmax=15,
xtick style={color=black},
y grid style={darkgray176},
ylabel={\small Probability},
ymin=-0.1, ymax=1.1,
ytick style={color=black},
width=1\columnwidth,
height=0.45\columnwidth,
    legend style={
      at={(1,1)},
      anchor=north east,
      legend columns=3
    }
]\addlegendimage{empty legend};
\addlegendentry{\small $p=$ 0.1}
\addlegendimage{empty legend};
\addlegendentry{\small $\#$: 8}
\addlegendimage{empty legend};
\addlegendentry{\small }
\addplot [ultra thick, white!80!blue, forget plot]
table {%
0 0
1 0.5
2 0
3 0
4 0
5 0
6 0
7 0
8 0
9 0
10 0
11 0
12 0.07
13 0.07
14 0.07
15 0.07
};
\addplot [ultra thick, white!80!blue, dash pattern=on 5.55pt off 2.4pt, forget plot]
table {%
0 0
1 nan
2 0
3 0
4 0
5 0
6 0
7 0
8 0
9 0
10 0
11 0
12 0.0909090909090909
13 0.0833333333333333
14 0.0769230769230769
15 0.0714285714285714
};
\addplot [ultra thick, white!80!olive, forget plot]
table {%
0 0
1 0.5
2 0.25
3 0
4 0.25
5 0.25
6 0
7 0
8 0
9 0
10 0
11 0
12 0
13 0.1
14 0.1
15 0.03
};
\addplot [ultra thick, white!80!olive, dash pattern=on 5.55pt off 2.4pt, forget plot]
table {%
0 0
1 nan
2 0
3 0
4 0
5 0
6 0
7 0
8 0
9 0
10 0
11 0
12 0
13 0.0833333333333333
14 0.0769230769230769
15 0.0714285714285714
};
\addplot [ultra thick, white!80!brown, forget plot]
table {%
0 0
1 0.74
2 0.32
3 0.21
4 0.21
5 0.21
6 0.21
7 0.21
8 0.21
9 0.5
10 0.66
11 0.66
12 0.66
13 0.66
14 0.66
15 0.66
};
\addplot [ultra thick, white!80!brown, dash pattern=on 5.55pt off 2.4pt, forget plot]
table {%
0 0
1 nan
2 0
3 0
4 0
5 0
6 0
7 0
8 0
9 0
10 0
11 0
12 0
13 0
14 0
15 0
};
\addplot [ultra thick, white!80!pink, forget plot]
table {%
0 0
1 0.5
2 0.5
3 0.5
4 0.5
5 0.76
6 0.79
7 0.79
8 0.79
9 0.82
10 0.82
11 0.82
12 0.82
13 0.82
14 0.82
15 0.82
};
\addplot [ultra thick, white!80!pink, dash pattern=on 5.55pt off 2.4pt, forget plot]
table {%
0 0
1 nan
2 0
3 0
4 0
5 0
6 0
7 0
8 0
9 0.125
10 0.111111111111111
11 0.1
12 0.0909090909090909
13 0.166666666666667
14 0.153846153846154
15 0.142857142857143
};
\addplot [ultra thick, white!80!purple, forget plot]
table {%
0 0
1 0.51
2 0.51
3 0.51
4 0.51
5 0.5
6 0.5
7 0.5
8 0.26
9 0.26
10 0.16
11 0.16
12 0.16
13 0.1
14 0.1
15 0.1
};
\addplot [ultra thick, white!80!purple, dash pattern=on 5.55pt off 2.4pt, forget plot]
table {%
0 0
1 nan
2 0
3 0
4 0
5 0
6 0
7 0
8 0
9 0
10 0
11 0
12 0
13 0
14 0
15 0
};
\addplot [ultra thick, white!80!green, forget plot]
table {%
0 0
1 0.5
2 0.5
3 0.5
4 0.34
5 0.34
6 0.29
7 0.29
8 0.1
9 0.04
10 0
11 0
12 0
13 0
14 0
15 0
};
\addplot [ultra thick, white!80!green, dash pattern=on 5.55pt off 2.4pt, forget plot]
table {%
0 0
1 nan
2 0
3 0
4 0
5 0
6 0
7 0
8 0
9 0
10 0
11 0
12 0
13 0
14 0
15 0
};
\addplot [ultra thick, white!80!red, forget plot]
table {%
0 0
1 0
2 0.06
3 0.25
4 0.25
5 0.15
6 0.15
7 0.1
8 0.1
9 0.1
10 0.1
11 0.15
12 0.09
13 0.06
14 0.06
15 0.06
};
\addplot [ultra thick, white!80!red, dash pattern=on 5.55pt off 2.4pt, forget plot]
table {%
0 0
1 nan
2 0
3 0.5
4 0.666666666666667
5 0.5
6 0.4
7 0.333333333333333
8 0.285714285714286
9 0.25
10 0.222222222222222
11 0.2
12 0.181818181818182
13 0.166666666666667
14 0.153846153846154
15 0.142857142857143
};
\addplot [ultra thick, white!80!orange, forget plot]
table {%
0 0
1 0.15
2 0.15
3 0.05
4 0.05
5 0.02
6 0.02
7 0.02
8 0.02
9 0.01
10 0.01
11 0
12 0
13 0
14 0
15 0
};
\addplot [ultra thick, white!80!orange, dash pattern=on 5.55pt off 2.4pt, forget plot]
table {%
0 0
1 nan
2 0
3 0
4 0
5 0
6 0
7 0
8 0
9 0
10 0
11 0
12 0
13 0
14 0
15 0
};
\addplot [ultra thick, black]
table {%
0 0
1 0.425
2 0.28625
3 0.2525
4 0.26375
5 0.27875
6 0.245
7 0.23875
8 0.185
9 0.21625
10 0.21875
11 0.22375
12 0.225
13 0.22625
14 0.22625
15 0.2175
};
\addlegendentry{\small A-E}
\addplot [ultra thick, black, dash pattern=on 7.4pt off 3.2pt]
table {%
0 0
1 nan
2 0
3 0.0625
4 0.0833333333333333
5 0.0625
6 0.05
7 0.0416666666666667
8 0.0357142857142857
9 0.046875
10 0.0416666666666667
11 0.0375
12 0.0454545454545455
13 0.0625
14 0.0576923076923077
15 0.0535714285714286
};
\addlegendentry{\small A-E*}
\end{axis}

\end{tikzpicture}
\vspace{2mm}
\begin{tikzpicture}

\definecolor{darkgray176}{RGB}{176,176,176}
\definecolor{lightgray204}{RGB}{204,204,204}

\begin{axis}[
legend cell align={left},
legend style={fill opacity=0.8, draw opacity=1, text opacity=1, draw=lightgray204},
tick align=outside,
tick pos=left,
unbounded coords=jump,
x grid style={darkgray176},
xlabel={\small Interaction},
xmin=1, xmax=15,
xtick style={color=black},
y grid style={darkgray176},
ylabel={\small Probability},
ymin=-0.1, ymax=1.1,
ytick style={color=black},
width=1\columnwidth,
height=0.45\columnwidth,
    legend style={
      at={(1,1)},
      anchor=north east,
      legend columns=3
    }
]\addlegendimage{empty legend};
\addlegendentry{\small $p=$ 0.25}
\addlegendimage{empty legend};
\addlegendentry{\small $\#$: 7}
\addlegendimage{empty legend};
\addlegendentry{\small }
\addplot [ultra thick, white!80!blue, forget plot]
table {%
0 0
1 0
2 0.51
3 0.78
4 0.37
5 0.1
6 0.22
7 0.22
8 0.1
9 0.1
10 0.16
11 0.16
12 0.22
13 0.21
14 0.21
15 0.21
};
\addplot [ultra thick, white!80!blue, dash pattern=on 5.55pt off 2.4pt, forget plot]
table {%
0 0
1 nan
2 0
3 0
4 0
5 0
6 0.2
7 0.166666666666667
8 0.142857142857143
9 0.25
10 0.222222222222222
11 0.2
12 0.272727272727273
13 0.25
14 0.230769230769231
15 0.214285714285714
};
\addplot [ultra thick, white!80!purple, forget plot]
table {%
0 0
1 0.5
2 0.41
3 0.35
4 0.29
5 0.21
6 0.16
7 0.1
8 0.06
9 0.12
10 0.1
11 0.18
12 0.26
13 0.25
14 0.22
15 0.34
};
\addplot [ultra thick, white!80!purple, dash pattern=on 5.55pt off 2.4pt, forget plot]
table {%
0 0
1 nan
2 0
3 0
4 0
5 0
6 0
7 0
8 0
9 0.125
10 0.111111111111111
11 0.2
12 0.272727272727273
13 0.25
14 0.230769230769231
15 0.285714285714286
};
\addplot [ultra thick, white!80!brown, forget plot]
table {%
0 0
1 0.5
2 0.5
3 0
4 0
5 0
6 0
7 0
8 0.1
9 0.1
10 0.1
11 0.1
12 0.1
13 0.1
14 0.19
15 0.1
};
\addplot [ultra thick, white!80!brown, dash pattern=on 5.55pt off 2.4pt, forget plot]
table {%
0 0
1 nan
2 0
3 0
4 0
5 0
6 0
7 0
8 0.142857142857143
9 0.125
10 0.222222222222222
11 0.3
12 0.272727272727273
13 0.25
14 0.230769230769231
15 0.214285714285714
};
\addplot [ultra thick, white!80!green, forget plot]
table {%
0 0
1 0.5
2 0.5
3 0.5
4 0.5
5 0.5
6 0.4
7 0.4
8 0.32
9 0.32
10 0.32
11 0.32
12 0.32
13 0.32
14 0.32
15 0.32
};
\addplot [ultra thick, white!80!green, dash pattern=on 5.55pt off 2.4pt, forget plot]
table {%
0 0
1 nan
2 1
3 0.5
4 0.333333333333333
5 0.25
6 0.2
7 0.166666666666667
8 0.142857142857143
9 0.125
10 0.111111111111111
11 0.2
12 0.181818181818182
13 0.25
14 0.230769230769231
15 0.285714285714286
};
\addplot [ultra thick, white!80!orange, forget plot]
table {%
0 0
1 0.5
2 0.5
3 0.5
4 0.5
5 0.5
6 0.5
7 0.5
8 0.5
9 0.4
10 0.31
11 0.21
12 0.1
13 0.07
14 0.07
15 0.07
};
\addplot [ultra thick, white!80!orange, dash pattern=on 5.55pt off 2.4pt, forget plot]
table {%
0 0
1 nan
2 1
3 0.5
4 0.333333333333333
5 0.25
6 0.4
7 0.333333333333333
8 0.285714285714286
9 0.25
10 0.222222222222222
11 0.2
12 0.181818181818182
13 0.166666666666667
14 0.153846153846154
15 0.142857142857143
};
\addplot [ultra thick, white!80!red, forget plot]
table {%
0 0
1 0
2 0.5
3 0
4 0.21
5 0.4
6 0.5
7 0.5
8 0.5
9 0.5
10 0.5
11 0.5
12 0.5
13 0.5
14 0.5
15 0.5
};
\addplot [ultra thick, white!80!red, dash pattern=on 5.55pt off 2.4pt, forget plot]
table {%
0 0
1 nan
2 0
3 0
4 0.333333333333333
5 0.25
6 0.2
7 0.166666666666667
8 0.285714285714286
9 0.25
10 0.333333333333333
11 0.3
12 0.363636363636364
13 0.333333333333333
14 0.307692307692308
15 0.285714285714286
};
\addplot [ultra thick, white!80!pink, forget plot]
table {%
0 0
1 0.51
2 0.28
3 0.12
4 0.06
5 0
6 0
7 0.18
8 0.18
9 0.18
10 0.18
11 0.22
12 0.22
13 0.22
14 0.31
15 0.44
};
\addplot [ultra thick, white!80!pink, dash pattern=on 5.55pt off 2.4pt, forget plot]
table {%
0 0
1 nan
2 0
3 0
4 0
5 0
6 0
7 0.166666666666667
8 0.142857142857143
9 0.125
10 0.111111111111111
11 0.2
12 0.181818181818182
13 0.166666666666667
14 0.230769230769231
15 0.285714285714286
};
\addplot [ultra thick, black]
table {%
0 0
1 0.358571428571429
2 0.457142857142857
3 0.321428571428571
4 0.275714285714286
5 0.244285714285714
6 0.254285714285714
7 0.271428571428571
8 0.251428571428571
9 0.245714285714286
10 0.238571428571429
11 0.241428571428571
12 0.245714285714286
13 0.238571428571429
14 0.26
15 0.282857142857143
};
\addlegendentry{\small A-E}
\addplot [ultra thick, black, dash pattern=on 7.4pt off 3.2pt]
table {%
0 0
1 nan
2 0.285714285714286
3 0.142857142857143
4 0.142857142857143
5 0.107142857142857
6 0.142857142857143
7 0.142857142857143
8 0.163265306122449
9 0.178571428571429
10 0.19047619047619
11 0.228571428571429
12 0.246753246753247
13 0.238095238095238
14 0.230769230769231
15 0.244897959183673
};
\addlegendentry{\small A-E*}
\end{axis}

\end{tikzpicture}
\vspace{2mm}
\begin{tikzpicture}

\definecolor{darkgray176}{RGB}{176,176,176}
\definecolor{lightgray204}{RGB}{204,204,204}

\begin{axis}[
legend cell align={left},
legend style={fill opacity=0.8, draw opacity=1, text opacity=1, draw=lightgray204},
tick align=outside,
tick pos=left,
unbounded coords=jump,
x grid style={darkgray176},
xlabel={\small Interaction},
xmin=1, xmax=15,
xtick style={color=black},
y grid style={darkgray176},
ylabel={\small Probability},
ymin=-0.1, ymax=1.1,
ytick style={color=black},
width=1\columnwidth,
height=0.45\columnwidth,
    legend style={
      at={(1,1)},
      anchor=north east,
      legend columns=3
    }
]\addlegendimage{empty legend};
\addlegendentry{\small $p=$ 0.4}
\addlegendimage{empty legend};
\addlegendentry{\small $\#$: 7}
\addlegendimage{empty legend};
\addlegendentry{\small }
\addplot [ultra thick, white!80!blue, forget plot]
table {%
0 0
1 0.5
2 0.25
3 0.21
4 0.21
5 0.31
6 0.43
7 0.75
8 0.5
9 0.66
10 0.51
11 0.31
12 0.31
13 0.31
14 0.31
15 0.12
};
\addplot [ultra thick, white!80!blue, dash pattern=on 5.55pt off 2.4pt, forget plot]
table {%
0 0
1 nan
2 0
3 0
4 0
5 0.25
6 0.4
7 0.5
8 0.428571428571429
9 0.5
10 0.444444444444444
11 0.4
12 0.363636363636364
13 0.333333333333333
14 0.384615384615385
15 0.357142857142857
};
\addplot [ultra thick, white!80!purple, forget plot]
table {%
0 0
1 0
2 0
3 0
4 0.1
5 0.1
6 0.07
7 0.07
8 0.07
9 0.07
10 0.07
11 0.07
12 0.07
13 0.07
14 0.07
15 0.03
};
\addplot [ultra thick, white!80!purple, dash pattern=on 5.55pt off 2.4pt, forget plot]
table {%
0 0
1 nan
2 0
3 0
4 0.333333333333333
5 0.5
6 0.4
7 0.5
8 0.571428571428571
9 0.625
10 0.555555555555556
11 0.5
12 0.545454545454545
13 0.5
14 0.461538461538462
15 0.5
};
\addplot [ultra thick, white!80!olive, forget plot]
table {%
0 0
1 0.54
2 0.38
3 0.19
4 0.19
5 0.19
6 0.19
7 0.1
8 0.1
9 0.1
10 0.1
11 0.1
12 0.1
13 0.25
14 0.25
15 0.25
};
\addplot [ultra thick, white!80!olive, dash pattern=on 5.55pt off 2.4pt, forget plot]
table {%
0 0
1 nan
2 0
3 0
4 0
5 0
6 0
7 0
8 0
9 0
10 0.111111111111111
11 0.1
12 0.181818181818182
13 0.25
14 0.307692307692308
15 0.285714285714286
};
\addplot [ultra thick, white!80!green, forget plot]
table {%
0 0
1 0.5
2 0.5
3 0.5
4 0.5
5 0.5
6 0.5
7 0.62
8 0.62
9 0.62
10 0.62
11 0.62
12 0.74
13 0.74
14 0.74
15 0.74
};
\addplot [ultra thick, white!80!green, dash pattern=on 5.55pt off 2.4pt, forget plot]
table {%
0 0
1 nan
2 1
3 0.5
4 0.333333333333333
5 0.5
6 0.6
7 0.666666666666667
8 0.571428571428571
9 0.625
10 0.555555555555556
11 0.6
12 0.636363636363636
13 0.666666666666667
14 0.615384615384615
15 0.571428571428571
};
\addplot [ultra thick, white!80!orange, forget plot]
table {%
0 0
1 0.5
2 0.28
3 0
4 0
5 0
6 0
7 0
8 0
9 0
10 0
11 0
12 0.24
13 0.1
14 0.1
15 0.1
};
\addplot [ultra thick, white!80!orange, dash pattern=on 5.55pt off 2.4pt, forget plot]
table {%
0 0
1 nan
2 0
3 0
4 0
5 0
6 0
7 0
8 0.142857142857143
9 0.125
10 0.111111111111111
11 0.2
12 0.272727272727273
13 0.25
14 0.307692307692308
15 0.285714285714286
};
\addplot [ultra thick, white!80!red, forget plot]
table {%
0 0
1 0.25
2 0.25
3 0.15
4 0.15
5 0.1
6 0.04
7 0.18
8 0.22
9 0.19
10 0.32
11 0.29
12 0.25
13 0.25
14 0.19
15 0.19
};
\addplot [ultra thick, white!80!red, dash pattern=on 5.55pt off 2.4pt, forget plot]
table {%
0 0
1 nan
2 0
3 0.5
4 0.333333333333333
5 0.25
6 0.2
7 0.333333333333333
8 0.428571428571429
9 0.375
10 0.444444444444444
11 0.4
12 0.363636363636364
13 0.333333333333333
14 0.307692307692308
15 0.285714285714286
};
\addplot [ultra thick, white!80!pink, forget plot]
table {%
0 0
1 0.51
2 0.51
3 0.5
4 0.66
5 0.57
6 0.5
7 0.5
8 0.34
9 0.25
10 0.25
11 0.15
12 0.1
13 0.1
14 0.1
15 0.1
};
\addplot [ultra thick, white!80!pink, dash pattern=on 5.55pt off 2.4pt, forget plot]
table {%
0 0
1 nan
2 0
3 0.5
4 0.666666666666667
5 0.5
6 0.4
7 0.333333333333333
8 0.285714285714286
9 0.25
10 0.222222222222222
11 0.2
12 0.181818181818182
13 0.166666666666667
14 0.230769230769231
15 0.285714285714286
};
\addplot [ultra thick, black]
table {%
0 0
1 0.4
2 0.31
3 0.221428571428571
4 0.258571428571429
5 0.252857142857143
6 0.247142857142857
7 0.317142857142857
8 0.264285714285714
9 0.27
10 0.267142857142857
11 0.22
12 0.258571428571429
13 0.26
14 0.251428571428571
15 0.218571428571429
};
\addlegendentry{\small A-E}
\addplot [ultra thick, black, dash pattern=on 7.4pt off 3.2pt]
table {%
0 0
1 nan
2 0.142857142857143
3 0.214285714285714
4 0.238095238095238
5 0.285714285714286
6 0.285714285714286
7 0.333333333333333
8 0.346938775510204
9 0.357142857142857
10 0.349206349206349
11 0.342857142857143
12 0.363636363636364
13 0.357142857142857
14 0.373626373626374
15 0.36734693877551
};
\addlegendentry{\small A-E*}
\end{axis}

\end{tikzpicture}
\vspace{2mm}
\begin{tikzpicture}

\definecolor{darkgray176}{RGB}{176,176,176}
\definecolor{lightgray204}{RGB}{204,204,204}

\begin{axis}[
legend cell align={left},
legend style={
  fill opacity=0.8,
  draw opacity=1,
  text opacity=1,
  at={(1,1)},
  anchor=east,
  draw=lightgray204
},
tick align=outside,
tick pos=left,
unbounded coords=jump,
x grid style={darkgray176},
xlabel={\small Interaction},
xmin=1, xmax=15,
xtick style={color=black},
y grid style={darkgray176},
ylabel={\small Probability},
ymin=-0.1, ymax=1.1,
ytick style={color=black},
width=1\columnwidth,
height=0.45\columnwidth,
    legend style={
      at={(1,0.42)},
      anchor=north east,
      legend columns=3
    }
]\addlegendimage{empty legend};
\addlegendentry{\small $p=$ 0.6}
\addlegendimage{empty legend};
\addlegendentry{\small $\#$: 9}
\addlegendimage{empty legend};
\addlegendentry{\small }
\addplot [ultra thick, white!80!blue, forget plot]
table {%
0 0
1 0.5
2 1
3 0.5
4 0.32
5 0.49
6 0.41
7 0.5
8 0.65
9 0.69
10 0.76
11 0.79
12 0.81
13 0.75
14 0.71
15 0.72
};
\addplot [ultra thick, white!80!blue, dash pattern=on 5.55pt off 2.4pt, forget plot]
table {%
0 0
1 nan
2 1
3 0.5
4 0.333333333333333
5 0.5
6 0.4
7 0.333333333333333
8 0.428571428571429
9 0.5
10 0.555555555555556
11 0.6
12 0.636363636363636
13 0.583333333333333
14 0.538461538461538
15 0.571428571428571
};
\addplot [ultra thick, white!80!red, forget plot]
table {%
0 0
1 0.62
2 0.32
3 0.71
4 0.71
5 0.37
6 0.37
7 0.37
8 0.37
9 0.37
10 0.37
11 0.37
12 0.29
13 0.29
14 0.29
15 0.29
};
\addplot [ultra thick, white!80!red, dash pattern=on 5.55pt off 2.4pt, forget plot]
table {%
0 0
1 nan
2 1
3 1
4 0.666666666666667
5 0.5
6 0.6
7 0.5
8 0.571428571428571
9 0.625
10 0.555555555555556
11 0.5
12 0.454545454545455
13 0.416666666666667
14 0.384615384615385
15 0.428571428571429
};
\addplot [ultra thick, white!80!brown, forget plot]
table {%
0 0
1 0.5
2 0.5
3 0.5
4 0.5
5 0.5
6 0.5
7 0.5
8 0.5
9 0.5
10 0.5
11 0.78
12 0.78
13 0.78
14 0.74
15 0.74
};
\addplot [ultra thick, white!80!brown, dash pattern=on 5.55pt off 2.4pt, forget plot]
table {%
0 0
1 nan
2 1
3 1
4 1
5 1
6 0.8
7 0.833333333333333
8 0.857142857142857
9 0.875
10 0.888888888888889
11 0.8
12 0.818181818181818
13 0.75
14 0.692307692307692
15 0.642857142857143
};
\addplot [ultra thick, white!80!cyan, forget plot]
table {%
0 0
1 0.5
2 0.5
3 0.5
4 0.5
5 0.5
6 0.5
7 0.5
8 0.5
9 0.72
10 0.72
11 0.72
12 0.72
13 0.72
14 0.72
15 0.82
};
\addplot [ultra thick, white!80!cyan, dash pattern=on 5.55pt off 2.4pt, forget plot]
table {%
0 0
1 nan
2 0
3 0.5
4 0.666666666666667
5 0.75
6 0.6
7 0.666666666666667
8 0.714285714285714
9 0.75
10 0.777777777777778
11 0.8
12 0.727272727272727
13 0.75
14 0.769230769230769
15 0.785714285714286
};
\addplot [ultra thick, white!80!pink, forget plot]
table {%
0 0
1 0.5
2 0.25
3 0.25
4 0.15
5 0.15
6 0.25
7 0.4
8 0.5
9 0.6
10 0.75
11 0.75
12 0.75
13 0.75
14 0.75
15 0.75
};
\addplot [ultra thick, white!80!pink, dash pattern=on 5.55pt off 2.4pt, forget plot]
table {%
0 0
1 nan
2 0
3 0.5
4 0.333333333333333
5 0.5
6 0.6
7 0.666666666666667
8 0.714285714285714
9 0.75
10 0.777777777777778
11 0.7
12 0.727272727272727
13 0.75
14 0.692307692307692
15 0.714285714285714
};
\addplot [ultra thick, white!80!orange, forget plot]
table {%
0 0
1 0.71
2 0.71
3 0.71
4 0.71
5 0.71
6 0.71
7 0.71
8 0.71
9 0.71
10 0.71
11 0.66
12 0.66
13 0.66
14 0.66
15 0.66
};
\addplot [ultra thick, white!80!orange, dash pattern=on 5.55pt off 2.4pt, forget plot]
table {%
0 0
1 nan
2 0
3 0.5
4 0.666666666666667
5 0.75
6 0.6
7 0.5
8 0.428571428571429
9 0.375
10 0.333333333333333
11 0.3
12 0.363636363636364
13 0.333333333333333
14 0.307692307692308
15 0.357142857142857
};
\addplot [ultra thick, white!80!green, forget plot]
table {%
0 0
1 0.5
2 0.5
3 0.24
4 0.5
5 0.22
6 0.5
7 0.62
8 0.62
9 0.51
10 0.63
11 0.41
12 0.31
13 0.31
14 0.21
15 0.15
};
\addplot [ultra thick, white!80!green, dash pattern=on 5.55pt off 2.4pt, forget plot]
table {%
0 0
1 nan
2 0
3 0.5
4 0.666666666666667
5 0.75
6 0.8
7 0.833333333333333
8 0.714285714285714
9 0.625
10 0.666666666666667
11 0.6
12 0.545454545454545
13 0.5
14 0.461538461538462
15 0.428571428571429
};
\addplot [ultra thick, white!80!olive, forget plot]
table {%
0 0
1 0.5
2 0.5
3 0.66
4 0.66
5 0.75
6 0.69
7 0.74
8 0.81
9 0.75
10 0.71
11 0.6
12 0.6
13 0.56
14 0.56
15 0.68
};
\addplot [ultra thick, white!80!olive, dash pattern=on 5.55pt off 2.4pt, forget plot]
table {%
0 0
1 nan
2 1
3 1
4 0.666666666666667
5 0.75
6 0.6
7 0.666666666666667
8 0.714285714285714
9 0.625
10 0.555555555555556
11 0.5
12 0.454545454545455
13 0.416666666666667
14 0.384615384615385
15 0.428571428571429
};
\addplot [ultra thick, white!80!purple, forget plot]
table {%
0 0
1 0.15
2 0.6
3 0.7
4 0.75
5 0.65
6 0.65
7 0.5
8 0.4
9 0.45
10 0.45
11 0.45
12 0.4
13 0.4
14 0.4
15 0.4
};
\addplot [ultra thick, white!80!purple, dash pattern=on 5.55pt off 2.4pt, forget plot]
table {%
0 0
1 nan
2 1
3 1
4 1
5 0.75
6 0.8
7 0.666666666666667
8 0.571428571428571
9 0.625
10 0.555555555555556
11 0.6
12 0.636363636363636
13 0.666666666666667
14 0.615384615384615
15 0.642857142857143
};
\addplot [ultra thick, black]
table {%
0 0
1 0.497777777777778
2 0.542222222222222
3 0.53
4 0.533333333333333
5 0.482222222222222
6 0.508888888888889
7 0.537777777777778
8 0.562222222222222
9 0.588888888888889
10 0.622222222222222
11 0.614444444444444
12 0.591111111111111
13 0.58
14 0.56
15 0.578888888888889
};
\addlegendentry{\small A-E}
\addplot [ultra thick, black, dash pattern=on 7.4pt off 3.2pt]
table {%
0 0
1 nan
2 0.555555555555556
3 0.722222222222222
4 0.666666666666667
5 0.694444444444444
6 0.644444444444444
7 0.62962962962963
8 0.634920634920635
9 0.638888888888889
10 0.62962962962963
11 0.6
12 0.595959595959596
13 0.574074074074074
14 0.538461538461539
15 0.555555555555556
};
\addlegendentry{\small A-E*}
\end{axis}

\end{tikzpicture}
\vspace{2mm}
\begin{tikzpicture}

\definecolor{darkgray176}{RGB}{176,176,176}
\definecolor{lightgray204}{RGB}{204,204,204}

\begin{axis}[
legend cell align={left},
legend style={
  fill opacity=0.8,
  draw opacity=1,
  text opacity=1,
  at={(0.97,0.03)},
  anchor=south east,
  draw=lightgray204
},
tick align=outside,
tick pos=left,
unbounded coords=jump,
x grid style={darkgray176},
xlabel={\small  \small Interaction},
xmin=1, xmax=15,
xtick style={color=black},
y grid style={darkgray176},
ylabel={\small Probability},
ymin=-0.1, ymax=1.1,
ytick style={color=black},
width=1\columnwidth,
height=0.45\columnwidth,
    legend style={
      at={(1,0.42)},
      anchor=north east,
      legend columns=3
    }
]\addlegendimage{empty legend};
\addlegendentry{\small $p=$ 0.75}
\addlegendimage{empty legend};
\addlegendentry{\small $\#$: 9}
\addlegendimage{empty legend};
\addlegendentry{\small }
\addplot [ultra thick, white!80!blue, forget plot]
table {%
0 0
1 0.5
2 1
3 0.5
4 0.66
5 0.75
6 0.81
7 0.66
8 0.72
9 0.63
10 0.56
11 0.6
12 0.56
13 0.6
14 0.63
15 0.6
};
\addplot [ultra thick, white!80!blue, dash pattern=on 5.55pt off 2.4pt, forget plot]
table {%
0 0
1 nan
2 1
3 0.5
4 0.666666666666667
5 0.75
6 0.8
7 0.666666666666667
8 0.714285714285714
9 0.625
10 0.555555555555556
11 0.6
12 0.545454545454545
13 0.583333333333333
14 0.615384615384615
15 0.571428571428571
};
\addplot [ultra thick, white!80!red, forget plot]
table {%
0 0
1 0.5
2 0.63
3 1
4 0.74
5 0.74
6 0.74
7 0.91
8 0.91
9 0.91
10 0.91
11 0.81
12 0.81
13 0.81
14 0.81
15 0.88
};
\addplot [ultra thick, white!80!red, dash pattern=on 5.55pt off 2.4pt, forget plot]
table {%
0 0
1 nan
2 1
3 1
4 0.666666666666667
5 0.75
6 0.8
7 0.833333333333333
8 0.857142857142857
9 0.875
10 0.888888888888889
11 0.8
12 0.818181818181818
13 0.833333333333333
14 0.846153846153846
15 0.857142857142857
};
\addplot [ultra thick, white!80!brown, forget plot]
table {%
0 0
1 0.5
2 0.5
3 0.6
4 0.7
5 0.7
6 0.75
7 0.8
8 0.8
9 0.8
10 0.8
11 0.76
12 0.76
13 0.73
14 0.73
15 0.68
};
\addplot [ultra thick, white!80!brown, dash pattern=on 5.55pt off 2.4pt, forget plot]
table {%
0 0
1 nan
2 1
3 1
4 1
5 1
6 1
7 1
8 0.857142857142857
9 0.875
10 0.888888888888889
11 0.8
12 0.818181818181818
13 0.75
14 0.692307692307692
15 0.642857142857143
};
\addplot [ultra thick, white!80!cyan, forget plot]
table {%
0 0
1 0.5
2 0.5
3 0.75
4 0.5
5 0.5
6 0.5
7 0.66
8 0.66
9 0.66
10 0.66
11 0.75
12 0.75
13 0.75
14 0.75
15 0.75
};
\addplot [ultra thick, white!80!cyan, dash pattern=on 5.55pt off 2.4pt, forget plot]
table {%
0 0
1 nan
2 1
3 1
4 0.666666666666667
5 0.5
6 0.6
7 0.666666666666667
8 0.571428571428571
9 0.625
10 0.666666666666667
11 0.7
12 0.727272727272727
13 0.75
14 0.769230769230769
15 0.714285714285714
};
\addplot [ultra thick, white!80!pink, forget plot]
table {%
0 0
1 0.5
2 0.5
3 0.75
4 0.79
5 1
6 1
7 1
8 1
9 1
10 1
11 1
12 1
13 1
14 1
15 1
};
\addplot [ultra thick, white!80!pink, dash pattern=on 5.55pt off 2.4pt, forget plot]
table {%
0 0
1 nan
2 1
3 1
4 1
5 1
6 1
7 1
8 1
9 1
10 1
11 1
12 1
13 1
14 1
15 1
};
\addplot [ultra thick, white!80!orange, forget plot]
table {%
0 0
1 0.5
2 0.5
3 0.75
4 0.56
5 0.6
6 0.6
7 0.66
8 0.75
9 0.75
10 0.8
11 0.85
12 0.7
13 0.7
14 0.7
15 0.7
};
\addplot [ultra thick, white!80!orange, dash pattern=on 5.55pt off 2.4pt, forget plot]
table {%
0 0
1 nan
2 1
3 1
4 0.666666666666667
5 0.75
6 0.8
7 0.833333333333333
8 0.857142857142857
9 0.875
10 0.888888888888889
11 0.9
12 0.818181818181818
13 0.833333333333333
14 0.846153846153846
15 0.857142857142857
};
\addplot [ultra thick, white!80!green, forget plot]
table {%
0 0
1 0.5
2 0.5
3 0.54
4 0.6
5 0.6
6 0.71
7 0.79
8 0.75
9 0.75
10 0.69
11 0.6
12 0.6
13 0.6
14 0.6
15 0.66
};
\addplot [ultra thick, white!80!green, dash pattern=on 5.55pt off 2.4pt, forget plot]
table {%
0 0
1 nan
2 1
3 1
4 1
5 1
6 1
7 1
8 0.857142857142857
9 0.875
10 0.777777777777778
11 0.7
12 0.727272727272727
13 0.75
14 0.692307692307692
15 0.714285714285714
};
\addplot [ultra thick, white!80!olive, forget plot]
table {%
0 0
1 0.65
2 0.65
3 0.72
4 0.75
5 0.75
6 0.78
7 0.81
8 0.84
9 0.84
10 0.91
11 0.91
12 0.91
13 0.91
14 0.91
15 0.94
};
\addplot [ultra thick, white!80!olive, dash pattern=on 5.55pt off 2.4pt, forget plot]
table {%
0 0
1 nan
2 1
3 1
4 1
5 1
6 1
7 1
8 1
9 1
10 1
11 1
12 1
13 1
14 1
15 1
};
\addplot [ultra thick, white!80!purple, forget plot]
table {%
0 0
1 0.5
2 0.19
3 0.49
4 0.63
5 0.76
6 0.88
7 1
8 1
9 1
10 0.9
11 0.9
12 0.9
13 0.83
14 0.85
15 0.85
};
\addplot [ultra thick, white!80!purple, dash pattern=on 5.55pt off 2.4pt, forget plot]
table {%
0 0
1 nan
2 0
3 0.5
4 0.666666666666667
5 0.75
6 0.8
7 0.833333333333333
8 0.857142857142857
9 0.875
10 0.777777777777778
11 0.8
12 0.818181818181818
13 0.75
14 0.769230769230769
15 0.785714285714286
};
\addplot [ultra thick, black]
table {%
0 0
1 0.516666666666667
2 0.552222222222222
3 0.677777777777778
4 0.658888888888889
5 0.711111111111111
6 0.752222222222222
7 0.81
8 0.825555555555555
9 0.815555555555556
10 0.803333333333333
11 0.797777777777778
12 0.776666666666667
13 0.77
14 0.775555555555556
15 0.784444444444444
};
\addlegendentry{\small A-E}
\addplot [ultra thick, black, dash pattern=on 7.4pt off 3.2pt]
table {%
0 0
1 nan
2 0.888888888888889
3 0.888888888888889
4 0.814814814814815
5 0.833333333333333
6 0.866666666666667
7 0.87037037037037
8 0.841269841269841
9 0.847222222222222
10 0.82716049382716
11 0.811111111111111
12 0.808080808080808
13 0.805555555555556
14 0.803418803418803
15 0.793650793650794
};
\addlegendentry{\small A-E*}
\end{axis}

\end{tikzpicture}
\vspace{2mm}
\begin{tikzpicture}

\definecolor{darkgray176}{RGB}{176,176,176}
\definecolor{lightgray204}{RGB}{204,204,204}

\begin{axis}[
legend cell align={left},
legend style={
  fill opacity=0.8,
  draw opacity=1,
  text opacity=1,
  at={(0.97,0.03)},
  anchor=south east,
  draw=lightgray204
},
tick align=outside,
tick pos=left,
unbounded coords=jump,
x grid style={darkgray176},
xlabel={\small Interaction},
xmin=1, xmax=15,
xtick style={color=black},
y grid style={darkgray176},
ylabel={\small Probability},
ymin=-0.1, ymax=1.1,
ytick style={color=black},
width=1\columnwidth,
height=0.45\columnwidth,
    legend style={
      at={(1,0.42)},
      anchor=north east,
      legend columns=3
    }
]\addlegendimage{empty legend};
\addlegendentry{\small $p=$ 0.9}
\addlegendimage{empty legend};
\addlegendentry{\small $\#$: 7}
\addlegendimage{empty legend};
\addlegendentry{\small }
\addplot [ultra thick, white!80!blue, forget plot]
table {%
0 0
1 0
2 0.49
3 0.72
4 0.82
5 0.9
6 0.9
7 0.9
8 0.9
9 0.9
10 0.9
11 0.9
12 0.93
13 0.93
14 0.93
15 0.93
};
\addplot [ultra thick, white!80!blue, dash pattern=on 5.55pt off 2.4pt, forget plot]
table {%
0 0
1 nan
2 1
3 1
4 1
5 1
6 1
7 1
8 1
9 1
10 1
11 1
12 1
13 1
14 1
15 1
};
\addplot [ultra thick, white!80!purple, forget plot]
table {%
0 0
1 0.5
2 0.5
3 0.62
4 0.71
5 0.63
6 0.59
7 0.59
8 0.59
9 0.59
10 0.59
11 0.59
12 0.78
13 0.84
14 0.84
15 0.84
};
\addplot [ultra thick, white!80!purple, dash pattern=on 5.55pt off 2.4pt, forget plot]
table {%
0 0
1 nan
2 1
3 1
4 1
5 0.75
6 0.8
7 0.833333333333333
8 0.857142857142857
9 0.875
10 0.888888888888889
11 0.9
12 0.909090909090909
13 0.916666666666667
14 0.923076923076923
15 0.928571428571429
};
\addplot [ultra thick, white!80!brown, forget plot]
table {%
0 0
1 0.5
2 0.5
3 0.65
4 0.7
5 0.75
6 0.82
7 0.82
8 0.86
9 0.9
10 0.9
11 1
12 0.95
13 0.95
14 0.95
15 0.95
};
\addplot [ultra thick, white!80!brown, dash pattern=on 5.55pt off 2.4pt, forget plot]
table {%
0 0
1 nan
2 1
3 1
4 1
5 1
6 1
7 1
8 1
9 1
10 1
11 1
12 0.909090909090909
13 0.916666666666667
14 0.923076923076923
15 0.928571428571429
};
\addplot [ultra thick, white!80!red, forget plot]
table {%
0 0
1 0.6
2 0.6
3 0.6
4 0.6
5 0.6
6 0.6
7 0.6
8 0.6
9 0.81
10 0.81
11 0.81
12 0.81
13 0.81
14 0.81
15 0.81
};
\addplot [ultra thick, white!80!red, dash pattern=on 5.55pt off 2.4pt, forget plot]
table {%
0 0
1 nan
2 1
3 1
4 1
5 1
6 1
7 1
8 1
9 1
10 1
11 1
12 1
13 1
14 1
15 0.928571428571429
};
\addplot [ultra thick, white!80!green, forget plot]
table {%
0 0
1 0.5
2 0.5
3 0.5
4 0.4
5 0.5
6 0.6
7 0.65
8 0.81
9 1
10 1
11 1
12 1
13 1
14 0.85
15 0.85
};
\addplot [ultra thick, white!80!green, dash pattern=on 5.55pt off 2.4pt, forget plot]
table {%
0 0
1 nan
2 1
3 0.5
4 0.333333333333333
5 0.5
6 0.6
7 0.666666666666667
8 0.714285714285714
9 0.75
10 0.777777777777778
11 0.8
12 0.818181818181818
13 0.833333333333333
14 0.769230769230769
15 0.785714285714286
};
\addplot [ultra thick, white!80!pink, forget plot]
table {%
0 0
1 0.5
2 0.75
3 0.6
4 0.6
5 1
6 1
7 1
8 1
9 1
10 1
11 1
12 1
13 1
14 1
15 1
};
\addplot [ultra thick, white!80!pink, dash pattern=on 5.55pt off 2.4pt, forget plot]
table {%
0 0
1 nan
2 1
3 0.5
4 0.666666666666667
5 0.75
6 0.6
7 0.666666666666667
8 0.714285714285714
9 0.75
10 0.777777777777778
11 0.8
12 0.727272727272727
13 0.75
14 0.769230769230769
15 0.785714285714286
};
\addplot [ultra thick, white!80!olive, forget plot]
table {%
0 0
1 0.51
2 0.51
3 0.51
4 0.51
5 0.75
6 0.75
7 0.75
8 0.75
9 0.75
10 0.75
11 0.75
12 0.75
13 0.75
14 1
15 1
};
\addplot [ultra thick, white!80!olive, dash pattern=on 5.55pt off 2.4pt, forget plot]
table {%
0 0
1 nan
2 1
3 1
4 0.666666666666667
5 0.75
6 0.8
7 0.833333333333333
8 0.714285714285714
9 0.75
10 0.777777777777778
11 0.7
12 0.727272727272727
13 0.75
14 0.692307692307692
15 0.714285714285714
};
\addplot [ultra thick, black]
table {%
0 0
1 0.444285714285714
2 0.55
3 0.6
4 0.62
5 0.732857142857143
6 0.751428571428571
7 0.758571428571429
8 0.787142857142857
9 0.85
10 0.85
11 0.864285714285714
12 0.888571428571429
13 0.897142857142857
14 0.911428571428571
15 0.911428571428571
};
\addlegendentry{\small A-E}
\addplot [ultra thick, black, dash pattern=on 7.4pt off 3.2pt]
table {%
0 0
1 nan
2 1
3 0.857142857142857
4 0.80952380952381
5 0.821428571428571
6 0.828571428571428
7 0.857142857142857
8 0.857142857142857
9 0.875
10 0.888888888888889
11 0.885714285714286
12 0.87012987012987
13 0.880952380952381
14 0.868131868131868
15 0.86734693877551
};
\addlegendentry{\small A-E*}
\end{axis}

\end{tikzpicture}
  \caption{Human and sample mean estimates after every interaction for different stopping probabilities, $p$. The number of participants is given with $\#$.}
  \label{fig:estimationgraphshss}
\end{figure}

\begin{figure}[!t]
  \centering
\begin{tikzpicture}

\definecolor{darkgray176}{RGB}{176,176,176}
\definecolor{lightgray204}{RGB}{204,204,204}

\begin{axis}[
legend cell align={left},
legend style={fill opacity=0.8, draw opacity=1, text opacity=1, draw=lightgray204},
tick align=outside,
tick pos=left,
x grid style={darkgray176},
xlabel={\small Interaction},
xmin=1, xmax=15,
xtick style={color=black},
y grid style={darkgray176},
ylabel={\small \small Car reward},
ymin=-10, ymax=2,
ytick style={color=black},
width=1\columnwidth,
height=0.45\columnwidth,
    legend style={
      at={(1,1)},
      anchor=north east,
      legend columns=3
    }
]
\addlegendimage{empty legend};
\addlegendentry{\small $p=$ 0.1}
\addlegendimage{empty legend};
\addlegendentry{\small $\#$: 8}
\addlegendimage{empty legend};
\addlegendentry{\small }
\addplot [ultra thick, white!80!blue, forget plot]
table {%
0 0
1 -8
2 -8
3 -8
4 -8
5 -8
6 -8
7 -8
8 -8
9 -8
10 -8
11 -7.27272727272727
12 -7.33333333333333
13 -7.38461538461539
14 -7.42857142857143
15 -7.46666666666667
};
\addplot [ultra thick, white!80!blue, dash pattern=on 5.55pt off 2.4pt, forget plot]
table {%
0 0
1 -8
2 -8
3 -8
4 -8
5 -8
6 -8
7 -8
8 -8
9 -8
10 -8
11 -7.27272727272727
12 -7.33333333333333
13 -7.38461538461539
14 -7.42857142857143
15 -7.46666666666667
};
\addplot [ultra thick, white!80!orange, forget plot]
table {%
0 0
1 1
2 1
3 -2
4 -3.5
5 -4.4
6 -5
7 -4.14285714285714
8 -3.5
9 -3
10 -2.6
11 -3.09090909090909
12 -2.83333333333333
13 -2.53846153846154
14 -2.28571428571429
15 -2.06666666666667
};
\addplot [ultra thick, white!80!orange, dash pattern=on 5.55pt off 2.4pt, forget plot]
table {%
0 0
1 1
2 -3.5
3 -5
4 -5.75
5 -6.2
6 -6.5
7 -6.71428571428571
8 -6.875
9 -7
10 -7.1
11 -7.18181818181818
12 -6.58333333333333
13 -6.69230769230769
14 -6.78571428571429
15 -6.86666666666667
};
\addplot [ultra thick, white!80!cyan, forget plot]
table {%
0 0
1 -8
2 -3.5
3 -2
4 -1.25
5 -2.6
6 -3.5
7 -4.14285714285714
8 -4.625
9 -5
10 -5.3
11 -5.54545454545455
12 -5.75
13 -5.92307692307692
14 -6.07142857142857
15 -6.2
};
\addplot [ultra thick, white!80!cyan, dash pattern=on 5.55pt off 2.4pt, forget plot]
table {%
0 0
1 -8
2 -8
3 -8
4 -8
5 -8
6 -8
7 -8
8 -8
9 -8
10 -8
11 -8
12 -8
13 -8
14 -8
15 -8
};
\addplot [ultra thick, white!80!lime, forget plot]
table {%
0 0
1 1
2 -3.5
3 -5
4 -5.75
5 -4.4
6 -5
7 -5.42857142857143
8 -4.75
9 -4.11111111111111
10 -4.5
11 -4.81818181818182
12 -4.41666666666667
13 -4.69230769230769
14 -4.92857142857143
15 -4.6
};
\addplot [ultra thick, white!80!lime, dash pattern=on 5.55pt off 2.4pt, forget plot]
table {%
0 0
1 1
2 -3.5
3 -5
4 -5.75
5 -6.2
6 -6.5
7 -6.71428571428571
8 -5.875
9 -6.11111111111111
10 -6.3
11 -6.45454545454545
12 -5.91666666666667
13 -6.07692307692308
14 -6.21428571428571
15 -5.8
};
\addplot [ultra thick, white!80!green, forget plot]
table {%
0 0
1 1
2 1
3 -2
4 -3.5
5 -2.6
6 -2
7 -2.85714285714286
8 -3.5
9 -4
10 -3.5
11 -3.90909090909091
12 -4.25
13 -4.53846153846154
14 -4.78571428571429
15 -5
};
\addplot [ultra thick, white!80!green, dash pattern=on 5.55pt off 2.4pt, forget plot]
table {%
0 0
1 1
2 -3.5
3 -5
4 -5.75
5 -6.2
6 -6.5
7 -6.71428571428571
8 -6.875
9 -7
10 -7.1
11 -7.18181818181818
12 -7.25
13 -7.30769230769231
14 -7.35714285714286
15 -7.4
};
\addplot [ultra thick, white!80!red, forget plot]
table {%
0 0
1 1
2 1
3 1
4 1
5 1
6 1
7 1
8 1
9 1
10 0.1
11 -0.636363636363636
12 -1.25
13 -1.07692307692308
14 -0.928571428571429
15 -0.8
};
\addplot [ultra thick, white!80!red, dash pattern=on 5.55pt off 2.4pt, forget plot]
table {%
0 0
1 1
2 -3.5
3 -5
4 -5.75
5 -6.2
6 -6.5
7 -6.71428571428571
8 -6.875
9 -7
10 -7.1
11 -7.18181818181818
12 -7.25
13 -7.30769230769231
14 -7.35714285714286
15 -7.4
};
\addplot [ultra thick, white!80!pink, forget plot]
table {%
0 0
1 1
2 0.5
3 0.333333333333333
4 -1.75
5 -3
6 -2.33333333333333
7 -3.14285714285714
8 -3.75
9 -4.22222222222222
10 -4.6
11 -4.90909090909091
12 -5.16666666666667
13 -5.38461538461539
14 -5.57142857142857
15 -5.73333333333333
};
\addplot [ultra thick, white!80!pink, dash pattern=on 5.55pt off 2.4pt, forget plot]
table {%
0 0
1 1
2 0.5
3 0.333333333333333
4 0.5
5 0.6
6 -0.833333333333333
7 -1.85714285714286
8 -2.625
9 -3.22222222222222
10 -3.7
11 -4.09090909090909
12 -4.41666666666667
13 -4.69230769230769
14 -4.92857142857143
15 -5.13333333333333
};
\addplot [ultra thick, white!80!olive, forget plot]
table {%
0 0
1 -8
2 -8
3 -5
4 -5.75
5 -6.2
6 -5
7 -4.14285714285714
8 -4.625
9 -5
10 -5.3
11 -5.54545454545455
12 -5.75
13 -5.23076923076923
14 -5.42857142857143
15 -5.6
};
\addplot [ultra thick, white!80!olive, dash pattern=on 5.55pt off 2.4pt, forget plot]
table {%
0 0
1 -8
2 -8
3 -8
4 -8
5 -8
6 -8
7 -8
8 -8
9 -8
10 -8
11 -8
12 -8
13 -8
14 -8
15 -8
};
\addplot [ultra thick, black]
table {%
0 0
1 -2.375
2 -2.4375
3 -2.83333333333333
4 -3.5625
5 -3.775
6 -3.72916666666667
7 -3.85714285714286
8 -3.96875
9 -4.04166666666667
10 -4.2125
11 -4.46590909090909
12 -4.59375
13 -4.59615384615385
14 -4.67857142857143
15 -4.68333333333333
};
\addlegendentry{\small A-M-IR}
\addplot [ultra thick, black, dash pattern=on 7.4pt off 3.2pt]
table {%
0 0
1 -2.375
2 -4.6875
3 -5.45833333333333
4 -5.8125
5 -6.025
6 -6.35416666666667
7 -6.58928571428571
8 -6.640625
9 -6.79166666666667
10 -6.9125
11 -6.92045454545454
12 -6.84375
13 -6.93269230769231
14 -7.00892857142857
15 -7.00833333333333
};
\addlegendentry{\small A-M-IR*}
\addplot [ultra thick, black, dash pattern=on 1.5pt off 2.475pt]
table {%
1 -7.2
15 -7.2
};
\addlegendentry{\small SR}
\end{axis}
\end{tikzpicture}
\vspace{2mm}
\begin{tikzpicture}

\definecolor{darkgray176}{RGB}{176,176,176}
\definecolor{lightgray204}{RGB}{204,204,204}

\begin{axis}[
legend cell align={left},
legend style={
  fill opacity=0.8,
  draw opacity=1,
  text opacity=1,
  at={(0.97,0.03)},
  anchor=south east,
  draw=lightgray204
},
tick align=outside,
tick pos=left,
x grid style={darkgray176},
xlabel={\small Interaction},
xmin=1, xmax=15,
xtick style={color=black},
y grid style={darkgray176},
ylabel={\small Car reward},
ymin=-10, ymax=2,
ytick style={color=black},
width=1\columnwidth,
height=0.45\columnwidth,
    legend style={
      at={(1,1)},
      anchor=north east,
      legend columns=3
    }
]
\addlegendimage{empty legend};
\addlegendentry{\small $p=$ 0.25}
\addlegendimage{empty legend};
\addlegendentry{\small $\#$: 7}
\addlegendimage{empty legend};
\addlegendentry{\small }
\addplot [ultra thick, white!80!blue, forget plot]
table {%
0 0
1 1
2 -3.5
3 -5
4 -5.75
5 -4.6
6 -5.16666666666667
7 -5.57142857142857
8 -4.875
9 -5.22222222222222
10 -5.5
11 -5
12 -5.25
13 -5.46153846153846
14 -5.64285714285714
15 -5.8
};
\addplot [ultra thick, white!80!blue, dash pattern=on 5.55pt off 2.4pt, forget plot]
table {%
0 0
1 1
2 -3.5
3 -5
4 -5.75
5 -4.6
6 -5.16666666666667
7 -5.57142857142857
8 -4.875
9 -5.22222222222222
10 -5.5
11 -5
12 -5.25
13 -5.46153846153846
14 -5.64285714285714
15 -5.8
};
\addplot [ultra thick, white!80!red, forget plot]
table {%
0 0
1 -8
2 -3.5
3 -5
4 -3.5
5 -2.6
6 -2
7 -1.57142857142857
8 -1.375
9 -1.11111111111111
10 -1
11 -0.909090909090909
12 -0.75
13 -1.30769230769231
14 -1.21428571428571
15 -1.13333333333333
};
\addplot [ultra thick, white!80!red, dash pattern=on 5.55pt off 2.4pt, forget plot]
table {%
0 0
1 -8
2 -8
3 -8
4 -8
5 -8
6 -8
7 -8
8 -7
9 -7.11111111111111
10 -6.4
11 -5.81818181818182
12 -6
13 -6.15384615384615
14 -5.71428571428571
15 -5.33333333333333
};
\addplot [ultra thick, white!80!olive, forget plot]
table {%
0 0
1 1
2 -3.5
3 -5
4 -5.75
5 -6.2
6 -6.5
7 -5.57142857142857
8 -5.875
9 -5.22222222222222
10 -4.7
11 -5
12 -5.25
13 -5.46153846153846
14 -5.64285714285714
15 -5.8
};
\addplot [ultra thick, white!80!olive, dash pattern=on 5.55pt off 2.4pt, forget plot]
table {%
0 0
1 1
2 -3.5
3 -5
4 -5.75
5 -6.2
6 -6.5
7 -5.57142857142857
8 -5.875
9 -5.22222222222222
10 -4.7
11 -5
12 -5.25
13 -5.46153846153846
14 -5.64285714285714
15 -5.8
};
\addplot [ultra thick, white!80!orange, forget plot]
table {%
0 0
1 0
2 0.5
3 0.666666666666667
4 0.75
5 0.8
6 0.833333333333333
7 -0.428571428571429
8 -0.25
9 -1.11111111111111
10 -1
11 -1.63636363636364
12 -1.5
13 -1.30769230769231
14 -1.21428571428571
15 -1.06666666666667
};
\addplot [ultra thick, white!80!orange, dash pattern=on 5.55pt off 2.4pt, forget plot]
table {%
0 0
1 0
2 0.5
3 0.666666666666667
4 -1.5
5 -2.8
6 -3.66666666666667
7 -4.28571428571429
8 -4.75
9 -5.11111111111111
10 -4.6
11 -4.90909090909091
12 -4.5
13 -4.76923076923077
14 -4.42857142857143
15 -4.66666666666667
};
\addplot [ultra thick, white!80!pink, forget plot]
table {%
0 0
1 0
2 -4
3 -5.33333333333333
4 -3.75
5 -3
6 -3.83333333333333
7 -4.42857142857143
8 -4.875
9 -5.22222222222222
10 -5.5
11 -5.72727272727273
12 -5.91666666666667
13 -6.07692307692308
14 -6.21428571428571
15 -5.8
};
\addplot [ultra thick, white!80!pink, dash pattern=on 5.55pt off 2.4pt, forget plot]
table {%
0 0
1 0
2 0.5
3 0.666666666666667
4 -1.5
5 -1.2
6 -2.33333333333333
7 -3.14285714285714
8 -3.75
9 -4.22222222222222
10 -4.6
11 -4.90909090909091
12 -5.16666666666667
13 -5.38461538461539
14 -5.57142857142857
15 -5.2
};
\addplot [ultra thick, white!80!purple, forget plot]
table {%
0 0
1 -8
2 -8
3 -5.33333333333333
4 -6
5 -6.4
6 -6.66666666666667
7 -5.71428571428571
8 -6
9 -5.33333333333333
10 -5.6
11 -5.09090909090909
12 -4.58333333333333
13 -4.84615384615385
14 -5.07142857142857
15 -5.26666666666667
};
\addplot [ultra thick, white!80!purple, dash pattern=on 5.55pt off 2.4pt, forget plot]
table {%
0 0
1 -8
2 -8
3 -5.33333333333333
4 -6
5 -6.4
6 -6.66666666666667
7 -5.71428571428571
8 -6
9 -5.33333333333333
10 -5.6
11 -5.09090909090909
12 -5.33333333333333
13 -5.53846153846154
14 -5.71428571428571
15 -5.86666666666667
};
\addplot [ultra thick, white!80!green, forget plot]
table {%
0 0
1 1
2 1
3 1
4 1
5 1
6 0.833333333333333
7 0.857142857142857
8 0.875
9 0.888888888888889
10 0.8
11 0.818181818181818
12 0.833333333333333
13 0.769230769230769
14 0.714285714285714
15 0.733333333333333
};
\addplot [ultra thick, white!80!green, dash pattern=on 5.55pt off 2.4pt, forget plot]
table {%
0 0
1 1
2 -3.5
3 -5
4 -5.75
5 -6.2
6 -5.16666666666667
7 -5.57142857142857
8 -5.875
9 -6.11111111111111
10 -5.5
11 -5.72727272727273
12 -5.91666666666667
13 -5.46153846153846
14 -5.07142857142857
15 -5.26666666666667
};
\addplot [ultra thick, black]
table {%
0 0
1 -1.85714285714286
2 -3
3 -3.42857142857143
4 -3.28571428571429
5 -3
6 -3.21428571428571
7 -3.20408163265306
8 -3.19642857142857
9 -3.19047619047619
10 -3.21428571428571
11 -3.22077922077922
12 -3.20238095238095
13 -3.38461538461538
14 -3.46938775510204
15 -3.44761904761905
};
\addlegendentry{\small A-M-IR}
\addplot [ultra thick, black, dash pattern=on 7.4pt off 3.2pt]
table {%
0 0
1 -1.85714285714286
2 -3.64285714285714
3 -3.85714285714286
4 -4.89285714285714
5 -5.05714285714286
6 -5.35714285714286
7 -5.40816326530612
8 -5.44642857142857
9 -5.47619047619048
10 -5.27142857142857
11 -5.20779220779221
12 -5.34523809523809
13 -5.46153846153846
14 -5.39795918367347
15 -5.41904761904762
};
\addlegendentry{\small A-M-IR*}
\addplot [ultra thick, black, dash pattern=on 1.5pt off 2.475pt]
table {%
1 -6
15 -6
};
\addlegendentry{\small SR}
\end{axis}

\end{tikzpicture}
\vspace{2mm}
\begin{tikzpicture}

\definecolor{darkgray176}{RGB}{176,176,176}
\definecolor{lightgray204}{RGB}{204,204,204}

\begin{axis}[
legend cell align={left},
legend style={
  fill opacity=0.8,
  draw opacity=1,
  text opacity=1,
  at={(0.97,0.03)},
  anchor=south east,
  draw=lightgray204
},
tick align=outside,
tick pos=left,
x grid style={darkgray176},
xlabel={\small Interaction},
xmin=1, xmax=15,
xtick style={color=black},
y grid style={darkgray176},
ylabel={\small Car reward},
ymin=-10, ymax=2,
ytick style={color=black},
width=1\columnwidth,
height=0.45\columnwidth,
    legend style={
      at={(0.622,0.42)},
      anchor=north,
      legend columns=3
    }
]
\addlegendimage{empty legend};
\addlegendentry{\small $p=$ 0.4}
\addlegendimage{empty legend};
\addlegendentry{\small $\#$: 7}
\addlegendimage{empty legend};
\addlegendentry{\small }
\addplot [ultra thick, white!80!blue, forget plot]
table {%
0 0
1 1
2 1
3 1
4 0.75
5 0.6
6 0.5
7 0.571428571428571
8 0.5
9 0.555555555555556
10 0.6
11 -0.181818181818182
12 -0.0833333333333333
13 -0.0769230769230769
14 -0.642857142857143
15 -1.13333333333333
};
\addplot [ultra thick, white!80!blue, dash pattern=on 5.55pt off 2.4pt, forget plot]
table {%
0 0
1 1
2 -3.5
3 -5
4 -3.75
5 -3
6 -2.5
7 -2
8 -1.75
9 -1.44444444444444
10 -2.1
11 -2.63636363636364
12 -3.08333333333333
13 -2.84615384615385
14 -3.21428571428571
15 -3.53333333333333
};
\addplot [ultra thick, white!80!orange, forget plot]
table {%
0 0
1 1
2 -3.5
3 -2.33333333333333
4 -1.75
5 -3
6 -2.5
7 -2.14285714285714
8 -1.875
9 -1.55555555555556
10 -1.3
11 -1.18181818181818
12 -1
13 -0.846153846153846
14 -0.785714285714286
15 -0.666666666666667
};
\addplot [ultra thick, white!80!orange, dash pattern=on 5.55pt off 2.4pt, forget plot]
table {%
0 0
1 -8
2 -8
3 -5.33333333333333
4 -4
5 -3
6 -2.5
7 -2.14285714285714
8 -1.875
9 -1.55555555555556
10 -1.3
11 -1.18181818181818
12 -1
13 -0.846153846153846
14 -0.785714285714286
15 -0.666666666666667
};
\addplot [ultra thick, white!80!olive, forget plot]
table {%
0 0
1 -8
2 -8
3 -5
4 -5.75
5 -6.2
6 -6.5
7 -6.71428571428571
8 -6.875
9 -6.11111111111111
10 -6.3
11 -5.72727272727273
12 -5.25
13 -4.84615384615385
14 -4.42857142857143
15 -4.13333333333333
};
\addplot [ultra thick, white!80!olive, dash pattern=on 5.55pt off 2.4pt, forget plot]
table {%
0 0
1 -8
2 -8
3 -8
4 -8
5 -8
6 -8
7 -8
8 -8
9 -7.11111111111111
10 -7.2
11 -6.54545454545455
12 -6
13 -5.53846153846154
14 -5.71428571428571
15 -5.33333333333333
};
\addplot [ultra thick, white!80!brown, forget plot]
table {%
0 0
1 0
2 0.5
3 0.666666666666667
4 0.5
5 0.4
6 0.333333333333333
7 0.428571428571429
8 0.375
9 0.444444444444444
10 0.4
11 0.363636363636364
12 0.333333333333333
13 0.384615384615385
14 0.428571428571429
15 0.4
};
\addplot [ultra thick, white!80!brown, dash pattern=on 5.55pt off 2.4pt, forget plot]
table {%
0 0
1 0
2 0.5
3 0.666666666666667
4 0.5
5 0.4
6 0.333333333333333
7 0.428571428571429
8 0.375
9 0.444444444444444
10 0.4
11 0.363636363636364
12 0.333333333333333
13 0.384615384615385
14 0.428571428571429
15 0.4
};
\addplot [ultra thick, white!80!red, forget plot]
table {%
0 0
1 -8
2 -3.5
3 -2
4 -3.5
5 -4.4
6 -5
7 -4.28571428571429
8 -3.625
9 -3.11111111111111
10 -2.8
11 -2.54545454545455
12 -3
13 -2.76923076923077
14 -3.14285714285714
15 -2.93333333333333
};
\addplot [ultra thick, white!80!red, dash pattern=on 5.55pt off 2.4pt, forget plot]
table {%
0 0
1 -8
2 -8
3 -8
4 -8
5 -8
6 -8
7 -6.85714285714286
8 -7
9 -7.11111111111111
10 -6.4
11 -5.81818181818182
12 -6
13 -5.53846153846154
14 -5.71428571428571
15 -5.33333333333333
};
\addplot [ultra thick, white!80!purple, forget plot]
table {%
0 0
1 1
2 0.5
3 -2.33333333333333
4 -3.75
5 -4.6
6 -3.83333333333333
7 -3.28571428571429
8 -2.75
9 -2.44444444444444
10 -2.1
11 -1.81818181818182
12 -2.33333333333333
13 -2.76923076923077
14 -3.14285714285714
15 -3.46666666666667
};
\addplot [ultra thick, white!80!purple, dash pattern=on 5.55pt off 2.4pt, forget plot]
table {%
0 0
1 1
2 0.5
3 0.666666666666667
4 -1.5
5 -2.8
6 -2.33333333333333
7 -2
8 -2.75
9 -2.44444444444444
10 -3
11 -3.45454545454545
12 -3.83333333333333
13 -4.15384615384615
14 -4.42857142857143
15 -4.66666666666667
};
\addplot [ultra thick, white!80!green, forget plot]
table {%
0 0
1 1
2 0.5
3 0.333333333333333
4 0.5
5 -1.2
6 -2.33333333333333
7 -1.85714285714286
8 -1.5
9 -1.22222222222222
10 -1.9
11 -2.45454545454545
12 -2.16666666666667
13 -2
14 -1.85714285714286
15 -2.26666666666667
};
\addplot [ultra thick, white!80!green, dash pattern=on 5.55pt off 2.4pt, forget plot]
table {%
0 0
1 1
2 0.5
3 0.333333333333333
4 0.5
5 0.6
6 -0.833333333333333
7 -1.85714285714286
8 -2.625
9 -3.22222222222222
10 -3.7
11 -4.09090909090909
12 -4.41666666666667
13 -4.07692307692308
14 -3.78571428571429
15 -4.06666666666667
};
\addplot [ultra thick, black]
table {%
0 0
1 -1.71428571428571
2 -1.78571428571429
3 -1.38095238095238
4 -1.85714285714286
5 -2.62857142857143
6 -2.76190476190476
7 -2.46938775510204
8 -2.25
9 -1.92063492063492
10 -1.91428571428571
11 -1.93506493506494
12 -1.92857142857143
13 -1.84615384615385
14 -1.93877551020408
15 -2.02857142857143
};
\addlegendentry{\small A-M-IR}
\addplot [ultra thick, black, dash pattern=on 7.4pt off 3.2pt]
table {%
0 0
1 -3
2 -3.71428571428571
3 -3.52380952380952
4 -3.46428571428571
5 -3.4
6 -3.4047619047619
7 -3.20408163265306
8 -3.375
9 -3.20634920634921
10 -3.32857142857143
11 -3.33766233766234
12 -3.42857142857143
13 -3.23076923076923
14 -3.31632653061224
15 -3.31428571428571
};
\addlegendentry{\small A-M-IR*}
\addplot [ultra thick, black, dash pattern=on 1.5pt off 2.475pt]
table {%
1 -4.8
15 -4.8
};
\addlegendentry{\small SR}
\end{axis}

\end{tikzpicture}
\vspace{2mm}
\begin{tikzpicture}

\definecolor{darkgray176}{RGB}{176,176,176}
\definecolor{lightgray204}{RGB}{204,204,204}

\begin{axis}[
legend cell align={left},
legend style={
  fill opacity=0.8,
  draw opacity=1,
  text opacity=1,
  at={(0.97,0.03)},
  anchor=south east,
  draw=lightgray204
},
tick align=outside,
tick pos=left,
x grid style={darkgray176},
xlabel={\small Interaction},
xmin=1, xmax=15,
xtick style={color=black},
y grid style={darkgray176},
ylabel={\small Car reward},
ymin=-10, ymax=2,
ytick style={color=black},
width=1\columnwidth,
height=0.45\columnwidth,
    legend style={
      at={(0.622,0.42)},
      anchor=north,
      legend columns=3
    }
]
\addlegendimage{empty legend};
\addlegendentry{\small $p=$ 0.6}
\addlegendimage{empty legend};
\addlegendentry{\small $\#$: 9}
\addlegendimage{empty legend};
\addlegendentry{\small }
\addplot [ultra thick, white!80!blue, forget plot]
table {%
0 0
1 0
2 0.5
3 0.666666666666667
4 0.5
5 0.6
6 -0.833333333333333
7 -0.714285714285714
8 -0.625
9 -0.555555555555556
10 -0.5
11 -0.454545454545455
12 -0.333333333333333
13 -0.230769230769231
14 -0.214285714285714
15 -0.2
};
\addplot [ultra thick, white!80!blue, dash pattern=on 5.55pt off 2.4pt, forget plot]
table {%
0 0
1 0
2 0.5
3 0.666666666666667
4 0.5
5 0.6
6 -0.833333333333333
7 -0.714285714285714
8 -0.625
9 -0.555555555555556
10 -0.5
11 -0.454545454545455
12 -0.333333333333333
13 -0.230769230769231
14 -0.214285714285714
15 -0.2
};
\addplot [ultra thick, white!80!brown, forget plot]
table {%
0 0
1 0
2 0
3 0.333333333333333
4 0.5
5 0.4
6 0.5
7 0.428571428571429
8 0.375
9 -0.555555555555556
10 -1.3
11 -1.90909090909091
12 -2.41666666666667
13 -2.84615384615385
14 -2.64285714285714
15 -2.46666666666667
};
\addplot [ultra thick, white!80!brown, dash pattern=on 5.55pt off 2.4pt, forget plot]
table {%
0 0
1 0
2 0
3 0.333333333333333
4 0.5
5 0.4
6 0.5
7 0.428571428571429
8 0.375
9 0.444444444444444
10 0.5
11 0.545454545454545
12 -0.166666666666667
13 -0.769230769230769
14 -0.714285714285714
15 -0.666666666666667
};
\addplot [ultra thick, white!80!brown, forget plot]
table {%
0 0
1 0
2 0
3 0
4 0
5 0.2
6 0.166666666666667
7 0.142857142857143
8 0.125
9 0.111111111111111
10 0.2
11 0.181818181818182
12 -0.5
13 -1.07692307692308
14 -1.57142857142857
15 -1.46666666666667
};
\addplot [ultra thick, white!80!brown, dash pattern=on 5.55pt off 2.4pt, forget plot]
table {%
0 0
1 0
2 0
3 0
4 0
5 0.2
6 0.166666666666667
7 0.142857142857143
8 0.125
9 0.111111111111111
10 0.2
11 0.181818181818182
12 0.25
13 0.307692307692308
14 0.357142857142857
15 0.333333333333333
};
\addplot [ultra thick, white!80!cyan, forget plot]
table {%
0 0
1 1
2 0.5
3 0.333333333333333
4 0.25
5 0.4
6 0.333333333333333
7 0.285714285714286
8 0.25
9 0.222222222222222
10 0.2
11 0.272727272727273
12 0.25
13 0.230769230769231
14 0.214285714285714
15 -0.333333333333333
};
\addplot [ultra thick, white!80!cyan, dash pattern=on 5.55pt off 2.4pt, forget plot]
table {%
0 0
1 1
2 0.5
3 0.333333333333333
4 0.25
5 0.4
6 0.333333333333333
7 0.285714285714286
8 0.25
9 0.222222222222222
10 0.2
11 0.272727272727273
12 0.25
13 0.230769230769231
14 0.214285714285714
15 0.266666666666667
};
\addplot [ultra thick, white!80!pink, forget plot]
table {%
0 0
1 -8
2 -4
3 -2.33333333333333
4 -1.75
5 -1.4
6 -1.16666666666667
7 -1
8 -0.875
9 -0.777777777777778
10 -0.6
11 -0.545454545454545
12 -0.5
13 -1.07692307692308
14 -1
15 -0.933333333333333
};
\addplot [ultra thick, white!80!pink, dash pattern=on 5.55pt off 2.4pt, forget plot]
table {%
0 0
1 -8
2 -4
3 -2.33333333333333
4 -1.75
5 -1.4
6 -1.16666666666667
7 -1
8 -0.875
9 -0.777777777777778
10 -0.6
11 -0.545454545454545
12 -0.5
13 -0.384615384615385
14 -0.357142857142857
15 -0.333333333333333
};
\addplot [ultra thick, white!80!green, forget plot]
table {%
0 0
1 1
2 0.5
3 0.333333333333333
4 0.25
5 -1.4
6 -1
7 -0.714285714285714
8 -1.625
9 -1.33333333333333
10 -2
11 -1.81818181818182
12 -1.58333333333333
13 -2.07692307692308
14 -1.92857142857143
15 -1.8
};
\addplot [ultra thick, white!80!green, dash pattern=on 5.55pt off 2.4pt, forget plot]
table {%
0 0
1 1
2 0.5
3 0.333333333333333
4 0.25
5 0.4
6 0.5
7 0.571428571428571
8 -0.5
9 -1.33333333333333
10 -2
11 -1.81818181818182
12 -2.33333333333333
13 -2.76923076923077
14 -2.57142857142857
15 -2.4
};
\addplot [ultra thick, white!80!red, forget plot]
table {%
0 0
1 -8
2 -4
3 -2.66666666666667
4 -2
5 -1.6
6 -1.33333333333333
7 -1
8 -0.75
9 -0.666666666666667
10 -0.5
11 -0.363636363636364
12 -1
13 -1.53846153846154
14 -2
15 -2.4
};
\addplot [ultra thick, white!80!red, dash pattern=on 5.55pt off 2.4pt, forget plot]
table {%
0 0
1 -8
2 -4
3 -2.66666666666667
4 -2
5 -1.6
6 -1.33333333333333
7 -1
8 -0.75
9 -0.666666666666667
10 -0.5
11 -0.363636363636364
12 -0.25
13 -0.153846153846154
14 -0.714285714285714
15 -1.2
};
\addplot [ultra thick, white!80!purple, forget plot]
table {%
0 0
1 0
2 0
3 -2.66666666666667
4 -2
5 -1.4
6 -1.16666666666667
7 -1
8 -0.75
9 -0.555555555555556
10 -0.4
11 -1.09090909090909
12 -1.66666666666667
13 -1.46153846153846
14 -1.35714285714286
15 -1.26666666666667
};
\addplot [ultra thick, white!80!purple, dash pattern=on 5.55pt off 2.4pt, forget plot]
table {%
0 0
1 0
2 0
3 0.333333333333333
4 0.25
5 0.4
6 0.333333333333333
7 0.285714285714286
8 0.375
9 0.444444444444444
10 0.5
11 0.545454545454545
12 -0.166666666666667
13 -0.769230769230769
14 -0.714285714285714
15 -0.666666666666667
};
\addplot [ultra thick, white!80!olive, forget plot]
table {%
0 0
1 0
2 0
3 0
4 -2
5 -1.6
6 -2.66666666666667
7 -3.42857142857143
8 -3
9 -3.55555555555556
10 -3.2
11 -2.90909090909091
12 -2.66666666666667
13 -3.07692307692308
14 -2.85714285714286
15 -2.66666666666667
};
\addplot [ultra thick, white!80!olive, dash pattern=on 5.55pt off 2.4pt, forget plot]
table {%
0 0
1 0
2 0
3 0
4 0.25
5 0.2
6 0.333333333333333
7 0.428571428571429
8 0.375
9 0.444444444444444
10 0.4
11 0.363636363636364
12 0.333333333333333
13 0.384615384615385
14 0.357142857142857
15 0.333333333333333
};
\addplot [ultra thick, black]
table {%
0 0
1 -1.55555555555556
2 -0.722222222222222
3 -0.666666666666667
4 -0.694444444444444
5 -0.644444444444444
6 -0.796296296296296
7 -0.777777777777778
8 -0.763888888888889
9 -0.851851851851852
10 -0.9
11 -0.95959595959596
12 -1.15740740740741
13 -1.46153846153846
14 -1.48412698412698
15 -1.5037037037037
};
\addlegendentry{\small A-M-IR}
\addplot [ultra thick, black, dash pattern=on 7.4pt off 3.2pt]
table {%
0 0
1 -1.55555555555556
2 -0.722222222222222
3 -0.333333333333333
4 -0.194444444444444
5 -0.0444444444444444
6 -0.12962962962963
7 -0.0634920634920635
8 -0.138888888888889
9 -0.185185185185185
10 -0.2
11 -0.141414141414141
12 -0.324074074074074
13 -0.461538461538461
14 -0.484126984126984
15 -0.503703703703704
};
\addlegendentry{\small A-M-IR*}
\addplot [ultra thick, black, dash pattern=on 1.5pt off 2.475pt]
table {%
1 0.399999999999998
15 0.399999999999998
};
\addlegendentry{\small SR}
\end{axis}

\end{tikzpicture}
\vspace{2mm}
\begin{tikzpicture}

\definecolor{darkgray176}{RGB}{176,176,176}
\definecolor{lightgray204}{RGB}{204,204,204}

\begin{axis}[
legend cell align={left},
legend style={
  fill opacity=0.8,
  draw opacity=1,
  text opacity=1,
  at={(0.03,0.03)},
  anchor=south west,
  draw=lightgray204
},
tick align=outside,
tick pos=left,
x grid style={darkgray176},
xlabel={\small Interaction},
xmin=1, xmax=15,
xtick style={color=black},
y grid style={darkgray176},
ylabel={\small Car reward},
ymin=-10, ymax=2,
ytick style={color=black},
width=1\columnwidth,
height=0.45\columnwidth,
    legend style={
      at={(0.61,0.42)},
      anchor=north,
      legend columns=3
    }
]
\addlegendimage{empty legend};
\addlegendentry{\small $p=$ 0.75}
\addlegendimage{empty legend};
\addlegendentry{\small $\#$: 9}
\addlegendimage{empty legend};
\addlegendentry{\small }
\addplot [ultra thick, white!80!blue, forget plot]
table {%
0 0
1 0
2 -4
3 -2.66666666666667
4 -2
5 -1.6
6 -1.16666666666667
7 -1
8 -0.75
9 -0.555555555555556
10 -0.5
11 -0.363636363636364
12 -0.333333333333333
13 -0.307692307692308
14 -0.214285714285714
15 -0.2
};
\addplot [ultra thick, white!80!blue, dash pattern=on 5.55pt off 2.4pt, forget plot]
table {%
0 0
1 0
2 0.5
3 0.333333333333333
4 0.25
5 0.2
6 0.333333333333333
7 0.285714285714286
8 0.375
9 0.444444444444444
10 0.4
11 0.454545454545455
12 0.416666666666667
13 0.384615384615385
14 0.428571428571429
15 0.4
};
\addplot [ultra thick, white!80!green, forget plot]
table {%
0 0
1 0
2 0
3 0.333333333333333
4 0.25
5 0.2
6 0.166666666666667
7 0.142857142857143
8 0.125
9 0.111111111111111
10 0.2
11 0.181818181818182
12 0.166666666666667
13 0.153846153846154
14 0.142857142857143
15 0.133333333333333
};
\addplot [ultra thick, white!80!green, dash pattern=on 5.55pt off 2.4pt, forget plot]
table {%
0 0
1 0
2 0
3 0.333333333333333
4 0.25
5 0.2
6 0.166666666666667
7 0.142857142857143
8 0.125
9 0.111111111111111
10 0.2
11 0.181818181818182
12 0.166666666666667
13 0.153846153846154
14 0.142857142857143
15 0.133333333333333
};
\addplot [ultra thick, white!80!olive, forget plot]
table {%
0 0
1 0
2 0
3 0
4 0
5 0
6 0
7 0.142857142857143
8 0.125
9 0.111111111111111
10 0.2
11 0.181818181818182
12 -0.5
13 -0.384615384615385
14 -0.285714285714286
15 -0.266666666666667
};
\addplot [ultra thick, white!80!olive, dash pattern=on 5.55pt off 2.4pt, forget plot]
table {%
0 0
1 0
2 0
3 0
4 0
5 0
6 0
7 0.142857142857143
8 0.125
9 0.111111111111111
10 0.2
11 0.181818181818182
12 0.25
13 0.307692307692308
14 0.357142857142857
15 0.333333333333333
};
\addplot [ultra thick, white!80!brown, forget plot]
table {%
0 0
1 0
2 0
3 0.333333333333333
4 0.5
5 0.4
6 0.333333333333333
7 0.428571428571429
8 0.375
9 0.333333333333333
10 0.3
11 0.272727272727273
12 0.25
13 0.230769230769231
14 -0.357142857142857
15 -0.266666666666667
};
\addplot [ultra thick, white!80!brown, dash pattern=on 5.55pt off 2.4pt, forget plot]
table {%
0 0
1 0
2 0
3 0.333333333333333
4 0.5
5 0.4
6 0.333333333333333
7 0.428571428571429
8 0.375
9 0.333333333333333
10 0.3
11 0.272727272727273
12 0.25
13 0.230769230769231
14 0.285714285714286
15 0.333333333333333
};
\addplot [ultra thick, white!80!pink, forget plot]
table {%
0 0
1 0
2 0
3 0
4 0
5 0
6 0
7 0
8 0
9 0
10 0
11 0
12 0
13 0
14 0
15 0
};
\addplot [ultra thick, white!80!pink, dash pattern=on 5.55pt off 2.4pt, forget plot]
table {%
0 0
1 0
2 0
3 0
4 0
5 0
6 0
7 0
8 0
9 0
10 0
11 0
12 0
13 0
14 0
15 0
};
\addplot [ultra thick, white!80!orange, forget plot]
table {%
0 0
1 0
2 0
3 -2.66666666666667
4 -2
5 -1.6
6 -1.33333333333333
7 -1.14285714285714
8 -1
9 -0.888888888888889
10 -0.8
11 -0.636363636363636
12 -0.583333333333333
13 -0.538461538461538
14 -0.5
15 -0.466666666666667
};
\addplot [ultra thick, white!80!orange, dash pattern=on 5.55pt off 2.4pt, forget plot]
table {%
0 0
1 0
2 0
3 0.333333333333333
4 0.25
5 0.2
6 0.166666666666667
7 0.142857142857143
8 0.125
9 0.111111111111111
10 0.1
11 0.181818181818182
12 0.166666666666667
13 0.153846153846154
14 0.142857142857143
15 0.133333333333333
};
\addplot [ultra thick, white!80!red, forget plot]
table {%
0 0
1 0
2 0
3 0
4 0
5 0
6 0
7 -1.14285714285714
8 -1
9 -0.777777777777778
10 -0.6
11 -0.545454545454545
12 -0.5
13 -0.384615384615385
14 -0.357142857142857
15 -0.333333333333333
};
\addplot [ultra thick, white!80!red, dash pattern=on 5.55pt off 2.4pt, forget plot]
table {%
0 0
1 0
2 0
3 0
4 0
5 0
6 0
7 0.142857142857143
8 0.125
9 0.222222222222222
10 0.3
11 0.272727272727273
12 0.25
13 0.307692307692308
14 0.285714285714286
15 0.266666666666667
};
\addplot [ultra thick, white!80!purple, forget plot]
table {%
0 0
1 0
2 0
3 0
4 0
5 0
6 0
7 0
8 0
9 0
10 0
11 0
12 0
13 0
14 0
15 0
};
\addplot [ultra thick, white!80!purple, dash pattern=on 5.55pt off 2.4pt, forget plot]
table {%
0 0
1 0
2 0
3 0
4 0
5 0
6 0
7 0
8 0
9 0
10 0
11 0
12 0
13 0
14 0
15 0
};
\addplot [ultra thick, white!80!cyan, forget plot]
table {%
0 0
1 1
2 0.5
3 0.333333333333333
4 0.25
5 0.2
6 0.166666666666667
7 0.142857142857143
8 0.125
9 0.222222222222222
10 0.2
11 0.181818181818182
12 0.25
13 0.230769230769231
14 0.214285714285714
15 0.2
};
\addplot [ultra thick, white!80!cyan, dash pattern=on 5.55pt off 2.4pt, forget plot]
table {%
0 0
1 1
2 0.5
3 0.333333333333333
4 0.25
5 0.2
6 0.166666666666667
7 0.142857142857143
8 0.125
9 0.222222222222222
10 0.2
11 0.181818181818182
12 0.25
13 0.230769230769231
14 0.214285714285714
15 0.2
};
\addplot [ultra thick, black]
table {%
0 0
1 0.111111111111111
2 -0.388888888888889
3 -0.481481481481481
4 -0.333333333333333
5 -0.266666666666667
6 -0.203703703703704
7 -0.26984126984127
8 -0.222222222222222
9 -0.160493827160494
10 -0.111111111111111
11 -0.0808080808080808
12 -0.138888888888889
13 -0.111111111111111
14 -0.150793650793651
15 -0.133333333333333
};
\addlegendentry{\small AMRUI}
\addplot [ultra thick, black, dash pattern=on 7.4pt off 3.2pt]
table {%
0 0
1 0.111111111111111
2 0.111111111111111
3 0.185185185185185
4 0.166666666666667
5 0.133333333333333
6 0.12962962962963
7 0.158730158730159
8 0.152777777777778
9 0.17283950617284
10 0.188888888888889
11 0.191919191919192
12 0.194444444444444
13 0.196581196581197
14 0.206349206349206
15 0.2
};
\addlegendentry{\small AMRUI*}
\addplot [ultra thick, black, dash pattern=on 1.5pt off 2.475pt]
table {%
1 0.249999999999998
15 0.249999999999998
};
\addlegendentry{\small SR}
\end{axis}

\end{tikzpicture}
\vspace{2mm}
\begin{tikzpicture}

\definecolor{darkgray176}{RGB}{176,176,176}
\definecolor{lightgray204}{RGB}{204,204,204}

\begin{axis}[
legend cell align={left},
legend style={
  fill opacity=0.8,
  draw opacity=1,
  text opacity=1,
  at={(0.03,0.03)},
  anchor=south west,
  draw=lightgray204
},
tick align=outside,
tick pos=left,
x grid style={darkgray176},
xlabel={\small Interaction},
xmin=1, xmax=15,
xtick style={color=black},
y grid style={darkgray176},
ylabel={\small Car reward},
ymin=-10, ymax=2,
ytick style={color=black},
width=1\columnwidth,
height=0.45\columnwidth,
    legend style={
      at={(0.622,0.42)},
      anchor=north,
      legend columns=3
    }
]
\addlegendimage{empty legend};
\addlegendentry{\small $p=$ 0.9}
\addlegendimage{empty legend};
\addlegendentry{\small $\#$: 7}
\addlegendimage{empty legend};
\addlegendentry{\small }
\addplot [ultra thick, white!80!blue, forget plot]
table {%
0 0
1 0
2 0
3 0
4 0
5 0
6 0
7 0
8 0
9 0
10 0
11 0
12 0
13 0
14 0
15 0
};
\addplot [ultra thick, white!80!blue, dash pattern=on 5.55pt off 2.4pt, forget plot]
table {%
0 0
1 0
2 0
3 0
4 0
5 0
6 0
7 0
8 0
9 0
10 0
11 0
12 0
13 0
14 0
15 0
};
\addplot [ultra thick, white!80!green, forget plot]
table {%
0 0
1 0
2 0
3 0
4 -2
5 -1.6
6 -1.33333333333333
7 -1.14285714285714
8 -1
9 -0.888888888888889
10 -0.8
11 -0.727272727272727
12 -0.666666666666667
13 -0.615384615384615
14 -0.571428571428571
15 -0.533333333333333
};
\addplot [ultra thick, white!80!green, dash pattern=on 5.55pt off 2.4pt, forget plot]
table {%
0 0
1 0
2 0
3 0
4 0.25
5 0.2
6 0.166666666666667
7 0.142857142857143
8 0.125
9 0.111111111111111
10 0.1
11 0.0909090909090909
12 0.0833333333333333
13 0.0769230769230769
14 0.0714285714285714
15 0.0666666666666667
};
\addplot [ultra thick, white!80!pink, forget plot]
table {%
0 0
1 0
2 0
3 0
4 0
5 0
6 0
7 0
8 0
9 0
10 0
11 0.0909090909090909
12 0.0833333333333333
13 0.0769230769230769
14 0.0714285714285714
15 0.0666666666666667
};
\addplot [ultra thick, white!80!pink, dash pattern=on 5.55pt off 2.4pt, forget plot]
table {%
0 0
1 0
2 0
3 0
4 0
5 0
6 0
7 0
8 0
9 0
10 0
11 0.0909090909090909
12 0.0833333333333333
13 0.0769230769230769
14 0.0714285714285714
15 0.0666666666666667
};
\addplot [ultra thick, white!80!cyan, forget plot]
table {%
0 0
1 0
2 0
3 0
4 0
5 0
6 0
7 0
8 0
9 0
10 0
11 0
12 0
13 0
14 -0.571428571428571
15 -0.533333333333333
};
\addplot [ultra thick, white!80!cyan, dash pattern=on 5.55pt off 2.4pt, forget plot]
table {%
0 0
1 0
2 0
3 0
4 0
5 0
6 0
7 0
8 0
9 0
10 0
11 0
12 0
13 0
14 0.0714285714285714
15 0.0666666666666667
};
\addplot [ultra thick, white!80!purple, forget plot]
table {%
0 0
1 0
2 -4
3 -5.33333333333333
4 -4
5 -3.2
6 -2.66666666666667
7 -2.28571428571429
8 -2
9 -1.77777777777778
10 -1.6
11 -1.45454545454545
12 -1.33333333333333
13 -1.84615384615385
14 -1.71428571428571
15 -1.6
};
\addplot [ultra thick, white!80!purple, dash pattern=on 5.55pt off 2.4pt, forget plot]
table {%
0 0
1 0
2 0.5
3 0.666666666666667
4 0.5
5 0.4
6 0.333333333333333
7 0.285714285714286
8 0.25
9 0.222222222222222
10 0.2
11 0.181818181818182
12 0.166666666666667
13 0.230769230769231
14 0.214285714285714
15 0.2
};
\addplot [ultra thick, white!80!red, forget plot]
table {%
0 0
1 0
2 0.5
3 0.333333333333333
4 0.25
5 -1.4
6 -1.16666666666667
7 -1
8 -0.875
9 -0.777777777777778
10 -0.7
11 -1.36363636363636
12 -1.25
13 -1.15384615384615
14 -1.07142857142857
15 -1
};
\addplot [ultra thick, white!80!red, dash pattern=on 5.55pt off 2.4pt, forget plot]
table {%
0 0
1 0
2 0.5
3 0.333333333333333
4 0.25
5 0.4
6 0.333333333333333
7 0.285714285714286
8 0.25
9 0.222222222222222
10 0.2
11 0.272727272727273
12 0.25
13 0.230769230769231
14 0.214285714285714
15 0.2
};
\addplot [ultra thick, white!80!orange, forget plot]
table {%
0 0
1 0
2 0
3 0.333333333333333
4 0.25
5 0.2
6 0.166666666666667
7 0.285714285714286
8 0.25
9 0.222222222222222
10 0.3
11 0.272727272727273
12 0.25
13 0.307692307692308
14 0.285714285714286
15 0.266666666666667
};
\addplot [ultra thick, white!80!orange, dash pattern=on 5.55pt off 2.4pt, forget plot]
table {%
0 0
1 0
2 0
3 0.333333333333333
4 0.25
5 0.2
6 0.166666666666667
7 0.285714285714286
8 0.25
9 0.222222222222222
10 0.3
11 0.272727272727273
12 0.25
13 0.307692307692308
14 0.285714285714286
15 0.266666666666667
};
\addplot [ultra thick, black]
table {%
0 0
1 0
2 -0.5
3 -0.666666666666667
4 -0.785714285714286
5 -0.857142857142857
6 -0.714285714285714
7 -0.591836734693878
8 -0.517857142857143
9 -0.46031746031746
10 -0.4
11 -0.454545454545455
12 -0.416666666666667
13 -0.461538461538462
14 -0.510204081632653
15 -0.476190476190476
};
\addlegendentry{\small A-M-IR}
\addplot [ultra thick, black, dash pattern=on 7.4pt off 3.2pt]
table {%
0 0
1 0
2 0.142857142857143
3 0.19047619047619
4 0.178571428571429
5 0.171428571428571
6 0.142857142857143
7 0.142857142857143
8 0.125
9 0.111111111111111
10 0.114285714285714
11 0.12987012987013
12 0.119047619047619
13 0.131868131868132
14 0.13265306122449
15 0.123809523809524
};
\addlegendentry{\small A-M-IR*}
\addplot [ultra thick, black, dash pattern=on 1.5pt off 2.475pt]
table {%
1 0.0999999999999982
15 0.0999999999999982
};
\addlegendentry{\small SR}
\end{axis}

\end{tikzpicture}
  \caption{Mean car return after every interaction for different stopping probabilities, $p$. The number of participants is given with $\#$. \vspace{0.5cm}}
  \label{fig:returngraphshss}
\end{figure}

We observe that the participants' estimates for the autonomous car's stopping probability $p$ converge to the actual value on average for most values of $p$. The estimates for the more deterministic values of $p$, such as 0.1 and 0.9, rapidly approached the actual value of $p$, while the estimates had higher variances for more stochastic values of $p$, indicating the significance of inferability. The averaged estimate for $p=0.1$ remains consistently higher than the averaged sample mean value due to the two outlier sets of interactions where the participants' estimates were significantly higher than 0.5, despite the sample mean estimate converging to 0.1. We also note that while the estimates approach the actual value, the initial estimates suggest an anchoring effect around 0.5 rather than immediately following the sample mean estimates. While the participants were told that the car would have an arbitrary type, the majority of the participants' initial estimate was $0.5$.

In Fig. \ref{fig:returngraphshss}, we plot the autonomous car's mean return from the first interaction to the last interaction (solid lines) and the autonomous car's hypothetical mean return if the human had acted optimally based on the previous samples after every interaction(dashed lines), and the Stackelberg return (dotted lines). The pair of transparent lines for each color represents the data for a participant. The solid black line, A-M-IR (averaged mean return under inference), represents the autonomous car's mean return, averaged over all participants. The dashed black line, A-M-IR* (averaged mean return under inference under optimal human action), represents the autonomous car's hypothetical mean return if the human had acted optimally based on the previous samples, averaged over all participants.

We observe that, as suggested by our theoretical results, the stochastic strategies that have higher returns in the full information setting suffer from an inferability gap. For example, the strategy with $p=0.6$ has a higher return in the full information setting. On the other hand, this strategy is outperformed by the strategies $p =0.75$ and $p=0.9$ in the inference setting due to its high variance, leading to estimation errors and changes in the human driver's decisions. We note that the strategy $p=0.6$ even has a negative empirical return under inference, which means that it would also be outperformed by the deterministic strategy $p=1$ that always stops.

While the strategies with $p <0.5$ have return $-8 (1-p) \in [-8, -4)$ in the full information setting, when interacting with humans, these strategies have higher empirical returns than their respective full information counterparts. This is due to two reasons. Firstly, the lack of inferability, in fact, helps with these strategies. Since the full information returns are already low, the followers' (humans') deviations from the optimal action under full information due to misinference improve the return under inference. Secondly, we observe that these strategies collect rewards (A-M-IR) that are even higher than the levels that would be achieved if the participant were to act optimally based on the previous samples (A-M-IR*). This difference is likely due to the inherent bias of the participants toward stopping, which makes the autonomous car collect a reward of 1 rather than -8 since it is more likely to proceed.

\section{Conclusions}
When interacting with other non-competitive agents, an agent should have an inferable behavior to inform others about its intentions effectively. We model the inferability problem using repeated Stackelberg games where the follower infers the leader's strategy via observation from previous interactions. For a variety of repeated Stackelberg game settings, we show that in the inference setting, the leader may suffer from an inferability gap compared to the setting where the follower has full information of the leader's strategy. However, this gap is upper bounded by a function that depends on the stochasticity level of the leader's strategy. The bounds and the results for the experiments and the human subject study show that to maximize the transient returns, the leader may be better off using a less stochastic strategy compared to the strategy that is optimal in the full information setting.

\section*{Appendix: Proofs of the Technical Results}

\begin{proof} [Proof of Lemma \ref{lemma:varianceofemprical}]
    Due to the linearity of expectation, we have 
    \(\mathbb{E} [ \| \emprical{\randomversion{\leaderdistribution}}_{\genericinteraction} - \leaderdistribution \|^{2} ] = \sum_{i=1}^{\leadernumofactions} \mathbb{E} [  ((\emprical{\randomversion{\leaderdistribution}}_{\genericinteraction})_{i} - (\leaderdistribution)_{i} )^2 ]. \) Note that \(\randomversion{\leadergenericaction}_{1},\ldots,\randomversion{\leadergenericaction}_{k-1}\) is sampled independently from the distribution \(\leaderdistribution\), and \(\Pr(\randomversion{\leadergenericaction}_{t} = a| \leaderdistribution)  = (\leaderdistribution)_{i}\) for \(t=1, \ldots, k-1\). Consequently, \((k-1)(\emprical{\randomversion{\leaderdistribution}}_{\genericinteraction})_{i}\) is a binomial distribution with parameters \(k-1\) and \((\leaderdistribution)_{i}\), and satisfies \[\mathbb{E} [  ((k-1)(\emprical{\randomversion{\leaderdistribution}}_{\genericinteraction})_{i} - (k-1)(\leaderdistribution)_{i} )^2 ] = (k-1) (\leaderdistribution)_{i} (1-(\leaderdistribution)_{i}).\] Due to this fact, we have \[
    \pushQED{\qed} 
    \mathbb{E} [ \| \emprical{\randomversion{\leaderdistribution}}_{\genericinteraction} - \leaderdistribution \|^{2} ] =  \frac{\nu(\leaderdistribution)^2}{\genericinteraction-1}
    .
    \qedhere
    \popQED\]
    \renewcommand{\qedsymbol}{}
\end{proof}

\begin{proof}[Proof of Lemma \ref{lemma:ycloseifxclose}]

    We first define 
\begin{align*}
    q^{\leader} &= \max_{i,j} \leaderutilitymatrix_{ij} + \min_{i,j} \leaderutilitymatrix_{ij}, \quad     q^{\follower} &= \max_{i,j} \followerutilitymatrix_{ij} + \min_{i,j} \followerutilitymatrix_{ij}.
\end{align*}

Let \(J\) be a matrix of ones. We note that
\begin{align*}
\softmax_{\rationalityconstant}(\followerutilitymatrix^{\top} \emprical{\leaderdistribution}_{k}) &= \softmax_{\rationalityconstant} \left(\left(\followerutilitymatrix - \nicefrac{\onesmatrix q^{\follower}}{2} \right) ^{\top} \emprical{\leaderdistribution}_{\genericinteraction}  \right)\\
    \softmax_{\rationalityconstant}(\followerutilitymatrix^{\top} \leaderdistribution) &= \softmax_{\rationalityconstant}\left(\left(\followerutilitymatrix - \nicefrac{\onesmatrix q^{\follower}}{2}\right)^{\top} \leaderdistribution \right)
\end{align*}
since subtracting the same constant from all elements does not change the result of the softmax function. Let \(z_{i}\) be the \(i^{\text{th}}\) column of \(\followerutilitymatrix - \nicefrac{\onesmatrix q^{\follower}}{2}\). We have
\begin{subequations}
    \begin{align}
    &\| \softmax_{\rationalityconstant}(\followerutilitymatrix^{\top} \leaderdistribution) -\softmax_{\rationalityconstant}(\followerutilitymatrix^{\top} \emprical{\leaderdistribution}_{k}) \| \nonumber 
    \\
    &= \left\| \softmax_{\rationalityconstant} \left(\left(\followerutilitymatrix - \frac{\onesmatrix q^{\follower}}{2} \right)^{\top} \leaderdistribution   \right) -\softmax_{\rationalityconstant}\left(\left(\followerutilitymatrix - \frac{\onesmatrix q^{\follower}}{2}\right)^{\top} \emprical{\leaderdistribution}_{\genericinteraction}  \right)\right\| \nonumber
    \\
    &\leq \rationalityconstant \left\| \left(\followerutilitymatrix - \frac{\onesmatrix q^{\follower}}{2} \right)^{\top} \leaderdistribution   -     \left(\followerutilitymatrix - \frac{\onesmatrix q^{\follower}}{2}\right)^{\top}  \emprical{\leaderdistribution}_{\genericinteraction}  \right\| \label{eqn:softmaxlipchitz} 
    \\
    &= \rationalityconstant \sqrt{\sum_{i= 1}^{\followernumofactions} \langle \leaderdistribution - \emprical{\leaderdistribution}_{\genericinteraction} , z_{i}\rangle^2 }
    \\
    & \leq \rationalityconstant \sqrt{\sum_{i= 1}^{\followernumofactions} \| \leaderdistribution - \emprical{\leaderdistribution}_{\genericinteraction} \|^2 \| z_{i} \|^2} \label{eqn:cauchyschwartz}
     \\ 
    &\leq  \| \leaderdistribution - \emprical{\leaderdistribution}_{\genericinteraction} \| \,  \,  \frac{ \sqrt{\leadernumofactions \followernumofactions }}{2} \label{eqn:upperboundingzij}
\end{align}
\end{subequations}
 where \eqref{eqn:softmaxlipchitz} is due to the \(\rationalityconstant\)-Lipschitzness of \(\softmax_{x}\), \eqref{eqn:cauchyschwartz} is due to Cauchy-Schwartz ineq., \eqref{eqn:upperboundingzij} is due to \(\max_{i,j} |z_{i,j}| = \nicefrac{1}{2}\).
\end{proof} 

\begin{proof} [Proof of Lemma \ref{lemma:returncloseifxclose}]
Let \(J\) be a matrix of ones, and \(z_{i}\) be the \(i^{\text{th}}\) column of \(\leaderutilitymatrix - \nicefrac{\onesmatrix q^{\follower}}{2}\). We have
\begin{subequations}
        \begin{align}
    &| \leaderdistribution^{\top} \leaderutilitymatrix \followerdistribution_{\genericinteraction} -  \leaderdistribution^{\top} \leaderutilitymatrix \followerdistribution | \nonumber
    \\
    &= \left\lvert \leaderdistribution^{\top} \leaderutilitymatrix \followerdistribution_{\genericinteraction} -  \leaderdistribution^{\top} \leaderutilitymatrix \followerdistribution -  \leaderdistribution^{\top}\frac{q^\leader}{2} ( \onesmatrix\followerdistribution_{\genericinteraction} -  \onesmatrix\followerdistribution)  \right\rvert \label{eqn:subtractJykaddJy} 
    \\
    &\leq \left\| \leaderdistribution^{\top} \left(\leaderutilitymatrix - \frac{q^\leader}{2}\onesmatrix \right) \right\|  \| \followerdistribution_{\genericinteraction} - \followerdistribution \| \label{eqn:cauchschwartz2}
    \\
     &\leq \max_{\leaderdistribution' \in \simplex^{\leadernumofactions}} \left( \sum_{i}^{\followernumofactions} (\leaderdistribution'^{\top} z_i)^{2} \right)^{1/2} \| \followerdistribution_{ \genericinteraction} - \followerdistribution\| \nonumber
    \\  
    &\leq  \left( \sum_{i}^{\followernumofactions} \max_{\leaderdistribution' \in \simplex^{\leadernumofactions}} (\leaderdistribution'^{\top} z_i)^{2} \right)^{1/2} \| \followerdistribution_{ \genericinteraction} - \followerdistribution\| \nonumber
    \\
    & \leq \left( \frac{\followernumofactions ^{2}}{4} \right) ^{1/2} \| \followerdistribution_{\genericinteraction} - \followerdistribution \| = \frac{\sqrt{\followernumofactions}}{2} \| \followerdistribution_{ \genericinteraction} - \followerdistribution\|\label{eqn:upperboundingz}
\end{align}
\end{subequations}
where \(\eqref{eqn:subtractJykaddJy}\) is due to \(\onesmatrix\followerdistribution = \onesmatrix\followerdistribution_{k}\), \eqref{eqn:cauchschwartz2} is due Cauchy-Schwartz ineq., and \eqref{eqn:upperboundingz} is due to \(\max_{j} |z_{ij}| \leq \nicefrac{1}{2}\).
\end{proof}

\begin{proof} [Proof sketch for Lemma \ref{lemma:simulation}]
    The proof follows from considering the dynamic game as a discounted MDP (since the leader's strategy is fixed and we have Assumption \ref{assumption:contraction}), where the follower's strategies in different settings in the dynamic game are different policies for the MDP. 
\end{proof}

\begin{proof} [Proof sketch for Lemma \ref{lemma:enoughestimation}]
Lemma 3 \cite{karabag2023sample} uses Theorem 2.1 of \cite{weissman2003inequalities}, the upper bound 2 on \(\varphi(p_{s})\), and the union bound to derive a similar looser bound that does not depend on the concentration of \(\leaderdistribution(s)\). The proof of Lemma \ref{lemma:enoughestimation} follows from not invoking the upper bound on \(\varphi(p_{s})\) and following the same steps as in Lemma 3 \cite{karabag2023sample}.
\end{proof}

\begin{proof}[Proof sketch for Lemma \ref{lemma:enoughsamples}]
The proof directly follows from combining Lemmas 4 and 5 from \cite{karabag2023sample}.    
\end{proof}

\section*{Acknowledgement}
This work was supported in part by the National Science Foundation (NSF), under grant numbers 1836900, 2145134, 2211432, 2211548, 2336840, and 2423131.
The authors would like to thank Tichakorn Wongpiromsarn
for insightful discussions that motivated the considered problem. The authors would like to thank Neel Bhatt for his help in creating the simulation environment for the human subject study.

\section*{References}
\bibliographystyle{unsrt}

\bibliography{ref}

\begin{IEEEbiography}[{\includegraphics[width=1in,height=1.25in,clip,keepaspectratio]{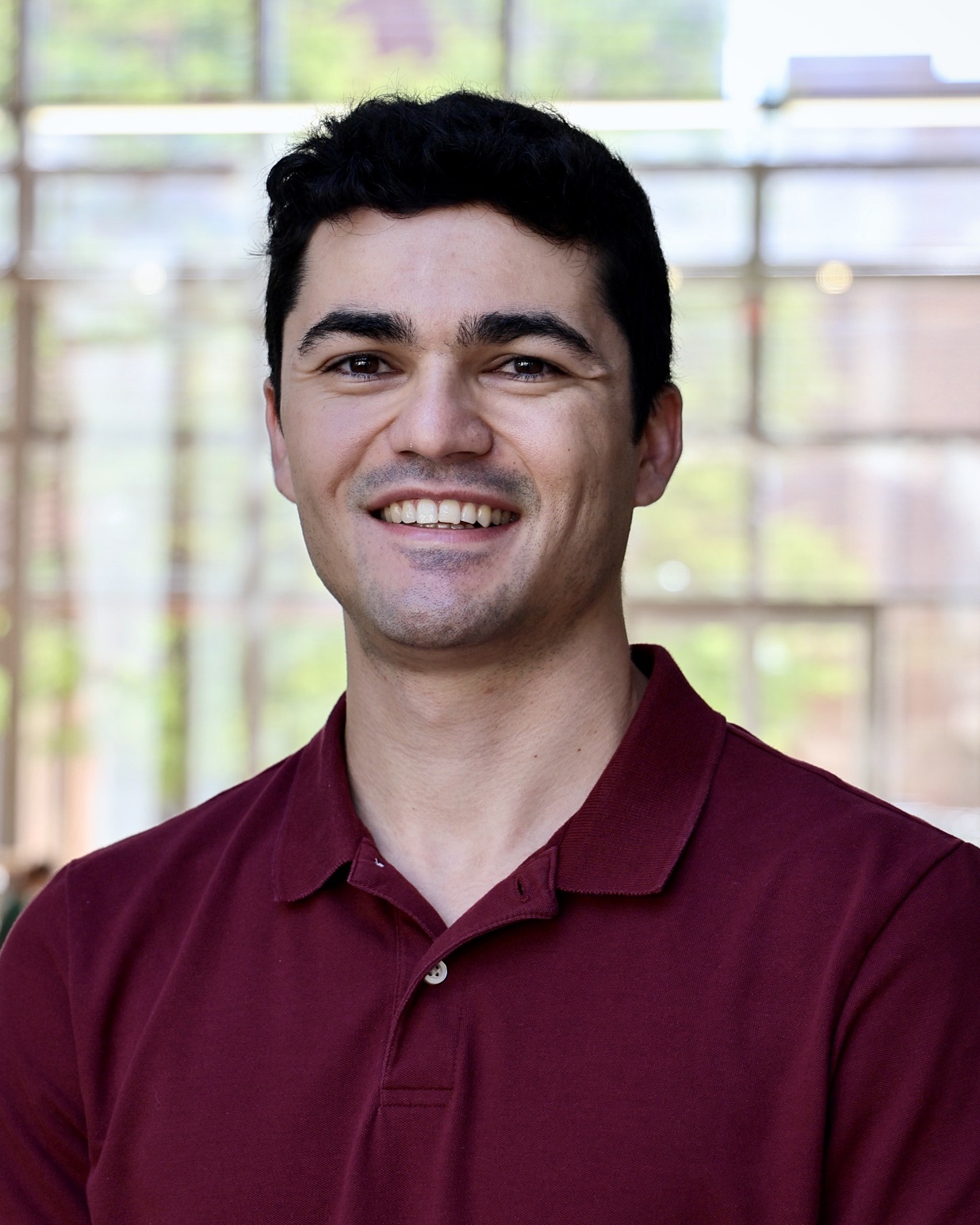}}]{Mustafa O. Karabag} is a postdoctoral fellow in the Oden Institute for Computational Engineering \& Sciences at the University of Texas at Austin. He received his Ph.D. degree from the University of Texas at Austin in 2023. His research focuses on developing theory and algorithms to control the information flow of autonomous systems to succeed in information-scarce or adversarial environments.
\end{IEEEbiography}

\begin{IEEEbiography}[{\includegraphics[width=1in,height=1.25in,clip,keepaspectratio]{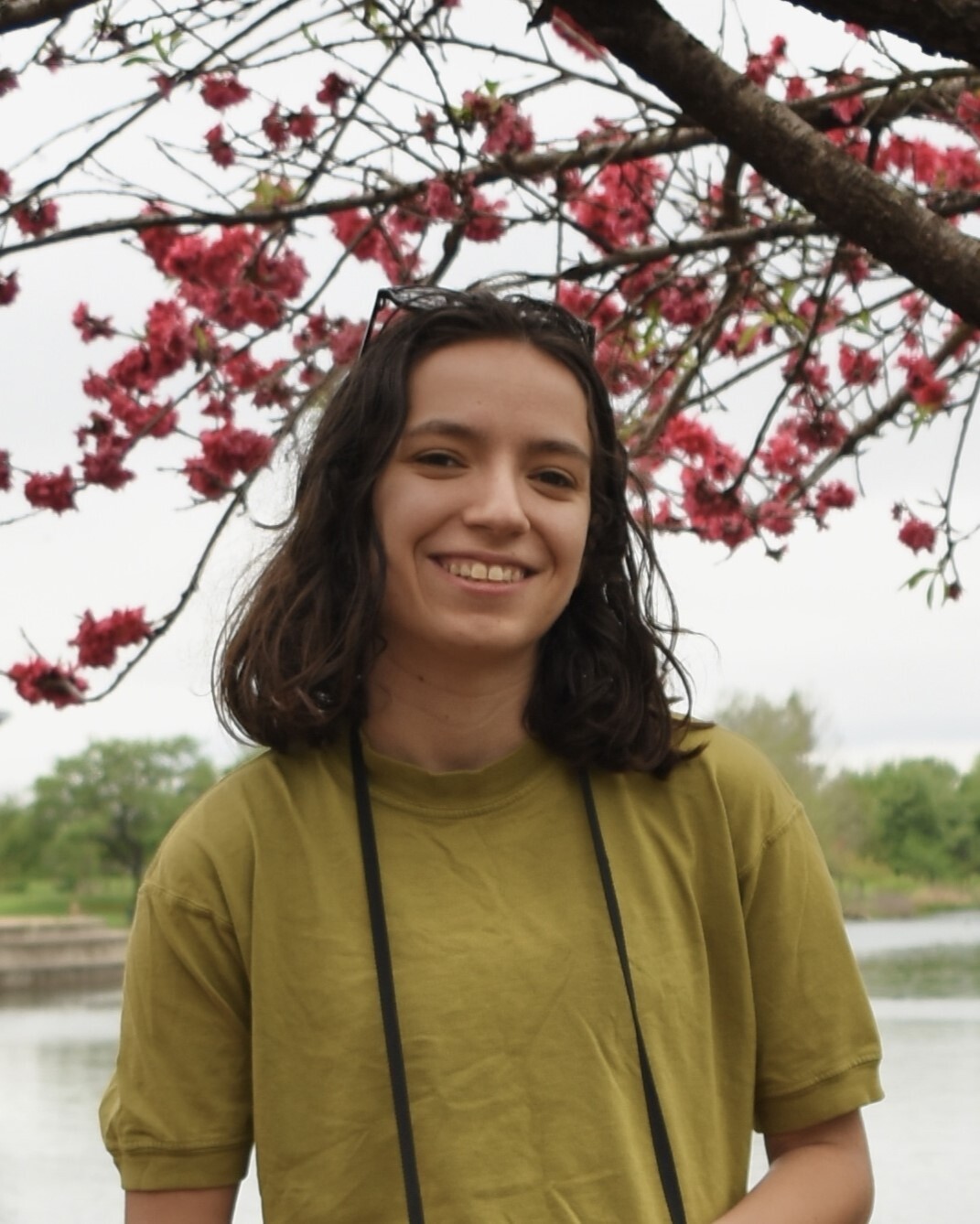}}]{Sophia Smith} received her B.S. in mathematics and her B.A. in Physics at the University of Chicago, Chicago IL, USA in 2021. She is currently working towards the Ph.D. degree in Computational Science Engineering in Math within the Oden Institute at the University of Texas at Austin, Austin, TX, USA. Her research interests include developing theory and algorithms for decomposing teams of autonomous agents.
\end{IEEEbiography}

\begin{IEEEbiography}[{\includegraphics[width=1in,height=1.25in,clip,keepaspectratio]{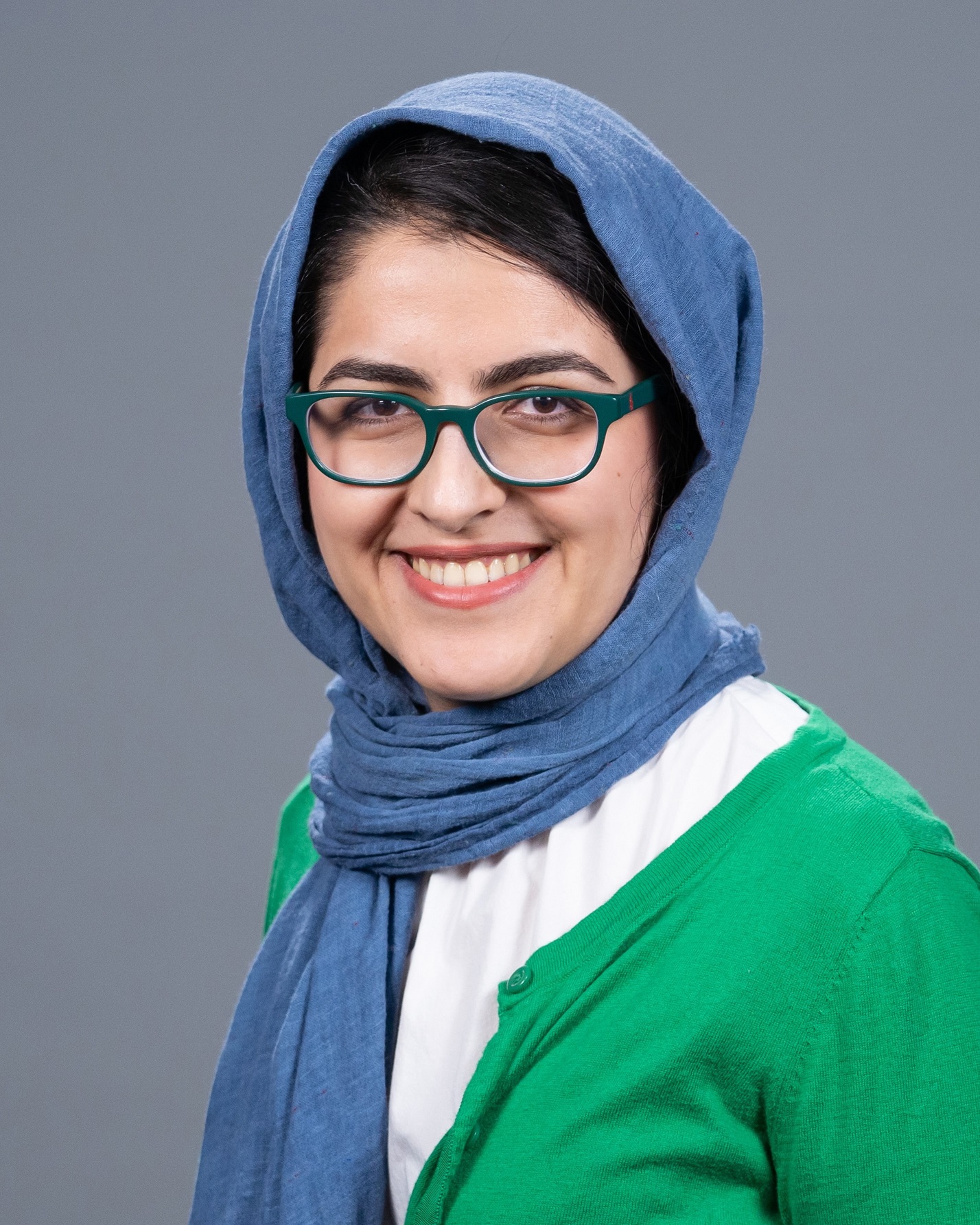}}]{Negar Mehr} is an assistant professor in the Department of Mechanical Engineering at the University of California, Berkeley. Before that, she was an assistant professor of Aerospace Engineering at the University of Illinois Urbana-Champaign. 
She was a postdoctoral scholar at Stanford Aeronautics and Astronautics department from 2019 to 2020. She received her Ph.D. in Mechanical Engineering from UC Berkeley in 2019 and her B.Sc. in Mechanical Engineering from Sharif University of Technology, Tehran, Iran, in 2013. The focus of her research is to develop control algorithms that allow autonomous systems to safely and intelligently interact with each other and with humans. She draws from the fields of control theory, robotics, game theory, and machine learning. She is a recipient of the NSF CAREER Award. She was awarded the IEEE Intelligent Transportation Systems best Ph.D. dissertation award in 2020.

\end{IEEEbiography}

\begin{IEEEbiography}[{\includegraphics[width=1in,height=1.25in,clip,keepaspectratio]{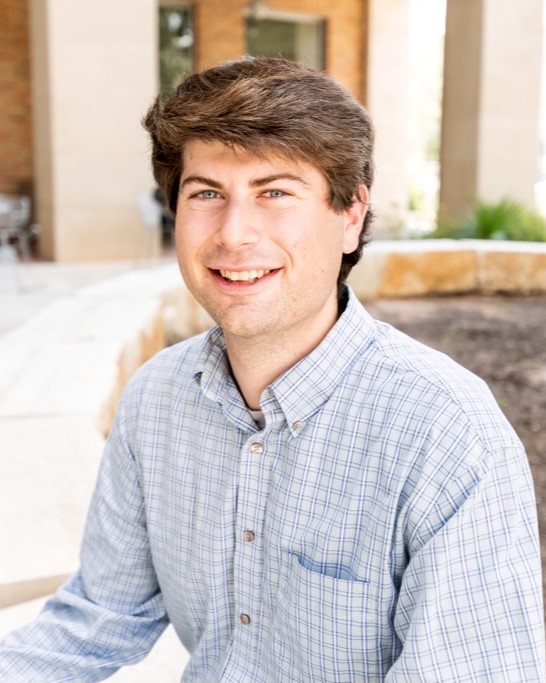}}]{David Fridovich-Keil} received the B.S.E. degree in electrical engineering from Princeton University, and the Ph.D. Degree from the University of California, Berkeley. He is an Assistant Professor in the Department of Aerospace Engineering and Engineering Mechanics at the University of Texas at Austin. Fridovich-Keil is the recipient of an NSF Graduate Research Fellowship and an NSF CAREER Award.
\end{IEEEbiography}

\begin{IEEEbiography}[{\includegraphics[width=1in,height=1.25in,clip,keepaspectratio]{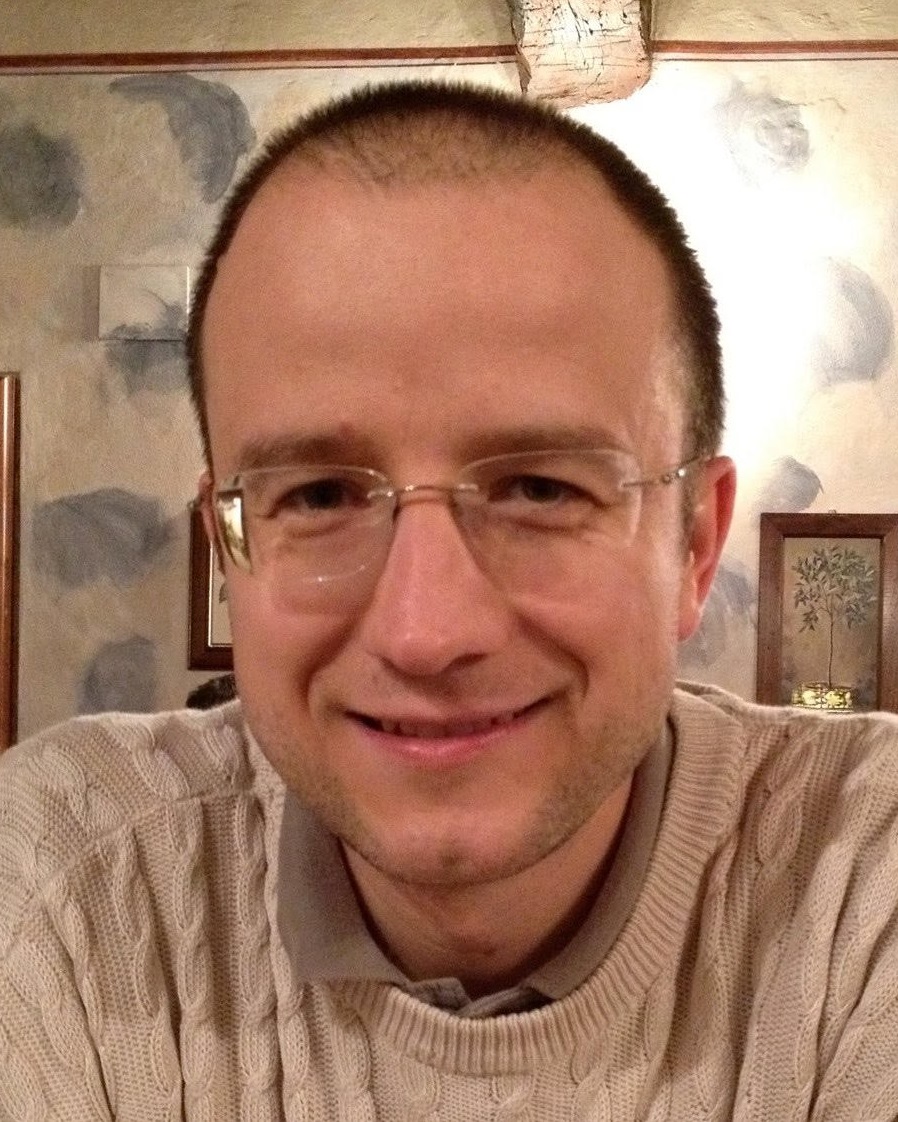}}]{Ufuk Topcu} received the Ph.D. degree from the University of California at Berkeley, Berkeley, CA, USA, in 2008. He joined the Department of Aerospace Engineering, University of Texas at Austin, Austin, TX, USA, in Fall 2015. He held research positions with the University of Pennsylvania, Philadelphia, PA, USA, and California Institute of Technology, Pasadena, CA, USA. His research focuses on the theoretical, algorithmic and computational aspects of design and verification of autonomous systems through novel connections between formal methods, learning theory, and controls.
\end{IEEEbiography}

\end{document}